\documentclass[11pt]{article}
\usepackage{authblk}
\usepackage{amssymb}
\usepackage{amsthm}
\usepackage{mathtools}
\usepackage{amsfonts}
\usepackage{amsmath}
\usepackage{graphicx} 
\usepackage{amsmath,amsthm,amssymb}
\usepackage[margin=1in]{geometry}
\usepackage[colorlinks=true]{hyperref}
\usepackage{thmtools, thm-restate}
\usepackage{algorithm,enumitem}
\usepackage{algorithm}
\usepackage{algpseudocode}

\declaretheorem{theorem}
\usepackage{float}

\newtheorem{lemma}{Lemma} 
 
\newtheorem{corollary}{Corollary}
\newtheorem{definition}{Definition}

\newtheorem{proposition}{Proposition}

\newtheorem*{definition*}{Definition}

\newcommand{\occ}{\operatorname{occ}}
\newcommand{\occin}{\operatorname{occ}^{\mathrm{in}}}
\newcommand{\occal}{\operatorname{occ}^{\mathrm{al}}}

\newcommand{\Unif}{\mathrm{Unif}}
\newcommand{\Prob}{\mathbb{P}}
\newcommand{\Exp}{\mathbb{E}}

\newcommand{\Geom}{\mathrm{Geom}}

\newcommand{\N}{\mathbb{N}}

\newcommand{\abs}[1]{\left|#1\right|}

\newcommand{\poly}{\mathrm{poly}}
\newcommand{\A}{\Sigma}
\newcommand{\floor}[1]{\left\lfloor #1\right\rfloor}
\newcommand{\alocc}{\operatorname{occ}^{\mathrm{al}}}
\newcommand{\E}{\mathbb{E}}

\usepackage{tikz}
\usetikzlibrary{automata, positioning}

\usepackage{graphicx}

\title{Efficient Constructions of Finite-State Independent Normal Pairs}

\author{Subin Pulari}
\affil[1]{
	National Research University Higher School of Economics Moscow
}

\date {\today}

\begin{document}
	
	\maketitle
\begin{abstract}
	Finite-state independence is a robust notion of algorithmic independence for infinite words.
	It was introduced for general infinite words by Becher, Carton, and Heiber via deterministic asynchronous two-tape finite automata.
	Álvarez, Becher, and Carton then studied the normal case and characterized finite-state independence in terms of deterministic finite-state \emph{shufflers}.
	A shuffler is a finite automaton that reads from two input tapes $x,y\in\Sigma^\infty$ and, at each step, chooses one tape to read next, outputs the symbol read,
	and updates its state based only on that output symbol.
	In terms of this characterization, two normal sources are finite-state independent if \emph{every} deterministic finite-state way of shuffling (interleaving) them still produces a normal sequence.
	Álvarez, Becher, and Carton also exhibited a doubly-exponential procedure to construct a finite-state independent normal pair and posed the following questions:
	(1) can one compute finite-state independent normal pairs efficiently, improving their doubly-exponential procedure; and
	(2) given a normal word $x$, can one effectively construct a normal word $y$ that is finite-state independent from $x$?
	
	We answer both questions by explicit deterministic constructions.
	First, we give a deterministic polynomial-time algorithm that, on input $N$, outputs the first $N$
	symbols of two normal words $x$ and $y$ such that for every shuffler $S$, the shuffled output
	$S(x,y)$ is normal; hence $(x,y)$ is finite-state independent.
	The construction is driven by a computable potential that aggregates conditional failure
	probabilities of a sparse family of normality tests for the first $N$ shufflers, together with
	uniform tail bounds that remain valid under conditioning on short fixed prefixes.
	
	Second, we solve the one-sided companion problem effectively.
	Given any computable normal word $x\in\Sigma^\infty$, we give an explicit deterministic construction of a computable normal word
	$y\in\Sigma^\infty$ such that for every shuffler $S$, the shuffled output $S(x,y)$ is normal.
	In particular, $x$ and $y$ are finite-state independent by the shuffler characterization theorem.
\end{abstract}

\section{Introduction}

Finite-state independence is a notion of algorithmic independence for infinite words in which the
observer is restricted to finite memory.  It was introduced by Becher, Carton, and Heiber~\cite{Becher2018}
via \emph{conditional finite-state compression}: one measures how well a word $x\in\Sigma^\infty$ can
be compressed by a one-to-one finite-state transducer when a second word $y\in\Sigma^\infty$ is
available as an oracle input.  Technically, this oracle-assisted compression is modeled
by deterministic asynchronous transducers with two input tapes and one output tape, formalized as
\emph{$2$-deterministic $3$-automata}~\cite{Becher2018}.  Two words are finite-state independent if neither
helps to compress the other in this model.  Under the uniform product measure on
$\Sigma^\infty\times\Sigma^\infty$, almost every pair is finite-state independent~\cite{Becher2018}.

Normality is the property that an infinite digit sequence has the expected uniform frequencies for all fixed finite blocks.  A word $z\in\Sigma^\infty$ is \emph{normal} (in the sense of
Borel~\cite{Borel1909}) if, for each block length $m$, every $m$-symbol pattern appears with the
limiting frequency predicted by the uniform distribution.  Borel proved that almost every infinite
word is normal~\cite{Borel1909}.  It is conjectured that the digit expansions of familiar constants
such as $\sqrt{2}$, $\pi$, and $e$ are normal, but this is unknown.
There are nevertheless explicit constructions of normal sequences; the classical example is
Champernowne's number~\cite{Champernowne1933}, whose base-$10$ digit sequence is obtained by
concatenating the successive integers,
\[
C_{10}\ :=\ 0.12345678910111213141516\cdots .
\]
(See Kuipers--Niederreiter~\cite{KuipersNiederreiterUniform} for background on normal numbers and equivalent formulations.)

Álvarez, Becher, and Carton~\cite{ABC} investigated finite-state independence in the normal setting
and gave several equivalent characterizations using deterministic asynchronous two-tape automata.
One characterization is particularly convenient for constructions: when $x$ and $y$ are normal over
the same alphabet, they are finite-state independent if and only if \emph{every} deterministic
finite-state \emph{shuffler} preserves normality, i.e., $S(x,y)$ is normal for every shuffler $S$.
A shuffler is a deterministic finite automaton that reads from two input tapes and outputs an
interleaving of the two inputs, choosing at each step which tape to read next and updating its
state based only on the symbol just output~\cite{ABC}.

Based on this shuffler characterization, Álvarez, Becher, and Carton gave an explicit algorithm that
constructs a finite-state independent normal pair, but its running time is doubly exponential in the
output length~\cite{ABC}.  They point out that the same type of bottleneck already appears in
Turing's construction of absolutely normal numbers~\cite{Turing1992}: one enforces an expanding family
of frequency constraints by repeatedly intersecting ``good'' sets at rapidly growing cutoff lengths,
and the cost is dominated by checking these large constraint families~\cite{ABC}.  Motivated by this,
they asked two concrete questions~\cite{ABC}:
\begin{enumerate}
	\item Can one compute finite-state independent normal pairs efficiently, improving the doubly-exponential
	procedure?
	\item Given a normal word $x$, can one effectively construct a normal word $y$ that is finite-state
	independent from $x$ (in particular, for explicit examples such as Champernowne's word)?
\end{enumerate}
These questions should be contrasted with the single-source setting: there are highly efficient
algorithms for producing (absolutely) normal numbers/words~\cite{Lutz2021}, but finite-state
independence demands uniform normality preservation under \emph{all} finite-state shuffles.

\paragraph{Our contributions.}
We answer both questions by explicit deterministic constructions.

\smallskip
\noindent\textbf{(1) Efficient construction of an independent normal pair.}
We give a deterministic polynomial-time algorithm that, on input $N$, outputs the first $N$ symbols
of two normal words $x$ and $y$ such that $S(x,y)$ is normal for every shuffler $S$; hence $(x,y)$ is
finite-state independent by the characterization theorem~\cite{ABC}.
At a high level, our construction replaces nested-set enforcement by a \emph{local} greedy extension
strategy controlled by a computable potential: at each step we extend one symbol of one tape so as to
keep an upper bound on the \emph{total conditional failure mass} of a sparse family of shuffler normality
tests.  The main technical inputs are (a) uniform tail bounds for aligned block-frequency deviations that
remain valid under conditioning on short fixed prefixes, and (b) a dynamic program that computes the
relevant conditional probabilities in polynomial time.

\smallskip
\noindent\textbf{(2) Effective one-sided construction: a computable companion for a given computable normal input.}
Given any computable normal word $x\in\Sigma^\infty$, we construct a computable normal word
$y\in\Sigma^\infty$ such that $S(x,y)$ is normal for every shuffler $S$.
In particular, this answers the question raised in~\cite{ABC} about constructing a finite-state independent companion for Champernowne's normal word: since Champernowne's word is computable and normal~\cite{Champernowne1933}, our procedure produces a computable normal $y$ that is finite-state independent from it.
The proof has two stages.  First we prove an almost-sure statement: for any fixed shuffler $S$, if $Y$
is i.i.d.\ uniform then $S(x,Y)$ is normal with probability $1$.
Second, we effectivize this by defining, for each finite family of shufflers and tests, a clopen set of
high measure of ``good'' $y$'s, and then computing a single computable $y$ in the intersection of all
these sets via a standard effective compactness/extraction argument.

\paragraph{Organization.}
Section~\ref{sec:preliminaries} introduces the main definitions and notation used throughout, including finite-state independence, shufflers, and normality.
Section~\ref{sec:ptime-construction} gives the polynomial-time construction of a finite-state independent normal pair.
Section~\ref{sec:computable-one-sided} gives the effective one-sided construction for a fixed computable normal input.
	
\section{Preliminaries}
\label{sec:preliminaries}

Fix a finite alphabet $\Sigma$ of size $k\ge 2$.
We write $\Sigma^n$ for the set of words of length exactly $n$, $\Sigma^{\le n}:=\bigcup_{j=0}^n \Sigma^j$,
$\Sigma^*:=\bigcup_{n\ge 0}\Sigma^n$ for the set of finite words, and $\Sigma^\infty$ for the set of infinite words.
The empty word is denoted by $\lambda$, and $|u|$ denotes the length of a finite word $u$.
For $x\in\Sigma^\infty$ and $n\in\N$, we write $x[1..n]\in\Sigma^n$ for the length-$n$ prefix of $x$.

We equip $\Sigma^\infty$ with the usual product (Cantor) topology. For $u\in\Sigma^*$, the cylinder determined by $u$ is
\[
[u]\ :=\ \{x\in\Sigma^\infty:\ x[1..|u|]=u\}.
\]
Cylinder sets are clopen. Moreover, since $\Sigma^\infty$ is compact and cylinders form a basis of the topology,
every clopen set is a finite union of cylinders. Let $\nu$ be the uniform distribution on $\Sigma$, i.e.\ $\nu(a)=1/k$ for all $a\in\Sigma$.
We equip $\Sigma^\infty$ with the product measure $\nu^{\otimes \N}$.
On $\Sigma^\infty\times\Sigma^\infty$ we use the product measure $\mu:=\nu^{\otimes \N}\otimes \nu^{\otimes \N}$.
For $u,v\in\Sigma^*$ we write $[u]\times[v]$ for the corresponding cylinder rectangle, and we write
$\mu(\,\cdot\mid [u]\times[v])$ for conditional probability with respect to this rectangle
(which is well-defined since $\mu([u]\times[v])=k^{-(|u|+|v|)}>0$).

We use both overlapping and aligned block-occurrence counts.
For a finite word $u\in\Sigma^*$ and a block $w\in\Sigma^m$, define the (overlapping) occurrence count
\[
\occin(u,w)\ :=\ \#\{\,1\le t\le |u|-m+1:\ u[t..t+m-1]=w\,\}.
\]
Throughout the paper we write $\occ(u,w)$ for $\occin(u,w)$.
We also use aligned (non-overlapping) counts at a fixed block length: for $u\in\Sigma^*$ and $w\in\Sigma^m$, define
\[
\occal(u,w)\ :=\ \#\Bigl\{\,0\le j<\Bigl\lfloor\frac{|u|}{m}\Bigr\rfloor:\ u[jm+1\,..\,jm+m]=w\,\Bigr\}.
\]
More generally, for an alignment length $r\ge 1$, a block $w\in\Sigma^r$, and $u\in\Sigma^*$, writing
$M=\bigl\lfloor |u|/r\bigr\rfloor$, define
\[
\alocc_{w,r}(u)\ :=\ \#\{\,0\le j < M:\ u[jr+1\,..\,jr+r]=w\,\}.
\]

Now we define normality in terms of overlapping block-occurrence counts.

\begin{definition}[Normality {\cite{Borel1909}}]
	A word $z\in\Sigma^\infty$ is \emph{(Borel) normal} if for every $m\ge 1$ and every $w\in\Sigma^m$,
	\[
	\lim_{n\to\infty}\frac{\occ(z[1..n],w)}{n}=k^{-m}.
	\]
\end{definition}
Normality can equivalently be defined using aligned block counts $\occal(\cdot,\cdot)$ (with normalization by
$\lfloor n/m\rfloor$); see \cite{KuipersNiederreiterUniform}. We will also use the standard computability notion for infinite words.

\begin{definition}[Computable infinite word]
	A word $z\in\Sigma^\infty$ is \emph{computable} if there exists an algorithm that, on input $n\in\N$ (in binary),
	outputs the prefix $z[1..n]$.
\end{definition}

Finite-state independence for arbitrary infinite words was introduced by Becher, Carton, and Heiber~\cite{Becher2018}
via conditional finite-state compression, using deterministic asynchronous transducer models (formalized as
$2$-deterministic $3$-automata).  In the normal setting, Álvarez, Becher, and Carton~\cite{ABC} gave a characterization
in terms of deterministic shufflers, which is the formulation used throughout this paper.

\begin{definition}[Deterministic shuffler {\cite[Def.~12]{ABC}}]
	\label{def:shuffler}
	A (deterministic) \emph{shuffler} is a tuple $S=(Q,q_0,\delta,\tau)$ where $Q$ is a finite set of states,
	$q_0\in Q$ is a start state, $\delta:Q\times\Sigma\to Q$ is a transition function, and $\tau:Q\to\{1,2\}$ is a tape-choice
	function.  Given $(x,y)\in\Sigma^\infty\times\Sigma^\infty$, the output $z=S(x,y)\in\Sigma^\infty$ is produced by the
	following infinite process: maintain head positions $a=b=0$ and a current state $q$; at each step let $\ell=\tau(q)$, read
	the next unused symbol from tape $\ell$ (from $x$ if $\ell=1$, from $y$ if $\ell=2$), output that symbol $\alpha$, and
	update $q\leftarrow \delta(q,\alpha)$.
\end{definition}
In \cite{ABC}, shufflers are presented as $2$-deterministic $3$-automata whose transitions are of one of two copy-types
(copy from tape~1 or copy from tape~2).  Determinism implies that all transitions leaving a fixed state have the same
copy-type, hence induce a well-defined tape-choice function $\tau$, yielding Definition~\ref{def:shuffler}.

For normal words over a common alphabet, we will use the shuffler-based formulation as our working definition of
finite-state independence.

\begin{definition}[Finite-state independence for normal words]
	Let $x,y\in\Sigma^\infty$ be normal. We say that $x$ and $y$ are \emph{finite-state independent} if for every shuffler $S$
	over $\Sigma$, the shuffled output $S(x,y)$ is normal.
\end{definition}

The next theorem, proved by Álvarez, Becher, and Carton~\cite{ABC}, shows that this formulation is equivalent to the
original definition of finite-state independence introduced by Becher, Carton, and Heiber~\cite{Becher2018}.

\begin{theorem}[Shuffler characterization {\cite[Thm.~1]{ABC}}]
	\label{thm:abc}
	Let $x,y\in\Sigma^\infty$ be normal. Then $x$ and $y$ are finite-state independent in the sense of
	Becher, Carton, and Heiber~\cite{Becher2018} if and only if for every shuffler $S$ over $\Sigma$,
	the output $S(x,y)$ is normal.
\end{theorem}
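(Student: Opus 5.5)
We would prove the equivalence by passing through the intermediate notion of \emph{conditional finite-state compressibility}. In~\cite{Becher2018}, $x$ and $y$ are finite-state independent when, for every one-to-one $2$-deterministic $3$-automaton $T$ reading $x$ on one tape with $y$ as oracle, the conditional compression ratio of $x$ given $y$ equals the unconditional one (which is $1$ since $x$ is normal), and symmetrically with the roles of $x$ and $y$ exchanged. The plan has two directions. For ``independent $\Rightarrow$ every $S(x,y)$ normal'' we argue by contraposition. For the converse we show that any successful conditional compressor can be turned into a shuffler whose output is non-normal.

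For the forward direction, suppose some shuffler $S$ has $z=S(x,y)$ not normal. By the Schnorr--Stimm / Dai--Lathrop--Lutz--Mayordomo characterization, normality is equivalent to finite-state incompressibility. Hence there is a one-to-one finite-state compressor $C$ with $\liminf |C(z[1..n])|/n<1$. Compose $C$ with $S$ to obtain a deterministic two-tape automaton that simulates $S$ and feeds each emitted symbol to $C$. We then split the output of $C$ into the portions attributable to the $x$-symbols and to the $y$-symbols. Since $y$ is normal, the $y$-portion cannot be compressed below its length. Therefore, along the subsequence of $n$ where $C$ gains, the compression must come from the $x$-part given $y$. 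Concretely, we build a $3$-automaton that reads $y$ as oracle, reconstructs the $y$-symbols freely, and emits only the encoding of the $x$-symbols. Injectivity follows because $S$ is deterministic and $y$ is known. This shows that $y$ helps compress $x$, or, symmetrically, that $x$ helps compress $y$, contradicting independence.

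For the converse, assume every $S(x,y)$ is normal, and let $T$ be a conditional compressor of $x$ given $y$. The key step is to show that the joint reading pattern of $T$ can be imitated by a shuffler. The state of $T$, together with its choice of which tape to advance, depends on both symbols read, so the interleaving is determined by a finite-state process driven by the merged stream. That stream is exactly what a shuffler's output records. We then use a counting argument in the style of the normal-word incompressibility proof. Consider the frequencies of blocks in the merged normal word $S(x,y)$, which are uniform for every shuffler $S$, including shufflers with enlarged state sets that track $T$'s states. From these frequencies, each visit of $T$ to a state contributes on average at least as many output bits as the number of $x$-symbols consumed, up to $o(n)$ terms. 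This yields conditional ratio $1$.

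The main obstacle is the converse step. $T$ may read its tapes in a pattern that depends on symbols not yet output in a shuffler-compatible order, and it may stall on one tape. We expect to handle this by normalizing $T$: we make it read one symbol per step, using finite buffering allowed by its finite memory, and we treat separately the case where one tape is read only finitely often (there normality of $x$ alone suffices). Since this theorem is exactly~\cite[Thm.~1]{ABC}, in the paper we would simply cite it rather than reprove it.
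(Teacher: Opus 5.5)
The paper gives no proof of this statement. Theorem~\ref{thm:abc} is quoted from \cite[Thm.~1]{ABC} and used only to translate the shuffler conclusions of Theorems~\ref{thm:main} and~\ref{thm:comp-independent} into the Becher--Carton--Heiber notion. Your closing decision to cite it is therefore exactly the paper's treatment. The outline you give before that is not a proof, though, and its forward direction has a step that fails as stated.

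\textbf{Forward direction.} You build a conditional compressor for $x$ from $C\circ S$ by emitting only the output $C$ produces at $x$-steps. This need not be one-to-one, and ``injectivity follows because $S$ is deterministic and $y$ is known'' is not a valid reason. The state of $C$ at a $y$-step depends on earlier $x$-symbols, so $C$ may emit information about $x$ while reading a $y$-symbol, and you throw those bits away. Concretely, let $C$ output at each step the previous input symbol (an injective transducer), and let $S$ alternate the tapes, so $z=x_1y_1x_2y_2\cdots$. Then each $x$-symbol is emitted at the following $y$-step and discarded. The retained output is $y_1y_2\cdots$, which carries no information about $x$. The same happens if you compose any compressing $C$ with a one-step output delay. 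Keeping all of $C$'s output restores injectivity, but its length is about $(1-\delta)(i_n+j_n)$, which need not be below $i_n$. Your remark that ``since $y$ is normal, the $y$-portion cannot be compressed'' also confuses unconditional with conditional compression. The output at $y$-steps comes from a machine whose state depends on $x$, so normality of $y$ does not bound it. Pigeonhole would show that one of the two portions is short infinitely often, but that is useless without the missing injectivity. Some genuinely different mechanism is needed to turn a non-normal shuffle into a conditional compressor.

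\textbf{Converse direction.} The reading order is not the obstacle you take it to be. The tapes read by a deterministic two-tape automaton are determined by its current state; this is the same observation the paper makes after Definition~\ref{def:shuffler}. So a shuffler with $T$'s state set reproduces $T$'s interleaving exactly. The decisive claim, that $T$'s output is on average at least the number of $x$-symbols consumed, is only asserted. It needs two ingredients you do not supply:
\begin{enumerate}
\item Merged $\ell$-blocks must be uniformly distributed conditionally on $T$'s state at the block start.
\item You need a per-state injectivity, derived from the one-to-one condition: from a fixed state, the merged block is determined by its $y$-part, the output, and the end state.
\end{enumerate}
With both in hand, an entropy count gives the required lower bound up to $O(\log \ell)$ per block.
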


	\section{Polynomial-time construction of a finite-state independent normal pair}
	\label{sec:ptime-construction}
	
	In \cite{ABC}, \'{A}lvarez et al.\ asked whether one can compute, in polynomial time, a finite-state independent pair of normal sequences $x,y\in\Sigma^\infty$ (Question~3 from Section~6 of \cite{ABC}). In the following theorem,
	we give a positive answer.
	
	\begin{theorem}
		\label{thm:main}
		There is a deterministic algorithm that, on input $N$, outputs prefixes $(x{\upharpoonright}N,\;y{\upharpoonright}N)$
		in time $N^{O(1)}$ such that the infinite limit sequences $x,y\in\Sigma^\infty$ satisfy:
		\begin{enumerate}
			\item $x$ and $y$ are normal;
			\item for every shuffler $S$, the shuffled output $S(x,y)$ is normal.
		\end{enumerate}
		In particular, $(x,y)$ is a finite-state independent pair by the characterization theorem of \cite{ABC}.
	\end{theorem}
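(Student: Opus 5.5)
We will build $x$ and $y$ by a derandomized greedy extension: the method of conditional expectations applied to a computable potential. Enumerate all shufflers as $S_1,S_2,\dots$, ordered by number of states and then lexicographically. Under $\mu$, for a fixed shuffler $S$ the output $S(X,Y)$ is again i.i.d.\ uniform. The reason is that each output symbol is a fresh symbol from one of the tapes, and which tape is used is determined by the earlier outputs. So $S(X,Y)$ is normal almost surely, and each aligned block-frequency deviation event has exponentially small probability. The plan is to fix a sparse family of tests and arrange that $(x,y)$ fails only finitely many of them for each $S_i$.

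\textbf{The tests.} For a shuffler $S_i$, an alignment length $r$, a block $w\in\Sigma^r$, and a checkpoint length $n$, let $E_{i,r,w,n}$ be the event
\[
\Bigl|\alocc_{w,r}\bigl(S_i(X,Y)[1..n]\bigr)-\lfloor n/r\rfloor k^{-r}\Bigr|>\varepsilon_n \lfloor n/r\rfloor .
\]
Here $\varepsilon_n=n^{-1/4}$, say. The checkpoints are taken on a polynomially sparse grid such as $n=j^2$. We only include $i,r\le \log\log n$, so that the number of tests active at length $n$ is $n^{o(1)}$. Hoeffding's inequality, applied to the $\lfloor n/r\rfloor$ i.i.d.\ aligned blocks, bounds each test by $2\exp(-c\,n\varepsilon_n^2/r)$, which is summable.

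The key auxiliary lemma is that the same bound holds conditionally on $[u]\times[v]$ whenever $|u|,|v|\le \sqrt n$. To see this, note that at most $|u|+|v|$ output symbols are fixed by the prefixes; every other output symbol is still a fresh uniform symbol. The fixed symbols affect at most $O(\sqrt n)$ aligned blocks, and this shift is absorbed into $\varepsilon_n\lfloor n/r\rfloor$.

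\textbf{The potential.} Define $\Phi(u,v)=\sum \mu(E_{i,r,w,n}\mid [u]\times[v])$, where the sum runs over active tests with checkpoint $n$ larger than a starting threshold $n_0(i)$. Choose the thresholds so that $\Phi(\lambda,\lambda)<1$. Conditional probability is a martingale, so averaging over the next symbol of either tape gives $\Exp_a \Phi(ua,v)=\Phi(u,v)$.

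The algorithm alternately extends $x$ and $y$. At each step it picks the symbol $a$ minimizing $\Phi$, so $\Phi$ never increases and stays below $1$. Once both $|u|$ and $|v|$ exceed a checkpoint's horizon, the event $E_{i,r,w,n}$ is decided, and its conditional probability is either $0$ or $1$. It must therefore be $0$, so $(x,y)$ passes every included test.

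\textbf{From tests to normality.} Passing aligned tests for all $r$, along checkpoints whose gaps are $o(n)$, implies normality of $S_i(x,y)$ via the aligned characterization cited from \cite{KuipersNiederreiterUniform}. Normality of $x$ and $y$ themselves follows by taking the two trivial shufflers that always read tape 1 or always read tape 2.

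\textbf{The main obstacle.} The hard part is making $\Phi$ polynomial-time computable. There are two problems. First, the sum is infinite. We truncate it to tests with $n\le N^{2}$ and bound the tail uniformly; this is exactly where the conditional-on-short-prefix bounds are needed. The tail is added as a fixed computable upper bound, so it does not affect the greedy choices beyond a constant. Second, each conditional probability must be computed exactly. For this we use a dynamic program whose state is: the shuffler state, the number of symbols consumed from each tape (capped at $|u|$ and $|v|$, with fresh uniform symbols beyond), the position within the current aligned block, the current partial block, and the count so far. Since $r\le\log\log n$, there are only $k^r=\mathrm{polylog}$ partial blocks. Together with $|Q_i|\le\log\log n$, the dynamic program has $\poly(n)$ states. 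Summing over $\poly(N)$ tests and $O(N)$ steps gives time $N^{O(1)}$.

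I expect the most delicate point to be keeping the test family sparse enough for this polynomial bound while still retaining summable tails and convergence to normality.
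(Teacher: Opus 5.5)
Your architecture matches the paper's: aligned block tests at sparse checkpoints, a conditional tail bound for prefixes of length at most $\sqrt n$, greedy minimization of summed conditional failure probabilities, integrality once a checkpoint is decided, a dynamic program for the conditional probabilities, and interpolation via the aligned characterization of normality. Your different parameters ($i,r\le\log\log n$, $\varepsilon_n=n^{-1/4}$, checkpoints $j^2$) are also fine.

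The gap is in passing from the infinite potential to what the algorithm actually computes. Your invariant ``$\Phi$ never increases'' is proved for the full infinite sum, but the algorithm minimizes a truncation. Adding a tail bound that does not depend on the candidate symbol leaves the argmin unchanged. That is exactly why the choices are now governed by the truncated sum, so monotonicity of the full $\Phi$ is no longer the relevant fact. More seriously, suppose the cutoff is $n\le N^{2}$ with $N$ the \emph{input}, as your running-time count suggests. Then the runs on inputs $N$ and $N+1$ minimize different functions from the very first step. The output on input $N$ need not be a prefix of the output on input $N+1$, and no limit pair $(x,y)$ is defined at all.

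The missing idea, which is what the paper's proof turns on, is to make the potential at prefix length $L$ depend only on $L$. A checkpoint $n$ enters the sum only once $L\ge\sqrt n$; this is the active set $J(L)=\{j: A_j\le L\le N_j\}$ with $A_j=\sqrt{N_j}$. At each step you then minimize a finite, computable sum whose already-active part averages exactly over one-step extensions. The only increase comes from newly activated checkpoints. Their conditional failure mass is at most $\beta_n$, the number of tests at $n$ times your conditional Hoeffding bound, and this holds for \emph{every} prefix of length at most $\sqrt n$.

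The correct invariant is therefore $\Phi_L(u,v)\le\sum_{n\le L^{2}}\beta_n<1$ for the current prefixes $(u,v)$, as in Lemma~\ref{lem:phi-bound}, rather than monotonicity. Equivalently, ``truncated sum plus remaining tail bound'' is a valid pessimistic estimator only when the cutoff grows with the current length and not with the input. With that change, your integrality step goes through as in the paper. Your runtime count also survives, since only checkpoints $n\le N^{2}$ are ever active.
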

	
	\paragraph{Proof organization.}
	\label{par:proof-organization}
	We build $x$ and $y$ incrementally, one symbol at a time, by maintaining a potential that upper-bounds the
	(total) conditional probability of failing any currently relevant normality constraint.
	In Subsection~\ref{subsec:constraints} we define, for each output length $n$, a finite family $F_n$ of aligned
	block-frequency constraints for the first $t_n$ shufflers.
	In Subsection~\ref{subsec:conditional} we prove a uniform tail bound showing that each individual constraint has
	tiny failure probability even after conditioning on short fixed prefixes.
	In Subsection~\ref{subsec:potential} we aggregate these conditional failure probabilities into a bad-mass functional
	$B_n(u,v)$ and then into a rolling potential $\Phi_L(u,v)$ over a sparse checkpoint sequence $N_j$.
	Subsection~\ref{subsec:algorithm-correctness} gives the greedy algorithm (Algorithm~\ref{alg:compute}) and proves
	correctness via an integrality argument at checkpoints (Proposition~\ref{prop:checkpoints}) and an interpolation
	step that yields normality for every shuffler output.
	Finally, Subsection~\ref{subsec:ptime-impl} shows that the algorithm runs in time $N^{O(1)}$ by a dynamic program
	for computing each conditional probability term.

	\subsection{Aligned block counts and the $F_n$ constraints}
	\label{subsec:constraints}
	
	We now set up the finite families of frequency constraints that our construction will enforce at a sparse
	sequence of output lengths.  These constraints are stated in terms of aligned block counts and are imposed
	simultaneously for an initial segment of the shuffler enumeration. Fix a concrete enumeration $(S_i)_{i\ge 1}$ of all shufflers, chosen so that the map $i\mapsto S_i$ is efficiently
	computable and the size of $S_i$ is polynomially bounded in $i$.
	To this end, fix a canonical explicit binary encoding $\mathrm{enc}(S)$ of a shuffler $S=(Q,q_0,\delta,\tau)$ in which the
	state set is identified with $\{1,\dots,\abs{Q}\}$ and the transition table $\delta:Q\times\A\to Q$ together with the
	tape-choice function $\tau:Q\to\{1,2\}$ are written out in full.  For fixed alphabet size $k=\abs{\A}$, such an encoding
	has length $\abs{\mathrm{enc}(S)}=\Theta(\abs{Q}\,k\log\abs{Q})$.
	
	Let $(b_i)_{i\ge 1}$ be the length-lexicographic enumeration of all binary strings.
	Define $S_i$ to be the shuffler obtained by decoding $b_i$ if $b_i$ is a valid encoding of some shuffler, and otherwise
	let $S_i$ be a fixed trivial shuffler (e.g.\ the one that always reads tape~1).
	With this convention, given $i$ one can compute $b_i$ and check validity/perform decoding in time
	$\poly(\abs{b_i})=\poly(\log i)$, so the enumeration is effective with explicit resource bounds.
	Moreover, since the encoding is explicit, any valid $b_i$ can describe only shufflers whose number of states satisfies
	$\abs{Q_i}\le \poly(\abs{b_i})=\poly(\log i)\le \poly(i)$, a bound that will be used in the runtime analysis.

	Next we specify, for each output length $n$, the range of shufflers, block lengths, and error tolerance that will
	constitute the finite constraint family $F_n$.
	
	For $n\ge 3$ define:
	\[
	t_n := n,\qquad
	\ell_n := \left\lfloor \tfrac{1}{3}\log_k n\right\rfloor,\qquad
	\varepsilon_n := 2\sqrt{\frac{\log n\cdot \log_k n}{n}}.
	\]
	For each $r\le \ell_n$, set $m_{n,r} := \floor{n/r}$.
	
	Using these parameters, we now formalize what it means for one specific shuffler output to have the correct
	aligned frequency for one specific word.
	
	\begin{definition}
		\label{def:good-event}
		For a shuffler $S$, $n\ge 3$, $1\le r\le \ell_n$, and $w\in\A^r$, define
		\[
		E_{S}(n,r,w)\ :=\ \Bigl\{(x,y):\ \bigl|\alocc_{w,r}(S(x,y){\upharpoonright}n) - m_{n,r}/k^r \bigr|
		\ <\ \varepsilon_n\, m_{n,r}\Bigr\}.
		\]
	\end{definition}
	
	Finally, $F_n$ is the intersection of all such good events over the first $t_n$ shufflers, all block lengths up to
	$\ell_n$, and all words of the corresponding length.
	
	\begin{definition}[$F_n$]
		\label{def:Fn}
		Define
		\[
		F_n\ :=\ \bigcap_{i=1}^{t_n}\ \bigcap_{r=1}^{\ell_n}\ \bigcap_{w\in\A^r} E_{S_i}(n,r,w).
		\]
	\end{definition}
	
	\subsection{Conditional tail bounds under short prefix conditioning}
	\label{subsec:conditional}
	
	We will repeatedly use a standard multiplicative Chernoff bound; we record a convenient form.
	
	\begin{lemma}
		\label{lem:chernoff}
		Let $X\sim \mathrm{Bin}(M,p)$ and assume $0<\delta\le 1$.
		Then
		\[
		\Pr\bigl[\ |X-Mp|\ge \delta Mp\ \bigr]\ \le\ 2\exp\!\left(-\frac{\delta^2}{3}\,Mp\right).
		\]
	\end{lemma}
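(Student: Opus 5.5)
The plan is to derive this two-sided multiplicative bound from the exponential-moment (Cramér–Chernoff) method, handling the upper and lower tails separately and combining them with a union bound, which accounts for the factor $2$. Write $X=\sum_{j=1}^M X_j$ with $X_j$ i.i.d.\ Bernoulli$(p)$ and set $\mu:=Mp$. If $p=0$ or $M=0$, then $X=Mp$ almost surely and the event $|X-Mp|\ge\delta Mp$ still has probability at most $1\le 2\exp(0)$, so the claim is trivial; assume henceforth $\mu>0$.

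\emph{Upper tail.} For $s>0$, Markov's inequality applied to $e^{sX}$ gives
\[
\Pr[X\ge(1+\delta)\mu]\le e^{-s(1+\delta)\mu}\,\bigl(1+p(e^s-1)\bigr)^M\le \exp\bigl(\mu(e^s-1)-s(1+\delta)\mu\bigr),
\]
using $1+t\le e^t$. Choosing $s=\ln(1+\delta)$ yields the standard bound $\bigl(e^{\delta}/(1+\delta)^{1+\delta}\bigr)^{\mu}$. It then suffices to show $f(\delta):=\delta-(1+\delta)\ln(1+\delta)+\delta^2/3\le 0$ on $[0,1]$. One has $f(0)=0$, $f'(\delta)=-\ln(1+\delta)+2\delta/3$, and $f''(\delta)=-1/(1+\delta)+2/3$. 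Thus $f''<0$ on $[0,1/2)$ and $f''>0$ on $(1/2,1]$, so $f'$ first decreases from $f'(0)=0$ and then increases to $f'(1)=2/3-\ln 2<0$. Hence $f'\le 0$ on $[0,1]$, and therefore $f\le 0$.

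\emph{Lower tail.} Apply the same method with $s<0$, choosing $s=\ln(1-\delta)$ for $\delta<1$. This gives $\Pr[X\le(1-\delta)\mu]\le\bigl(e^{-\delta}/(1-\delta)^{1-\delta}\bigr)^\mu$. Using $(1-\delta)\ln(1-\delta)\ge -\delta+\delta^2/2$, the bound becomes $\exp(-\delta^2\mu/2)\le\exp(-\delta^2\mu/3)$. The endpoint $\delta=1$ is handled directly: $\Pr[X\le 0]=(1-p)^M\le e^{-\mu}\le e^{-\mu/3}$. The required inequality for $g(\delta)=(1-\delta)\ln(1-\delta)+\delta-\delta^2/2$ follows from $g(0)=0$ and $g'(\delta)=-\ln(1-\delta)-\delta\ge 0$.

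Summing the two tails gives the stated bound. The only mildly delicate point is the calculus inequality $f\le 0$ on the full range $\delta\in[0,1]$, which is exactly where the constant $3$ comes from. Since this is the textbook multiplicative Chernoff bound, one may alternatively just cite a standard reference.
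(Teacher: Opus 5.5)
Your proof is correct. The paper gives no proof of this lemma and simply records it as the standard multiplicative Chernoff bound. Your Cram\'er--Chernoff derivation is the textbook argument being cited: the exponential moment with $s=\ln(1\pm\delta)$, the calculus checks $f\le 0$ and $g\ge 0$ on $[0,1]$, separate handling of $\delta=1$ and $\mu=0$, and a union bound for the factor $2$. The paper only ever applies the lemma with $\delta=\varepsilon_n k^r/2\le 1/2$, so the check of $f$ near $\delta=1$ is not even needed downstream.
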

	
	The next lemma is the key uniform estimate: it bounds the conditional failure probability of a \emph{single}
	constraint even after fixing short prefixes on both input tapes.
	
	\begin{lemma}
		\label{lem:cond-one}
		Fix $n\ge 3$, $1\le r\le \ell_n$, and $w\in\A^r$.
		Let $S$ be any shuffler. Let $L\ge 0$ and $u,v\in\A^L$.
		Write $m=m_{n,r}=\floor{n/r}$.
		Assume $4L \le \varepsilon_n\, m$ and $\varepsilon_n \le k^{-r}$.
		Then
		\[
		\mu\bigl(E_S(n,r,w)^c\mid [u]\times[v]\bigr)\ \le\
		2\exp\!\left(-\frac{\varepsilon_n^2}{12}\,m\,k^r\right).
		\]
	\end{lemma}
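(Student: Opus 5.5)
The plan is to compare the shuffled output $z=S(x,y)$, under $\mu(\cdot\mid[u]\times[v])$, with a genuinely i.i.d.\ uniform word $z'$ that differs from $z$ in at most $2L$ positions. Then we apply Lemma~\ref{lem:chernoff} to the aligned counts of $z'$, and absorb the discrepancy using the hypothesis $4L\le\varepsilon_n m$.

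\emph{Step 1 (the coupling).} Run $S$ on $(x,y)$ with $x\in[u]$ and $y\in[v]$. Call output position $t$ \emph{fixed} if the symbol output at step $t$ is read from tape~1 at a head position $\le L$ or from tape~2 at a head position $\le L$; otherwise call it \emph{fresh}. At most $L$ reads from each tape fall inside the prefixes, so at most $2L$ positions are fixed.

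Take an auxiliary i.i.d.\ uniform sequence $(\xi_t)_{t\ge1}$ independent of $(x,y)$. Define $z'_t:=z_t$ if $t$ is fresh and $z'_t:=\xi_t$ if $t$ is fixed.

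\emph{Step 2 ($z'$ is i.i.d.\ uniform).} This is the step I expect to need the most care. Let $\mathcal G_{t-1}$ be the $\sigma$-algebra generated by $z_1,\dots,z_{t-1}$ and $\xi_1,\dots,\xi_{t-1}$. The state of $S$ before step $t$, the tape it reads, and the head positions are all determined by $z_1,\dots,z_{t-1}$. Hence whether $t$ is fixed or fresh is $\mathcal G_{t-1}$-measurable.

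If $t$ is fresh, $z'_t$ is a tape symbol beyond the conditioned prefix that has not yet been read. Under the product measure (conditioned only on the prefixes), it is uniform and independent of $\mathcal G_{t-1}$, because $\mathcal G_{t-1}$ depends only on previously read cells and on the $\xi$'s. If $t$ is fixed, $z'_t=\xi_t$ is uniform and independent of $\mathcal G_{t-1}$.

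In either case $\Pr[z'_t=a\mid\mathcal G_{t-1}]=1/k$ for every $a\in\A$. Since $z'_1,\dots,z'_{t-1}$ are $\mathcal G_{t-1}$-measurable, induction on $t$ shows that $z'$ is i.i.d.\ uniform. The subtle point to write carefully is that the shuffler's control depends on $z$ and not on $z'$; this is exactly why we condition on the larger $\sigma$-algebra $\mathcal G_{t-1}$.

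\emph{Step 3 (Chernoff and absorbing the error).} The aligned blocks of $z'{\upharpoonright}n$ at length $r$ are $m$ disjoint i.i.d.\ uniform blocks. Therefore $X:=\alocc_{w,r}(z'{\upharpoonright}n)\sim\mathrm{Bin}(m,k^{-r})$.

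Changing one position of a word changes at most one aligned block, so it changes $\alocc_{w,r}$ by at most $1$. Since $z$ and $z'$ differ in at most $2L$ positions, $|\alocc_{w,r}(z{\upharpoonright}n)-X|\le 2L\le\varepsilon_n m/2$. Consequently, on $E_S(n,r,w)^c$ we have $|X-m k^{-r}|\ge\varepsilon_n m/2$.

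Apply Lemma~\ref{lem:chernoff} with $p=k^{-r}$ and $\delta=\varepsilon_n k^r/2$. The hypothesis $\varepsilon_n\le k^{-r}$ gives $\delta\le 1/2\le1$. The bound becomes
\[
2\exp\!\left(-\tfrac{\varepsilon_n^2k^{2r}}{12}\,m\,k^{-r}\right)=2\exp\!\left(-\tfrac{\varepsilon_n^2}{12}\,m\,k^{r}\right),
\]
which is the claimed estimate.
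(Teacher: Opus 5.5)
Your proposal is correct and follows essentially the same route as the paper. It replaces the at most $2L$ prefix-determined outputs by fresh uniform letters, uses the filtration argument to show the modified word is i.i.d.\ uniform, bounds the change in $\alocc_{w,r}$ by $2L\le\varepsilon_n m/2$, and applies Lemma~\ref{lem:chernoff} with $\delta=\varepsilon_n k^r/2\le 1/2$. This matches the paper's Claims~1 and~2, including conditioning on the larger $\sigma$-algebra because the shuffler's control depends on $z$ rather than $z'$.
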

	
	\begin{proof}
		Work under the conditional distribution $\mu(\cdot\mid [u]\times[v])$ and write $Z:=S(X,Y){\upharpoonright}n$.
		Partition $Z$ into the $m=\lfloor n/r\rfloor$ aligned length-$r$ blocks used by $\alocc_{w,r}$.
		
		During the production of the first $n$ output symbols, every time the shuffler reads an input symbol from tape~1
		at a position $\le L$ it outputs the corresponding fixed letter of $u$, and similarly for tape~2 with $v$.
		Let $T\subseteq\{1,2,\dots,n\}$ be the (random) set of output times at which the read input position lies in a fixed
		prefix (i.e.\ at which the output letter is one of the $2L$ fixed prefix symbols).
		Since each of the first $L$ symbols on each tape can be read at most once, we have the deterministic bound
		\[
		\abs{T}\ \le\ 2L.
		\]
		
		Now construct an auxiliary length-$n$ word $U\in\A^n$ as follows.
		For each $t\in\{1,\dots,n\}$, set
		\[
		U[t]\ :=\
		\begin{cases}
			Z[t], & t\notin T,\\
			R_t, & t\in T,
		\end{cases}
		\]
		where $(R_t)_{t\in T}$ are fresh independent uniform letters from $\A$, independent of $(X,Y)$.
		
		\smallskip
		\noindent\textbf{Claim 1:} $U$ is a uniformly random word in $\A^n$ (equivalently, its letters are i.i.d.\ uniform).
		
		\smallskip
		\noindent
		Let $\mathcal{F}_{t-1}$ be the $\sigma$-field generated by all symbols of $X$ and $Y$ that have been read by $S$
		during the first $t-1$ output steps (together with all auxiliary letters $R_{t'}$ for $t'<t$).
		At time $t$, the shuffler's state, tape choice $\tau(q)$, and head positions are $\mathcal{F}_{t-1}$-measurable,
		since they are deterministic functions of the previously read input symbols.
		If $t\in T$, then $U[t]=R_t$ is uniform on $\A$ and independent of $\mathcal{F}_{t-1}$ by construction.
		If $t\notin T$, then the shuffler reads the next unused symbol from one tape at some position $j>L$.
		Under the product measure $\mu$, the tail symbols $\{X_{L+1},X_{L+2},\dots\}$ and $\{Y_{L+1},Y_{L+2},\dots\}$
		are independent i.i.d.\ uniform and are independent of the fixed prefixes $u,v$; moreover, the particular index $j$
		to be read at time $t$ is $\mathcal{F}_{t-1}$-measurable.
		Hence the next unread symbol that is read at time $t$ is uniform on $\A$ and independent of $\mathcal{F}_{t-1}$, and
		therefore $U[t]=Z[t]$ is uniform and independent of the past.
		In all cases, conditional on $\mathcal{F}_{t-1}$, the random variable $U[t]$ is uniform on $\A$ and independent of
		$\mathcal{F}_{t-1}$; thus $(U[1],\dots,U[n])$ are i.i.d.\ uniform and $U$ is uniform on $\A^n$.

		\smallskip
		\noindent\textbf{Claim 2:} Let $C:=\alocc_{w,r}(Z)$ and $C':=\alocc_{w,r}(U)$. Then $\abs{C-C'}\le 2L$.
		
		\smallskip
		\noindent
		Changing one symbol of a word can affect $\alocc_{w,r}$ by at most $1$, because each position lies in exactly one
		aligned length-$r$ block. The words $Z$ and $U$ differ only at positions in $T$, hence in at most $\abs{T}\le 2L$
		positions. Therefore $\abs{C-C'}\le 2L$.
		
		\smallskip
		Now, if $\abs{C-m/k^r}\ge \varepsilon_n m$, then by Claim~2,
		\[
		\abs{C'-m/k^r}\ \ge\ \varepsilon_n m - 2L\ \ge\ \frac{\varepsilon_n m}{2},
		\]
		using the hypothesis $4L\le \varepsilon_n m$.
		
		By Claim~1, the $m$ aligned blocks of $U$ are i.i.d.\ uniform over $\A^r$, so
		\[
		C' \sim \mathrm{Bin}\bigl(m,\;k^{-r}\bigr),
		\qquad \E[C']=m/k^r.
		\]
		Let $\mu_{C'}:=m/k^r$ and $t:=\varepsilon_n m/2$, so the relative deviation is
		\[
		\delta\ :=\ \frac{t}{\mu_{C'}}\ =\ \frac{\varepsilon_n m/2}{m/k^r}\ =\ \frac{\varepsilon_n k^r}{2}.
		\]
		By the hypothesis $\varepsilon_n\le k^{-r}$ we have $\delta\le 1/2\le 1$ and we may apply Lemma~\ref{lem:chernoff}:
		\[
		\Pr\bigl[\abs{C'-\mu_{C'}}\ge t\bigr]
		\ \le\ 2\exp\!\left(-\frac{\delta^2}{3}\mu_{C'}\right)
		\ =\ 2\exp\!\left(-\frac{1}{3}\cdot \frac{\varepsilon_n^2 k^{2r}}{4}\cdot \frac{m}{k^r}\right)
		\ =\ 2\exp\!\left(-\frac{\varepsilon_n^2}{12}\,m\,k^r\right).
		\]
		Since $\mu(E_S(n,r,w)^c\mid [u]\times[v])=\Pr[\abs{C-m/k^r}\ge \varepsilon_n m]$, the implication above yields the lemma.
	\end{proof}

	\subsection{Aggregating conditional failure probabilities via a rolling potential function}
	\label{subsec:potential}

	We now aggregate the conditional failure probabilities of all constraints at a fixed length $n$.
	
	\begin{definition}
		\label{def:Bn}
		For $n\ge 3$ and prefixes $u,v\in\A^L$, define
		\[
		B_n(u,v)\ :=\ \sum_{i=1}^{t_n}\ \sum_{r=1}^{\ell_n}\ \sum_{w\in\A^r}
		\mu\bigl(E_{S_i}(n,r,w)^c\mid [u]\times[v]\bigr).
		\]
	\end{definition}
	
	The next corollary shows that once the prefix length is at most $\sqrt{n}$, the total conditional bad mass at
	length $n$ is already extremely small.
	
	\begin{corollary}
		\label{cor:activation-small}
		There exists $n_0$ such that for all $n\ge n_0$, all $L\le \sqrt{n}$, and all $u,v\in\A^L$,
		\[
		B_n(u,v)\ \le\ \frac{1}{n^2}.
		\]
	\end{corollary}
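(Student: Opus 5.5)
The bound follows by applying Lemma~\ref{lem:cond-one} to each summand of $B_n(u,v)$ and then adding up. So the plan has three parts: check the lemma's two hypotheses uniformly over $1\le r\le \ell_n$ and $L\le\sqrt n$ (for $n$ large), bound each term by one common exponentially small quantity, and multiply by the number of terms.

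\textbf{Hypotheses.} First, $\varepsilon_n\le k^{-r}$. Since $r\le \ell_n\le \frac13\log_k n$, we have $k^{r}\le n^{1/3}$. Also $\varepsilon_n = 2\sqrt{\log n\log_k n/n} = O(\log n/\sqrt n)$. Hence $\varepsilon_n k^r = O(n^{-1/6}\log n)\le 1$ for all $n\ge n_0$.

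Second, $4L\le \varepsilon_n m$. Use $m=\lfloor n/r\rfloor\ge n/(2r)\ge n/(2\ell_n)$, which is of order $n/\log n$. Then
\[
\varepsilon_n m \gtrsim \frac{\sqrt{n\log n\log_k n}}{\log_k n} \approx \sqrt{n\log k}\,.
\]
This is the delicate point, because only constants are available. More precisely,
\[
\varepsilon_n m \ge \frac{2\sqrt{\log n\log_k n/n}\cdot n}{2\log_k n} = \sqrt{n\log k}
\]
after adjusting for the floor (using $r\le \frac13\log_k n$ one actually gains a factor $3$). Thus $\varepsilon_n m \ge 3\sqrt{n\ln k}\,(1-o(1))$. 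We need this to be at least $4\sqrt n \ge 4L$, and it holds when $3\sqrt{\ln k}>4$. This is the main obstacle: for small $k$ (e.g.\ $k=2$) the constants as literally stated may not close. I would first redo the floor estimate carefully, using $m\ge n/r-1$ and $r\le\frac13\log_k n$, so that $\varepsilon_n m\ge 6\sqrt{n\log n\log_k n}/\log_k n - \varepsilon_n$. If the logarithm base in $\varepsilon_n$ is read as natural, then $\log n\,/\log_k n=\ln k$ and the bound is $6\sqrt{n\ln k}$. This exceeds $4\sqrt n$ as long as $\ln k>4/9$, which holds for all $k\ge2$. So I expect the hypothesis to check out with the factor $3$ gained from $\ell_n$.

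\textbf{Per-term bound.} Lemma~\ref{lem:cond-one} gives each term at most $2\exp(-\varepsilon_n^2 m k^r/12)$. Using $k^r\ge k$ (as $r\ge1$) and $m\ge n/(2r)$, we get $\varepsilon_n^2 m k^r\ge 4\frac{\log n\log_k n}{n}\cdot\frac{n}{2r}\cdot k^r$. Since $k^r/r\ge k/1\cdot$(nondecreasing in $r$ for $k\ge2$) is at least $k/1$ at worst up to a constant, the exponent is $\gtrsim \log n\log_k n$. Each term is therefore $\le 2\exp(-c\log n\log_k n)=n^{-\omega(1)}$.

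\textbf{Counting.} The number of terms is $t_n\sum_{r\le\ell_n}k^r\le n\cdot 2k^{\ell_n}\le 2n^{4/3}$. The total is thus $2n^{4/3}\cdot 2n^{-c\log_k n}$, which is at most $1/n^2$ for $n\ge n_0$, since $c\log_k n\to\infty$. The bound is independent of $L,u,v$ once the hypotheses hold for every $L\le\sqrt n$, which gives the uniformity.
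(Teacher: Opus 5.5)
Your proposal is correct and follows the paper's proof essentially step for step. Like the paper, you verify the hypotheses of Lemma~\ref{lem:cond-one} using $k^r\le n^{1/3}$, and you use the sharper floor estimate $m_{n,r}\ge n/r-1$ together with $r\le\frac13\log_k n$ to get $\varepsilon_n m_{n,r}\ge 6\sqrt{n\log k}-o(1)\ge 4\sqrt n$. You then bound each term by $2\exp(-c\log n\log_k n)$ via $k^r/r\ge k$ and multiply by the polynomial number of terms. Your intermediate estimate via $m\ge n/(2r)$ only yields $3\sqrt{n\ln k}$ and would not close for small $k$, but you correctly discard it in favor of the same floor bound the paper uses.
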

	
	\begin{proof}
		Fix $n\ge n_0$ and $L\le \sqrt{n}$.
		For every $r\le \ell_n$ and all $n\ge 2r$ we have
		\[
		m_{n,r}=\lfloor n/r\rfloor \ \ge\ \frac{n}{r}-1.
		\]
		Therefore
		\[
		\varepsilon_n\, m_{n,r}
		\ \ge\
		2\sqrt{\frac{\log n\log_k n}{n}}\left(\frac{n}{r}-1\right)
		\ =\
		\frac{2\sqrt{n\log n\log_k n}}{r}\;-\;2\sqrt{\frac{\log n\log_k n}{n}}.
		\]
		Using $r\le \ell_n\le \frac13\log_k n$ gives
		\[
		\frac{2\sqrt{n\log n\log_k n}}{r}
		\ \ge\
		\frac{2\sqrt{n\log n\log_k n}}{(1/3)\log_k n}
		\ =\ 6\sqrt{\log k}\,\sqrt{n}.
		\]
		Hence, for all sufficiently large $n$ (increase $n_0$ if needed), the subtracted term
		$2\sqrt{\frac{\log n\log_k n}{n}}=o(\sqrt{n})$ is negligible and we obtain the uniform bound
		\[
		\varepsilon_n\, m_{n,r}\ \ge\ 4.9\sqrt{n}
		\]
		for all $r\le \ell_n$. In particular, for all $L\le \sqrt{n}$ we have
		\[
		4L\ \le\ 4\sqrt{n}\ \le\ \varepsilon_n\, m_{n,r}
		\]
		uniformly over all $r\le \ell_n$.
		Also, for $r\le \ell_n$ we have $k^r\le k^{\ell_n}\le n^{1/3}$, hence
		\[
		\varepsilon_n k^r \ \le\ 2\,n^{1/3}\sqrt{\frac{\log n\log_k n}{n}}
		\ =\ 2\,n^{-1/6}\sqrt{\log n\log_k n},
		\]
		which is $<1$ for all sufficiently large $n$; increase $n_0$ so that $\varepsilon_n \le k^{-r}$ holds for all
		$n\ge n_0$ and all $r\le \ell_n$.
		Thus Lemma~\ref{lem:cond-one} applies to every triple $(S_i,r,w)$.
		
		Therefore
		\[
		B_n(u,v)\ \le\ \sum_{i\le n}\ \sum_{r\le \ell_n}\ \sum_{w\in\A^r}
		2\exp\!\left(-\frac{\varepsilon_n^2}{12}\,m_{n,r}k^r\right).
		\]
		Using $m_{n,r}\ge n/(2r)$ and $\varepsilon_n^2 = 4\frac{\log n\log_k n}{n}$ gives
		\[
		\frac{\varepsilon_n^2}{12}\,m_{n,r}k^r
		\ \ge\ \frac{4}{12}\cdot \frac{\log n\log_k n}{n}\cdot \frac{n}{2r}\cdot k^r
		\ =\ \frac{1}{6}\cdot \frac{k^r}{r}\,\log n\,\log_k n.
		\]
		In particular, for every $r\ge 1$ we have $\frac{k^r}{r}\ge k$, so the exponent is at least
		\[
		\frac{k}{6}\,\log n\,\log_k n
		= \frac{k}{6\log k}\,(\log n)^2.
		\]
		Hence there is a constant $c>0$ with
		\[
		2\exp\!\left(-\frac{\varepsilon_n^2}{12}\,m_{n,r}k^r\right)
		\le 2e^{-c(\log n)^2}
		\le n^{-10}
		\]
		for all sufficiently large $n$ (increase $n_0$ so this holds).
		Now count terms:
		\[
		\sum_{i\le n}\sum_{r\le \ell_n}\sum_{w\in\A^r}1
		\le n\cdot \ell_n \cdot \sum_{r=1}^{\ell_n}k^r
		\le n\cdot \ell_n \cdot k^{\ell_n+1}
		\le n\cdot \ell_n \cdot k\cdot n^{1/3}
		= O(n^{4/3}\log n).
		\]
		Therefore
		\[
		B_n(u,v)\ \le\ O(n^{4/3}\log n)\cdot n^{-10}\ \le\ \frac{1}{n^2}
		\]
		for all large enough $n$.
	\end{proof}

	We will enforce all constraints at a sparse set of checkpoint lengths $N_j$. Our construction chooses the next symbols of $(x,y)$ greedily, at each prefix length $L$, so as to keep a suitable potential function small.
	To make this greedy choice work uniformly for every $L$, we use a rolling potential that only includes a checkpoint $N_j$
	once the current prefix length is at least its activation length $A_j=\sqrt{N_j}$.
	
	Define checkpoint lengths
	\[
	N_j := (j+m_0)^4,\qquad j\ge 1,
	\]
	where $m_0$ is a fixed constant chosen so that $\sum_{j\ge 1} 1/N_j^2 < 1/4$ and $N_1\ge n_0$
	(e.g.\ any $m_0$ large enough).
	
	Define activation lengths
	\[
	A_j := \sqrt{N_j} = (j+m_0)^2.
	\]
	
	\begin{definition}
		\label{def:potential}
		For each $L\ge 0$, define the active index set
		\[
		J(L):=\{\, j\ge 1:\ A_j\le L \le N_j\,\}.
		\]
		For prefixes $u,v\in\A^L$, define
		\[
		\Phi_L(u,v)\ :=\ \sum_{j\in J(L)} B_{N_j}(u,v).
		\]
	\end{definition}
	
	The next lemma gives the basic averaging identity for one-step extensions of the current prefixes; it will be used to justify the greedy choice of the next symbols.
	
	\begin{lemma}
		\label{lem:tower}
		Fix $n\ge 3$ and prefixes $u,v\in\A^L$. Then
		\[
		B_n(u,v)\ =\ \frac{1}{k^2}\sum_{a\in\A}\sum_{b\in\A} B_n(ua,vb).
		\]
	\end{lemma}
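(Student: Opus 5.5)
The identity follows from the law of total probability applied term by term, since $B_n(u,v)$ is a finite sum of conditional probabilities of events that do not depend on $(u,v)$. It therefore suffices to show, for each fixed event $E:=E_{S_i}(n,r,w)^c$ with $i\le t_n$, $r\le\ell_n$, $w\in\A^r$, that
\[
\mu\bigl(E\mid [u]\times[v]\bigr)\ =\ \frac{1}{k^2}\sum_{a\in\A}\sum_{b\in\A}\mu\bigl(E\mid [ua]\times[vb]\bigr).
\]
Once this holds for each event, we sum over the finitely many triples $(i,r,w)$, whose index set depends only on $n$ and not on the prefixes. Exchanging the finite sums then gives the lemma.

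For a single event, first observe that the rectangle $[u]\times[v]$ is the disjoint union of the $k^2$ rectangles $[ua]\times[vb]$ over $(a,b)\in\A^2$. This holds because $[u]=\bigsqcup_{a}[ua]$ and $[v]=\bigsqcup_b[vb]$, and products of disjoint unions are disjoint unions of products. Second, because $\mu$ is the product of uniform i.i.d.\ measures,
\[
\mu([ua]\times[vb])=k^{-(2L+2)}=k^{-2}\,\mu([u]\times[v]).
\]
Finite additivity then gives
\[
\mu\bigl(E\cap([u]\times[v])\bigr)=\sum_{a,b}\mu\bigl(E\cap([ua]\times[vb])\bigr).
\]
Dividing by $\mu([u]\times[v])$, and writing each summand as $k^{-2}\mu([u]\times[v])\cdot\mu(E\mid[ua]\times[vb])$, yields exactly the displayed identity.

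The only point requiring care is that $E$ is measurable, so that these conditional probabilities make sense. I would argue this by noting that $S(x,y){\upharpoonright}n$ depends only on finitely many symbols of $x$ and $y$: at most $n$ symbols from each tape. So $E$ is a finite union of rectangles $[u']\times[v']$ with $|u'|,|v'|\le n$, hence clopen and in particular measurable. Beyond this check I expect no real obstacle. The lemma is pure bookkeeping, and unlike Lemma~\ref{lem:cond-one} it needs no restriction relating $L$ and $n$.
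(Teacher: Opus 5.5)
Your proof is correct and takes the same approach as the paper. Like the paper, you apply the law of total probability term by term over the $k^2$ equal-measure refinements $[ua]\times[vb]$ of $[u]\times[v]$, then sum over the finitely many constraints $(i,r,w)$, whose index set depends only on $n$; your explicit measurability check and the observation that no relation between $L$ and $n$ is needed are details the paper leaves implicit.
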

	\begin{proof}
		Each term $\mu(E^c\mid [u]\times[v])$ is a conditional probability under the product measure.
		Conditioning on the next symbols $(X_{L+1},Y_{L+1})$ yields the law of total probability: the conditional
		probability under $[u]\times[v]$ is the average of the conditional probabilities under the refined cylinders
		$[ua]\times[vb]$. Summing over all constraints preserves equality.
	\end{proof}
	The next lemma formalizes the averaging step that guarantees the existence of a one-symbol extension with small potential.
	
	\begin{lemma}
		\label{lem:one-step}
		For every $L\ge 0$ and every prefixes $u,v\in\A^L$,
		\[
		\min_{a,b\in\A}\Phi_{L+1}(ua,vb)\ \le\ \sum_{j\in J(L+1)} B_{N_j}(u,v).
		\]
	\end{lemma}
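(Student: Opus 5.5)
The plan is a direct averaging argument based on Lemma~\ref{lem:tower}. The set $J(L+1)$ is finite, since $A_j=(j+m_0)^2\le L+1$ bounds $j$. By Definition~\ref{def:potential}, for every pair $a,b\in\A$ we have
\[
\Phi_{L+1}(ua,vb)=\sum_{j\in J(L+1)} B_{N_j}(ua,vb),
\]
where each $B_{N_j}(ua,vb)$ is a finite sum of conditional probabilities and hence a well-defined nonnegative real. The index set $J(L+1)$ depends only on the length $L+1$ and not on the particular symbols $a,b$. This is the key point: the same finite set of checkpoints appears in every one of the $k^2$ extensions, so we can average over $(a,b)$ term by term.

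Next, we bound the minimum by the average over the $k^2$ extensions and exchange the two finite sums:
\[
\min_{a,b\in\A}\Phi_{L+1}(ua,vb)\ \le\ \frac{1}{k^2}\sum_{a,b\in\A}\ \sum_{j\in J(L+1)} B_{N_j}(ua,vb)
\ =\ \sum_{j\in J(L+1)}\ \frac{1}{k^2}\sum_{a,b\in\A} B_{N_j}(ua,vb).
\]
Each $N_j\ge N_1\ge n_0\ge 3$, so Lemma~\ref{lem:tower} applies with $n=N_j$. It turns each inner average into $B_{N_j}(u,v)$, which yields exactly the claimed bound.

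I expect no real obstacle here. The only points to check are that the active set is evaluated at length $L+1$ on both sides, and that Lemma~\ref{lem:tower} needs no relation between $L$ and $N_j$. The lemma is a pure tower-property identity for conditional probabilities under $\mu$, valid for any prefix length. Note that the right-hand side is deliberately not $\Phi_L(u,v)$: checkpoints newly activated at length $L+1$, and checkpoints with $N_j=L$ that drop out, are handled separately in the later analysis, for instance via Corollary~\ref{cor:activation-small} for the newly activated ones.
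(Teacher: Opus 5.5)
Your proof is correct and is essentially the paper's argument. You apply Lemma~\ref{lem:tower} to each active checkpoint $j\in J(L+1)$, sum over this finite index set (which depends only on the length $L+1$), and bound the minimum over the $k^2$ extensions by their average.
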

	\begin{proof}
		By Lemma~\ref{lem:tower}, for each fixed $j\in J(L+1)$, $B_{N_j}(u,v)$ equals the average of $B_{N_j}(ua,vb)$ over
		$(a,b)\in\A\times\A$. Summing over $j\in J(L+1)$ gives
		\[
		\sum_{j\in J(L+1)} B_{N_j}(u,v)\ =\ \frac{1}{k^2}\sum_{a,b\in\A}\Phi_{L+1}(ua,vb),
		\]
		so the minimum is at most the average.
	\end{proof}
	
	The next lemma packages the invariant needed for correctness: the rolling potential stays strictly below $1$.
	
	\begin{lemma}
		\label{lem:phi-bound}
		For every $L\ge 0$ there exist prefixes $u,v\in\A^L$ such that
		\[
		\Phi_L(u,v)\ <\ 1.
		\]
	\end{lemma}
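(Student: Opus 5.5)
Induction on $L$ suffices: from the pair built at stage $L$ we extend by one symbol per tape using the averaging step, and we control the potential change when the active set $J(L)$ gains or loses indices. The claim we carry is slightly stronger than $\Phi_L<1$:
\[
\Phi_L(u,v)\ \le\ \sum_{j:\,A_j\le L} \frac{1}{N_j^2}\ <\ \frac14 ,
\]
for some $u,v\in\A^L$.

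For the base case, when $L<A_1$ the set $J(L)$ is empty, so $\Phi_L=0$ for any choice of $u,v$.

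For the inductive step, let $u,v\in\A^L$ satisfy the bound. By Lemma~\ref{lem:one-step} there exist $a,b\in\A$ with
\[
\Phi_{L+1}(ua,vb)\le \sum_{j\in J(L+1)} B_{N_j}(u,v).
\]
Split $J(L+1)$ into the old indices $J(L)\cap J(L+1)$ and the newly activated ones, namely those $j$ with $A_j=L+1$, or more generally $A_j\le L+1$ and $j\notin J(L)$.

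The indices dropped from $J(L)$, those with $N_j=L$, only lower the sum, since every $B$ is nonnegative. The old indices contribute at most $\Phi_L(u,v)$. For each new index $j$ we have $L\le L+1\le A_j=\sqrt{N_j}$ (for integer thresholds $A_j\le L+1$ and $j\notin J(L)$ force $A_j=L+1$), and $N_j\ge N_1\ge n_0$. Corollary~\ref{cor:activation-small}, applied with prefix length $L\le \sqrt{N_j}$, then gives $B_{N_j}(u,v)\le 1/N_j^2$.

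Adding these up,
\[
\Phi_{L+1}(ua,vb)\le \sum_{j:\,A_j\le L+1} \frac{1}{N_j^2}\le \sum_{j\ge1}\frac{1}{N_j^2}<\frac14<1 .
\]

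The main point to check is the bookkeeping at the activation boundary. The new terms must be evaluated at the old prefixes $(u,v)$ of length $L$, not at the extended ones, and the corollary must apply there. This holds because $L<A_j\le\sqrt{N_j}$. It also needs $N_j\ge n_0$, which is built into the choice of $m_0$. The same argument works if $A_j$ is not an integer, since every $j\in J(L+1)\setminus J(L)$ still has $L<A_j$.
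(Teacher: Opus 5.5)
Your proposal is correct and follows the paper's proof essentially step for step. It uses the same strengthened inductive bound $\Phi_L(u,v)\le\sum_{j:\,A_j\le L}N_j^{-2}$, the same averaging step via Lemma~\ref{lem:one-step}, and the same split of $J(L+1)$ into old and newly activated indices, with Corollary~\ref{cor:activation-small} applied at the unextended prefixes of length $L<A_j=\sqrt{N_j}$. The paper phrases the step through the algorithm's greedy minimizer rather than an existential choice, but the minimizer satisfies the same inequality, so your argument also covers the run used in Proposition~\ref{prop:checkpoints}.
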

	
	\begin{proof}
		We prove by induction on $L$ the stronger bound
		\[
		\Phi_L(u,v)\ \le\ \sum_{j:\ A_j\le L}\frac{1}{N_j^2}.
		\]
		Base $L=0$: $J(0)=\emptyset$, so $\Phi_0(\lambda,\lambda)=0$.
		
		Inductive step: let $(u,v)\in\A^L\times\A^L$ be the produced prefixes.
		For any $L$, the algorithm chooses $(a^\star,b^\star)$ minimizing $\Phi_{L+1}(ua,vb)$, hence by Lemma~\ref{lem:one-step}
		\[
		\Phi_{L+1}(ua^\star,vb^\star)\ \le\ \sum_{j\in J(L+1)} B_{N_j}(u,v).
		\]
		Split $J(L+1)$ into old-active indices and newly activated ones:
		\[
		J(L+1) = \bigl(J(L)\cap J(L+1)\bigr)\ \cup\ \{j:\ A_j=L+1\}.
		\]
		The first part is a subset of $J(L)$, so
		\[
		\sum_{j\in J(L)\cap J(L+1)} B_{N_j}(u,v)\ \le\ \Phi_L(u,v).
		\]
		For any newly activated $j$ with $A_j=L+1$, we have $L\le A_j\le \sqrt{N_j}$, so by Corollary~\ref{cor:activation-small}
		(applied with $n=N_j$ and prefix length $L\le \sqrt{N_j}$),
		\[
		B_{N_j}(u,v)\ \le\ \frac{1}{N_j^2}.
		\]
		Therefore
		\[
		\Phi_{L+1}(ua^\star,vb^\star)
		\le \Phi_L(u,v) + \sum_{j:\ A_j=L+1}\frac{1}{N_j^2}
		\le \sum_{j:\ A_j\le L+1}\frac{1}{N_j^2}.
		\]
		Finally, by choice of $m_0$, $\sum_{j\ge 1}1/N_j^2<1/4<1$, so $\Phi_L<1$ for all $L$ along the run.
	\end{proof}
	
	\subsection{The algorithm and correctness}
	\label{subsec:algorithm-correctness}
	
	We now give the greedy construction of the prefixes $(x{\upharpoonright}N,y{\upharpoonright}N)$.
	At each length $L$ we consider all one-step extensions of the current prefixes and choose the pair of next symbols
	that minimizes the next-step potential $\Phi_{L+1}$.  The preceding subsection shows that this choice keeps the rolling
	potential strictly below $1$ throughout the run; in particular, it will force every checkpoint family $F_{N_j}$ to hold.
	
	\begin{algorithm}[H]
		\caption{Compute $(x{\upharpoonright}N,y{\upharpoonright}N)$}
		\label{alg:compute}
		\begin{algorithmic}[1]
			\Require $N\in\N$
			\State $u\gets \lambda$, $v\gets \lambda$, $L\gets 0$
			\While{$L<N$}
			\State Compute $\Phi_{L+1}(ua,vb)$ for all $(a,b)\in\A\times\A$
			\State Choose $(a^\star,b^\star)$ minimizing $\Phi_{L+1}(ua,vb)$
			\State $u\gets ua^\star$, $v\gets vb^\star$, $L\gets L+1$
			\EndWhile
			\State \Return $(u,v)$
		\end{algorithmic}
	\end{algorithm}
	
	We next show that the invariant $\Phi_L<1$ forces every checkpoint constraint family $F_{N_j}$ to hold.
	
	To connect the potential bound to satisfaction of constraints at a fixed length $n$, we use the following integrality
	observation once the conditioning prefixes determine the first $n$ symbols on both tapes.
	
	\begin{lemma}
		\label{lem:integer}
		If $L\ge n$, then $B_n(u,v)$ is an integer (hence either $0$ or at least $1$).
	\end{lemma}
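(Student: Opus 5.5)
The plan is to show that once $L\ge n$, every summand $\mu\bigl(E_{S_i}(n,r,w)^c\mid [u]\times[v]\bigr)$ in Definition~\ref{def:Bn} is either $0$ or $1$. Then $B_n(u,v)$ is a finite sum of $0/1$ values, hence a nonnegative integer. The key observation is that the event $E_S(n,r,w)$ depends only on the output prefix $S(x,y){\upharpoonright}n$. That prefix is determined by the first $n$ symbols of $x$ and the first $n$ symbols of $y$.

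First I will verify this determinacy claim directly from Definition~\ref{def:shuffler}. To produce the first $n$ output symbols, the shuffler performs exactly $n$ read steps. Each step advances exactly one head by one position, so after $n$ steps the head positions satisfy $a+b=n$; in particular $a\le n$ and $b\le n$. A simple induction on the step count $t\le n$ then shows that the state, the tape choice, the head positions, and the output symbol at step $t$ are functions of $x[1..a_t]$ and $y[1..b_t]$ only. Hence $S(x,y){\upharpoonright}n$ is a function $G_S\bigl(x[1..n],y[1..n]\bigr)$.

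Next, suppose $u,v\in\A^L$ with $L\ge n$. Every pair $(x,y)\in[u]\times[v]$ has $x[1..n]=u[1..n]$ and $y[1..n]=v[1..n]$. So $S(x,y){\upharpoonright}n$ is the same fixed word $z=G_S(u[1..n],v[1..n])$ for all such pairs. Consequently $\alocc_{w,r}\bigl(S(x,y){\upharpoonright}n\bigr)=\alocc_{w,r}(z)$ is constant on the rectangle. The rectangle $[u]\times[v]$ is therefore either contained in $E_S(n,r,w)$ or disjoint from it. Since $\mu([u]\times[v])>0$, the conditional probability of the complement is exactly $0$ or exactly $1$.

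Summing over the finitely many indices $i\le t_n$, $r\le\ell_n$, $w\in\A^r$ gives that $B_n(u,v)$ is a nonnegative integer, and so it is either $0$ or at least $1$. I do not expect a real obstacle. The only point needing care is the head-position bound $a,b\le n$, which makes the required prefix length exactly $n$ rather than something larger; this follows from $a+b=n$ after $n$ steps.
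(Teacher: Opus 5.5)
Your proposal is correct and follows the same route as the paper: once $L\ge n$, the first $n$ symbols of both tapes are fixed on $[u]\times[v]$, so each event $E_{S_i}(n,r,w)$ is decided and every summand of $B_n(u,v)$ is $0$ or $1$. You also spell out why the length-$n$ output prefix depends only on $x[1..n]$ and $y[1..n]$ (via $a+b=n$), which the paper leaves implicit.
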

	\begin{proof}
		When $L\ge n$, the first $n$ symbols on each tape are fixed under $[u]\times[v]$.
		Thus each event $E_{S_i}(n,r,w)$ (and its complement) is determined, so its conditional probability is $0$ or $1$.
		Summing finitely many $0$--$1$ values gives an integer.
	\end{proof}
	
	The next proposition applies Lemma~\ref{lem:integer} at each checkpoint length $N_j$ to turn the strict inequality
	$\Phi_{N_j}<1$ into the exact satisfaction of all constraints in $F_{N_j}$.
	
	\begin{proposition}
		\label{prop:checkpoints}
		Let $(x,y)$ be the infinite output of Algorithm~\ref{alg:compute}. Then for every $j\ge 1$, $(x,y)\in F_{N_j}$.
	\end{proposition}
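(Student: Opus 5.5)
Fix $j\ge 1$ and take $L=N_j$, with $u=x{\upharpoonright}N_j$ and $v=y{\upharpoonright}N_j$ the prefixes the algorithm holds after $N_j$ steps. The plan is to show that $B_{N_j}(u,v)$ is squeezed between $0$ and a quantity below $1$, and then use integrality to conclude it is $0$.

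First, $j\in J(N_j)$, since $A_j=\sqrt{N_j}\le N_j$ (as $N_j\ge 1$). All summands $B_{N_{j'}}$ in $\Phi$ are nonnegative, so
\[
B_{N_j}(u,v)\ \le\ \Phi_{N_j}(u,v).
\]
By the inductive invariant proved in Lemma~\ref{lem:phi-bound}, the right-hand side is at most $\sum_{j'}1/N_{j'}^2<1/4<1$. That induction is carried out along the actual run of Algorithm~\ref{alg:compute}, with greedy choices $(a^\star,b^\star)$, so it applies to exactly these prefixes. One point needs checking here: the algorithm's prefixes up to length $N$ are consistent across different inputs $N$. This holds because the algorithm is deterministic and its choices do not depend on $N$, assuming a fixed tie-breaking rule (e.g.\ lexicographic) in the minimization. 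Hence the infinite limit $(x,y)$ is well defined, and its length-$N_j$ prefixes are exactly the run's prefixes at step $N_j$.

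Next, since $L=N_j\ge N_j=n$, Lemma~\ref{lem:integer} shows that $B_{N_j}(u,v)$ is a nonnegative integer. Combined with the bound above, this forces $B_{N_j}(u,v)=0$. Each summand $\mu(E_{S_i}(N_j,r,w)^c\mid[u]\times[v])$ is therefore $0$. Because the event $E_{S_i}(N_j,r,w)$ depends only on $S_i(x,y){\upharpoonright}N_j$, which is determined by the first $N_j$ symbols of each tape (producing $N_j$ output symbols reads at most $N_j$ from each tape), each such event is constant on the cylinder $[u]\times[v]$. A conditional probability of $0$ for its complement thus means that every pair in the cylinder, in particular $(x,y)$, lies in $E_{S_i}(N_j,r,w)$. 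Intersecting over all $i\le t_{N_j}$, $r\le\ell_{N_j}$ and $w\in\A^r$ gives $(x,y)\in F_{N_j}$.

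I expect the only delicate step to be confirming that the invariant of Lemma~\ref{lem:phi-bound} holds for the algorithm's actual prefixes rather than just for some prefixes. The rest, namely membership of $j$ in $J(N_j)$, nonnegativity of the summands, integrality, and determination of the events by length-$N_j$ prefixes, is routine.
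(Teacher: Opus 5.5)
Your proof is correct and follows the paper's argument: $j\in J(N_j)$ gives $0\le B_{N_j}\le \Phi_{N_j}<1$ along the run by Lemma~\ref{lem:phi-bound}, and the integrality of Lemma~\ref{lem:integer} then forces $B_{N_j}=0$, so every constraint in $F_{N_j}$ holds. Your extra remarks only make explicit what the paper uses implicitly: that the invariant holds for the algorithm's actual prefixes, that a fixed tie-breaking rule makes the limit $(x,y)$ well defined, and that each event is constant on the cylinder.
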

	
	\begin{proof}
		Fix $j$ and let $L:=N_j$. Then $j\in J(L)$ (because $A_j\le N_j\le N_j$), so by Lemma~\ref{lem:phi-bound},
		\[
		0\le B_{N_j}(x{\upharpoonright}L,y{\upharpoonright}L)\ \le\ \Phi_L(x{\upharpoonright}L,y{\upharpoonright}L)\ <\ 1.
		\]
		By Lemma~\ref{lem:integer} (with $n=L=N_j$), $B_{N_j}(x{\upharpoonright}L,y{\upharpoonright}L)$ is an integer, hence it
		must equal $0$. Since $B_{N_j}$ is a sum of non-negative conditional probabilities, the sum is $0$ iff each summand is $0$,
		i.e.\ every constraint in $F_{N_j}$ holds. Thus $(x,y)\in F_{N_j}$.
	\end{proof}
	
	We now derive normality for every shuffler output from the fact that all checkpoint constraints are met.
	
	\begin{proposition}
		\label{prop:all-shufflers-normal}
		For every shuffler $S$, the sequence $S(x,y)$ is normal. In particular, $x$ and $y$ are normal.
	\end{proposition}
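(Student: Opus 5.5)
Fix a shuffler $S$. By the enumeration convention it equals $S_i$ for some $i$. For every $j$ with $N_j\ge i$ (so $t_{N_j}=N_j\ge i$), Proposition~\ref{prop:checkpoints} gives $(x,y)\in F_{N_j}$. Writing $z:=S(x,y)$, this means that for every $r\le \ell_{N_j}$ and every $w\in\A^r$,
\[
\bigl|\alocc_{w,r}(z{\upharpoonright}N_j)-m_{N_j,r}k^{-r}\bigr|<\varepsilon_{N_j}m_{N_j,r}.
\]
Now fix a block length $r$. Since $\ell_n\to\infty$, we have $r\le \ell_{N_j}$ for all large $j$. Because $\varepsilon_{N_j}\to 0$, we get $\alocc_{w,r}(z{\upharpoonright}N_j)/\lfloor N_j/r\rfloor\to k^{-r}$ along the checkpoint sequence, for every $w\in\A^r$.

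The next step is to interpolate between checkpoints. For $N_j\le n<N_{j+1}$, aligned counts are monotone in the prefix length. This gives the sandwich
\[
\alocc_{w,r}(z{\upharpoonright}N_j)\le \alocc_{w,r}(z{\upharpoonright}n)\le \alocc_{w,r}(z{\upharpoonright}N_{j+1}),
\]
and hence
\[
\frac{\alocc_{w,r}(z{\upharpoonright}n)}{\lfloor n/r\rfloor}
\le \frac{\alocc_{w,r}(z{\upharpoonright}N_{j+1})}{\lfloor N_{j+1}/r\rfloor}\cdot\frac{\lfloor N_{j+1}/r\rfloor}{\lfloor n/r\rfloor},
\]
with the symmetric lower bound using $N_j$. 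Since $N_j=(j+m_0)^4$, the ratio $N_{j+1}/N_j=(1+1/(j+m_0))^4\to 1$. Both correction factors therefore tend to $1$, and the aligned frequency of every $w\in\A^r$ converges to $k^{-r}$ along all $n$. This is the main reason the checkpoints were chosen polynomially rather than geometrically sparse. I expect this interpolation to be the step needing the most care, although it is elementary.

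Having shown that $z$ is normal in the aligned sense for every block length $r$, I would invoke the standard equivalence between aligned normality (for all $r$) and Borel normality cited from \cite{KuipersNiederreiterUniform}. That concludes $S(x,y)$ is normal. If a self-contained argument were preferred, one could recover the overlapping frequency of a block $w$ of length $m$ by summing, over the $r$ shifts, aligned counts of length-$r$ blocks containing $w$ at a given offset, with $r\gg m$. The boundary losses are $O(m/r)$, and letting $r\to\infty$ removes them. I would cite the known equivalence rather than reprove it.

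Finally, $x$ and $y$ are themselves normal. Take $S$ to be the shuffler with a single state and $\tau\equiv 1$; then $S(x,y)=x$. The one-state shuffler with $\tau\equiv 2$ gives $S(x,y)=y$. Both appear in the enumeration, so the first part applies to them.
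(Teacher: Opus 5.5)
Your proposal is correct and follows essentially the same route as the paper. It uses the checkpoint constraints from Proposition~\ref{prop:checkpoints}, interpolates between the polynomially spaced checkpoints $N_j=(j+m_0)^4$, invokes the aligned-to-Borel normality equivalence, and uses the two one-tape shufflers for $x$ and $y$; the only difference is that you sandwich between $N_j$ and $N_{j+1}$ via monotonicity of $\alocc_{w,r}$, whereas the paper bounds the increment from $N_j$ alone.
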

	
	\begin{proof}
		Fix $S=S_i$. Since $t_{N_j}=N_j$, for all sufficiently large $j$ we have $i\le t_{N_j}$, hence
		$(x,y)\in F_{N_j}$ implies that for every $r\le \ell_{N_j}$ and every $w\in\A^r$,
		\[
		\left|\alocc_{w,r}(S(x,y){\upharpoonright}N_j) - \frac{m_{N_j,r}}{k^r}\right|
		< \varepsilon_{N_j}\, m_{N_j,r}.
		\]
		Dividing by $m_{N_j,r}$ gives aligned frequency error $<\varepsilon_{N_j}$ at the subsequence $N_j$.
		
		Now fix any block length $r\ge 1$ and any $w\in\A^r$. For large $j$, we have $r\le \ell_{N_j}$. Let $n$ be large and choose $j$ with $N_j\le n < N_{j+1}$. Let $z=S(x,y)$ and write
		$m_n=\lfloor n/r\rfloor$ and $m_j=\lfloor N_j/r\rfloor$.
		Since each additional aligned block can increase $\alocc_{w,r}$ by at most $1$, we have
		\[
		0\le \alocc_{w,r}(z{\upharpoonright}n)-\alocc_{w,r}(z{\upharpoonright}N_j)\le m_n-m_j.
		\]
		Hence
		\[
		\left|\frac{\alocc_{w,r}(z{\upharpoonright}n)}{m_n}-\frac{\alocc_{w,r}(z{\upharpoonright}N_j)}{m_n}\right|
		\le \frac{m_n-m_j}{m_n}.
		\]
		Also, since $0\le \alocc_{w,r}(z{\upharpoonright}N_j)\le m_j$, we have
		\[
		\left|\frac{\alocc_{w,r}(z{\upharpoonright}N_j)}{m_n}-\frac{\alocc_{w,r}(z{\upharpoonright}N_j)}{m_j}\right|
		=\alocc_{w,r}(z{\upharpoonright}N_j)\left|\frac{1}{m_n}-\frac{1}{m_j}\right|
		\le \frac{m_n-m_j}{m_n}.
		\]
		By the triangle inequality,
		\[
		\left|\frac{\alocc_{w,r}(z{\upharpoonright}n)}{m_n}-\frac{\alocc_{w,r}(z{\upharpoonright}N_j)}{m_j}\right|
		\le 2\cdot \frac{m_n-m_j}{m_n}.
		\]
		Moreover,
		\[
		\frac{m_n-m_j}{m_n}
		\le \frac{\lfloor (n-N_j)/r\rfloor+1}{\lfloor n/r\rfloor}
		\le \frac{(n-N_j)/r+1}{n/r-1}
		= \frac{n-N_j+r}{n-r}.
		\]
		For $n>2r$ this implies
		\[
		\frac{m_n-m_j}{m_n}
		\le 3\frac{n-N_j}{n}+3\frac{r}{n}
		\le 3\frac{N_{j+1}-N_j}{N_j}+3\frac{r}{N_j}.
		\]
		Finally, since $(x,y)\in F_{N_j}$, for all sufficiently large $j$ (so that $i\le t_{N_j}$ and $r\le \ell_{N_j}$) we have
		\[
		\left|\frac{\alocc_{w,r}(z{\upharpoonright}N_j)}{m_j}-\frac{1}{k^r}\right|<\varepsilon_{N_j}.
		\]
		Combining the last two displays gives
		\[
		\left|\frac{\alocc_{w,r}(z{\upharpoonright}n)}{m_n}-\frac{1}{k^r}\right|
		\le 2\cdot \frac{m_n-m_j}{m_n}+\varepsilon_{N_j}
		\le 6\frac{N_{j+1}-N_j}{N_j}+6\frac{r}{N_j}+\varepsilon_{N_j}.
		\]
		Since $N_j=(j+m_0)^4$, we have $(N_{j+1}-N_j)/N_j\to 0$ and $r/N_j\to 0$, and also $\varepsilon_{N_j}\to 0$.
		Therefore the aligned block frequency at length $n$ converges to $1/k^r$.
		
		By the equivalence between aligned-count normality and non-aligned count normality (see the remark in the preliminaries), $z$ is normal.
		
		Finally, taking the shuffler that always reads only tape $1$ (resp.\ tape $2$) gives $x$ and $y$ normal.
	\end{proof}

	\subsection{Polynomial-time implementation}
	\label{subsec:ptime-impl}
	
	We now verify that Algorithm~\ref{alg:compute} can be implemented within polynomial time (in bit-complexity) when the
	alphabet size $k=\abs{\A}$ is fixed.
	
	\begin{lemma}
		\label{prop:ptime-runtime}
		For fixed alphabet size $k=\abs{\A}$, Algorithm~\ref{alg:compute} runs in time $N^{O(1)}$ (bit-complexity) on input $N$.
	\end{lemma}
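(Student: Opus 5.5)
The plan is to bound the number of potential evaluations and then the cost of each one.

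\textbf{Counting evaluations.} The loop runs $N$ times. Each iteration evaluates $\Phi_{L+1}(ua,vb)$ for $k^2=O(1)$ pairs $(a,b)$. Each $\Phi_{L+1}$ is a sum over $j\in J(L+1)$. Since $A_j=(j+m_0)^2\le L+1$, we have $|J(L+1)|\le\sqrt{N}+O(1)$, and every active checkpoint satisfies $N_j\le (L+1)^2\le (N+1)^2$. Each $B_{N_j}$ is a sum of at most $O(N_j^{4/3}\log N_j)$ terms (this is the count from Corollary~\ref{cor:activation-small}), which is polynomial in $N$. For each term, the shuffler $S_i$ with $i\le N_j$ is obtained by decoding $b_i$ in time $\poly(\log N)$, and it has $|Q_i|\le\poly(N)$ states. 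It therefore suffices to compute a single conditional probability $\mu(E_{S}(n,r,w)^c\mid[u]\times[v])$ exactly, as a rational number, in time $\poly(n,|Q|,L)$ with $n\le \poly(N)$ and $L\le N$. Comparing the sums to find the minimizer is then polynomial in their bit lengths.

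\textbf{The dynamic program.} I would run a forward DP over output time $t=0,\dots,n$. The state is a tuple $(q,a,b,p,c)$, where $q\in Q$ is the shuffler state and $a,b$ are the head positions. Only $\min(a,L+1)$ and $\min(b,L+1)$ matter, because beyond the prefix every read is a fresh uniform symbol. Moreover $a+b=t$, so one head coordinate suffices. The remaining components are $p$, the longest prefix of $w$ matched within the current aligned block (or a "failed" flag), with $r+2$ values, and $c\le m_{n,r}$, the aligned count so far.

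At each step the transition reads tape $\tau(q)$. If the head is within the prefix, the symbol is forced to be the corresponding letter of $u$ or $v$ with probability $1$; otherwise each letter has probability $1/k$. The transition then updates $q$ via $\delta$, updates $p$, and increments $c$ when a block completes with a full match. At time $n$ I sum the mass over all $c$ with $|c-m/k^r|\ge\varepsilon_n m$. The table has size $O(n\cdot|Q|\cdot(L+2)\cdot(r+2)\cdot n)$, which is polynomial, and each transition costs $O(k)$.

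\textbf{The main obstacle: bit-complexity.} The probabilities are rationals with denominator $k^{n}$, so every entry is an integer multiple of $k^{-n}$ and has $O(n\log k)$ bits. Arithmetic on these is polynomial. A second subtlety is that $\varepsilon_n$ is irrational. The test $|c-m/k^r|\ge \varepsilon_n m$ is equivalent to
\[
(c k^r-m)^2 n\ \ge\ 4k^{2r}m^2\log n\log_k n
\]
(taking both sides nonnegative). I would decide this by computing $\log n$ and $\log_k n$ to $\poly(\log n)$ bits of precision. If a borderline case with unresolved ties is a concern, I would note that replacing $\varepsilon_n$ by a rational approximation within $1/n$ leaves all earlier estimates valid. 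Multiplying the polynomial count of DP instances by the polynomial cost of each gives total time $N^{O(1)}$.
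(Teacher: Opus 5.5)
Your proposal is correct and follows the paper's proof essentially step for step: you bound $|J(L+1)|=O(\sqrt N)$ and $N_j\le N^2$, reduce to polynomially many conditional probabilities $\mu(E_{S_i}(n,r,w)^c\mid[u]\times[v])$, and compute each one exactly by a forward dynamic program over shuffler state, head positions, in-block match status and aligned count, using rationals with denominator $k^{n}$. Two minor remarks: first, capping both heads at $L+1$ and dropping one of them via $a+b=t$ cannot be combined, because once $a$ is capped $b=t-a$ is no longer known, so keep either $t$ with an uncapped $a$ or both capped heads (still polynomial); second, your observation that $\varepsilon_n$ is irrational, which the paper passes over, is properly settled only by your second option (a rational $\tilde\varepsilon_n$ within $1/n$ of $\varepsilon_n$, with the earlier estimates rechecked), since evaluating logarithms to $\poly(\log n)$ bits gives no a priori guarantee of resolving near-ties or exact ties.
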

	
	\begin{proof}
		Fix $N\in\N$. At iteration $L$ (with current prefixes $u,v\in\A^L$), the algorithm evaluates
		$\Phi_{L+1}(ua,vb)$ for all $(a,b)\in\A\times\A$ and picks the minimizer.
		Since $k$ is fixed, it suffices to show that for each fixed $(a,b)$, $\Phi_{L+1}(ua,vb)$ can be computed in time
		polynomial in $N$, uniformly over all $L\le N$.
		
		Let $L'\!:=L+1$ and write $u'=ua$ and $v'=vb$. We compute
		\[
		\Phi_{L'}(u',v')=\sum_{j\in J(L')} B_{N_j}(u',v').
		\]
		If $j\in J(L')$, then $A_j\le L'$ implies $j+m_0\le \sqrt{L'}+m_0$, hence
		\[
		N_j=(j+m_0)^4 \ \le\ (\sqrt{L'}+m_0)^4 \ =\ O\bigl((L')^2\bigr)\ \le\ O(N^2).
		\]
		Also $A_j\le L'$ implies $j\le \sqrt{L'}$, so $\abs{J(L')}\le \lfloor \sqrt{L'}\rfloor = O(\sqrt{N})$.
		Thus, for each $L'\le N$, computing $\Phi_{L'}(u',v')$ reduces to computing
		$B_n(u',v')$ for $O(\sqrt{N})$ values $n$ with $n\le cN^2$ for a constant $c$.
		
		We next explain how to compute one term appearing in $B_n(u',v')$ by dynamic programming.
		Fix parameters
		\[
		n,\ i\le t_n=n,\ 1\le r\le \ell_n,\ w\in\A^r,\ u',v'\in\A^{L'}\ (L'\le n),
		\]
		and write $S_i=(Q_i,q_0,\delta,\tau)$. We compute
		\[
		p:=\mu\bigl(E_{S_i}(n,r,w)^c\mid [u']\times[v']\bigr).
		\]
		
		Let $m=\lfloor n/r\rfloor$.
		We run a forward DP over the production of the first $n$ output symbols.
		A DP state consists of
		\[
		(q,a,b,s,\sigma,c),
		\]
		where:
		\begin{itemize}
			\item $q\in Q_i$ is the current shuffler state,
			\item $a\in\{0,\dots,n\}$ and $b\in\{0,\dots,n\}$ are the numbers of symbols consumed from tapes 1 and 2
			(so the output time is $t=a+b$),
			\item $s\in\{0,\dots,r-1\}$ is the position inside the current aligned length-$r$ block ($s=t\bmod r$),
			\item $\sigma\in\{0,1\}$ indicates whether the current block built so far still matches the prefix of $w$
			(i.e.\ $\sigma=1$ iff the last $s$ symbols equal $w[1..s]$),
			\item $c\in\{0,\dots,m\}$ is the number of completed aligned blocks equal to $w$ among the first $\lfloor t/r\rfloor$
			completed blocks.
		\end{itemize}
		
		Initialize $\mathrm{DP}[q_0,0,0,0,1,0]=1$ and all other entries to $0$.
		For each state with $t=a+b<n$, we update as follows.
		Let $\ell=\tau(q)\in\{1,2\}$ be the chosen tape.
		If $\ell=1$, the next tape symbol is at position $a+1$; if $a+1\le L'$ it equals the fixed letter $u'[a+1]$,
		otherwise it is uniform over $\A$. Similarly for $\ell=2$ with $v'$ and $b+1$.
		
		Thus each transition from $(q,a,b,s,\sigma,c)$ branches over $\alpha\in\A$ with probability
		\[
		\Pr[\alpha]=
		\begin{cases}
			1, & \text{if the next position is fixed and equals }\alpha,\\
			0, & \text{if the next position is fixed and differs from }\alpha,\\
			1/k, & \text{if the next position is unfixed.}
		\end{cases}
		\]
		For each $\alpha$ with $\Pr[\alpha]>0$, set $q'=\delta(q,\alpha)$ and increment the appropriate head counter
		($a'=a+1,b'=b$ if $\ell=1$, else $a'=a,b'=b+1$).
		Update the within-block variables:
		\[
		\sigma'=
		\begin{cases}
			1, & \sigma=1 \text{ and } \alpha=w[s+1],\\
			0, & \text{otherwise,}
		\end{cases}
		\qquad
		s'=s+1.
		\]
		If $s'<r$ (block not finished), keep count $c' = c$ and store $(q',a',b',s',\sigma',c')$.
		If $s'=r$ (block finished), then set
		\[
		c'=
		\begin{cases}
			c+1, & \sigma'=1,\\
			c, & \sigma'=0,
		\end{cases}
		\qquad
		s'=0,\ \sigma'=1
		\]
		(start the next block fresh), and store $(q',a',b',0,1,c')$.
		All updates add probability mass accordingly.
		
		After processing all states with $a+b=n$, we obtain the distribution of the random variable
		$C=\alocc_{w,r}(S_i(X,Y){\upharpoonright}n)$ under the conditioning:
		\[
		\Pr[C=c]=\sum_{q\in Q_i}\sum_{a+b=n}\sum_{s,\sigma}\mathrm{DP}[q,a,b,s,\sigma,c].
		\]
		We then compute
		\[
		p=\Pr\Bigl[\bigl|C-m/k^r\bigr|\ge \varepsilon_n m\Bigr]
		\]
		by summing $\Pr[C=c]$ over all integers $c$ outside the allowed interval.
		
		The DP table has at most
		\[
		\abs{Q_i}\cdot (n+1)^2 \cdot r \cdot 2 \cdot (m+1) \ =\ O\!\bigl(\abs{Q_i}\,n^3\,r\bigr)
		\ =\ O\!\bigl(\abs{Q_i}\,n^3\log n\bigr)
		\]
		states, since $r\le \ell_n=O(\log n)$ and $m\le n$.
		Each state performs $O(k)=O(1)$ arithmetic updates.
		Probabilities are rationals with denominator at most $k^n$, so numerators/denominators have $O(n)$ bits;
		all additions/multiplications therefore cost polynomial time in $n$ (with fixed $k$).
		Hence this computation runs in time $n^{O(1)}\cdot \abs{Q_i}$.
		
		We now compute $B_n(u',v')$ itself. By definition,
		\[
		B_n(u',v')=\sum_{i=1}^{n}\ \sum_{r=1}^{\ell_n}\ \sum_{w\in\A^r}
		\mu\bigl(E_{S_i}(n,r,w)^c\mid [u']\times[v']\bigr).
		\]
		There are $n$ choices of $i$, $O(\log n)$ choices of $r$, and
		\[
		\sum_{r=1}^{\ell_n}\abs{\A^r}=\sum_{r=1}^{\ell_n}k^r \le k^{\ell_n+1} \le k\cdot n^{1/3}
		\]
		choices of $w$. Therefore $B_n(u',v')$ is a sum of at most $O(n^{4/3}\log n)$ terms.
		Computing each term as above and summing them yields a total time polynomial in $n$.
		
		Finally, for fixed $L'\le N$ and fixed $(a,b)$, we compute
		\[
		\Phi_{L'}(u',v')=\sum_{j\in J(L')} B_{N_j}(u',v').
		\]
		As shown at the start, $\abs{J(L')}=O(\sqrt{N})$ and every $N_j\le cN^2$.
		Each $B_{N_j}(u',v')$ is computable in time polynomial in $N_j$, hence polynomial in $N$,
		so $\Phi_{L'}(u',v')$ is computable in time polynomial in $N$.
		We do this for $k^2=O(1)$ candidate pairs $(a,b)$ at each of $N$ iterations, so the full run time is $N^{O(1)}$.
	\end{proof}

	Proposition~\ref{prop:all-shufflers-normal} establishes that the limit sequences $x,y$ are normal and that $S(x,y)$ is normal for every shuffler $S$, and Lemma~\ref{prop:ptime-runtime} shows that the construction runs in time $N^{O(1)}$ on input $N$. This completes the proof of Theorem~\ref{thm:main}. And finally due to Theorem \ref{thm:abc}, the pair $(x,y)$ is finite-state independent.

	\section{A computable finite-state independent word for a fixed computable normal input}
	\label{sec:computable-one-sided}
	
	Throughout this subsection, fix a finite alphabet $\A$ of size $k\ge 2$ and a computable normal word
	$x\in\A^\infty$. We construct a computable $y\in\A^\infty$ such that $y$ is normal and $S(x,y)$ is
	normal for every shuffler $S$. By the shuffler characterization theorem stated earlier in the paper,
	this implies that $x$ and $y$ are finite-state independent.
	
	\paragraph{Proof outline.}
	The construction has two ingredients.
	\begin{enumerate}
		\item We first show an almost-sure statement: for each fixed shuffler $S$, if $Y\sim\Unif(\A^\infty)$ then $S(x,Y)$ is normal with probability $1$.
		\item We then effectivize this: from a uniformly computable family of clopen sets of high measure (one for each finite collection of shuffler tests), we compute a single computable word $y$ that lies in all of them simultaneously.
	\end{enumerate}
	
	\subsection{Finite test sets and a computable extraction lemma}
	\label{subsubsec:comp-tests-extract}

	For parameters $m\ge 1$ and $\varepsilon>0$, write
	\[
	\mathrm{Test}(u;m,\varepsilon)\;:\Longleftrightarrow\;
	\forall w\in\A^m:\ \left|\frac{\occ(u,w)}{|u|}-k^{-m}\right|\le \varepsilon.
	\]
	For a shuffler $S$ and inputs $(x,y)$ define
	\[
	E_S(n;m,\varepsilon)
	:=\{(x,y)\in\A^\infty\times\A^\infty:\ \mathrm{Test}(S(x,y)[1..n];m,\varepsilon)\}.
	\]
	
	The next lemma records two basic facts needed for the effective intersection step: the relevant test sets are clopen, and their measures (and cylinder intersections) are uniformly computable when $x$ is computable.
	
	\begin{lemma}
		\label{lem:clopen-comp}
		For fixed $S,n,m,\varepsilon$, the set $E_S(n;m,\varepsilon)$ is clopen in $\A^\infty\times\A^\infty$.
		Moreover, for each fixed $x\in\A^\infty$, the slice
		\[
		E_S(n;m,\varepsilon)(x):=\{y:\ (x,y)\in E_S(n;m,\varepsilon)\}
		\]
		is clopen in $\A^\infty$.
		
		If $x$ is computable then for every cylinder $[v]\subseteq \A^\infty$ the quantity
		$\mu(E_S(n;m,\varepsilon)(x)\cap[v])$ is computable uniformly in $(S,n,m,\varepsilon,v)$.
		In particular, $\mu(E_S(n;m,\varepsilon)(x))$ is computable uniformly in $(S,n,m,\varepsilon)$.
	\end{lemma}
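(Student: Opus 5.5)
The plan rests on one observation: membership of $(x,y)$ in $E_S(n;m,\varepsilon)$ depends only on the first $n$ symbols of $x$ and of $y$. Running the shuffler for $n$ output steps consumes $a$ symbols from tape~1 and $b$ symbols from tape~2 with $a+b=n$. So $S(x,y)[1..n]$ is a deterministic function $g(x[1..n],y[1..n])$, and the test predicate is evaluated on a word of length $n$. Hence
\[
E_S(n;m,\varepsilon)=\bigcup_{(p,q)\in\A^n\times\A^n:\ \mathrm{Test}(g(p,q);m,\varepsilon)} [p]\times[q],
\]
a finite union of cylinder rectangles, which is clopen. Its complement is the union over the remaining pairs $(p,q)$, so it is also a finite union of cylinders.

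For the slice, fix $x$ and put $p=x[1..n]$. Then $E_S(n;m,\varepsilon)(x)=\bigcup\{[q]: q\in\A^n,\ \mathrm{Test}(g(p,q);m,\varepsilon)\}$. This is again a finite union of cylinders in $\A^\infty$, hence clopen. Alternatively, the slice is the preimage of a clopen set under the continuous map $y\mapsto(x,y)$.

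For computability, take $\varepsilon$ rational, which is the only case in which uniformity in $\varepsilon$ makes sense. Given $(S,n,m,\varepsilon,v)$, the procedure is:
\begin{enumerate}
\item Compute $p=x[1..n]$ using the algorithm for $x$.
\item Let $N'=\max(n,|v|)$ and enumerate all $q\in\A^{N'}$ extending $v$.
\item For each such $q$, simulate $S$ on $(p,q)$ for $n$ steps. This only needs $q[1..n]$, and $n\le N'$.
\item Evaluate $\mathrm{Test}$ exactly. This is a finite check over $w\in\A^m$ with rational comparisons.
\end{enumerate}
The cylinders $[q]$ with $q\in\A^{N'}$ are disjoint, each has measure $k^{-N'}$, and $[v]$ is their union. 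The membership of $[q]$ in the slice is decided by $q[1..n]$. Therefore
\[
\mu\bigl(E_S(n;m,\varepsilon)(x)\cap[v]\bigr)=k^{-N'}\cdot\#\{q\in\A^{N'}: q\sqsupseteq v,\ \mathrm{Test}(g(p,q[1..n]))\},
\]
and this is an exactly computable rational. The special case $v=\lambda$ gives $\mu(E_S(n;m,\varepsilon)(x))$.

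The only point needing care, and it is the main one, is the claim that the first $n$ outputs depend only on $x[1..n]$ and $y[1..n]$. This follows directly from Definition~\ref{def:shuffler}: each step advances exactly one head by one position, so after $n$ steps both heads are at positions $\le n$. Everything else is finite enumeration.
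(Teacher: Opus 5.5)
Your proof is correct and follows the paper's argument. The first $n$ outputs depend only on $(x[1..n],y[1..n])$, which makes the test set a finite union of rectangles, and the measure on $[v]$ is computed by enumerating extensions of $v$ and counting the passing ones. Your choice $N'=\max(n,|v|)$ just merges the paper's two cases $|v|\ge n$ and $|v|<n$ into a single count, and your restriction to rational $\varepsilon$ makes explicit an assumption the paper leaves implicit.
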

	
	\begin{proof}
		To determine membership in $E_S(n;m,\varepsilon)$ it suffices to know the output prefix
		$S(x,y)[1..n]$. During the first $n$ output steps, the shuffler reads at most $n$ symbols from each
		tape, hence $S(x,y)[1..n]$ is determined by $(x[1..n],y[1..n])$. Therefore $E_S(n;m,\varepsilon)$ is a
		finite union of rectangles $[u]\times[v]$ with $u,v\in\A^n$, and is clopen; each slice is a finite
		union of cylinders in $\A^\infty$.
		
		Assume $x$ is computable. Fix a cylinder $[v]$ with $|v|=r$.
		If $r\ge n$, then on $[v]$ the prefix $y[1..n]$ is fixed to be $v[1..n]$, hence either
		$[v]\subseteq E_S(n;m,\varepsilon)(x)$ or $[v]\cap E_S(n;m,\varepsilon)(x)=\emptyset$; in either case
		$\mu(E_S(n;m,\varepsilon)(x)\cap[v])\in\{0,k^{-r}\}$ is computable.
		
		If $r<n$, enumerate all $u\in\A^n$ extending $v$.
		For each such $u$, simulate $S$ for $n$ output steps on inputs $(x[1..n],u)$, decide whether the resulting
		output prefix passes $\mathrm{Test}(\cdot;m,\varepsilon)$, and count the number of passing $u$.
		Then $\mu(E_S(n;m,\varepsilon)(x)\cap[v])$ equals this count times $k^{-n}$, hence is a computable rational,
		uniformly in $(S,n,m,\varepsilon,v)$.
		\qedhere
	\end{proof}
	
	The next lemma is a standard effective intersection principle: from a uniformly computable sequence of high-measure clopen sets with rapidly summable error, one can produce a single computable point in their intersection.
	
	\begin{lemma}
		\label{lem:extract-comp}
		Let $G_1,G_2,\dots\subseteq \A^\infty$ be clopen sets such that for each $t$,
		\[
		1-\mu(G_t)\le 2^{-2t}.
		\]
		Assume that for every $L$ and cylinder $[v]$, the number
		\[
		\mu\!\left(\Bigl(\bigcap_{t\le L}G_t\Bigr)\cap[v]\right)
		\]
		is computable \emph{as an exact rational} uniformly in $(L,v)$ (hence comparisons with rational thresholds are decidable).
		Then there exists a computable $y\in \bigcap_{t\ge 1} G_t$.
	\end{lemma}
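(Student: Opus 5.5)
We build $y$ one symbol at a time by a greedy conditional-measure argument, the same device as the rolling potential of Section~\ref{sec:ptime-construction}. Write $H_L:=\bigcap_{t\le L}G_t$, a clopen set, and $H_\infty:=\bigcap_{t\ge1}G_t$. The aim is a prefix sequence $v_0=\lambda\sqsubset v_1\sqsubset v_2\sqsubset\cdots$ with $|v_s|=s$ such that, for every $s$ and every $L$, the conditional mass
\[
\mu(H_L\mid [v_s])=k^{s}\,\mu(H_L\cap[v_s])
\]
stays positive in a controlled way. Positivity for all $L$ is enough: then $[v_s]\cap H_L\neq\emptyset$ for all $L$, so by compactness $[v_s]\cap H_\infty\neq\emptyset$ for every $s$. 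The limit $y=\bigcup_s v_s$ lies in the closed set $H_\infty$, because every cylinder around $y$ meets it.

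To keep the procedure finite at each step, we use the measure bound. The complement of $H_\infty$ has measure at most $\sum_t 2^{-2t}\le 1/3$. We run the standard extraction with a moving horizon. At stage $s$, having $v_s$, we maintain the invariant
\[
\mu(H_{L}^c\cap[v_s])\ \le\ k^{-s}\Bigl(\tfrac13+\textstyle\sum_{t\le L} 2^{-2t}\,c_s\Bigr)
\]
for a suitable horizon $L=L_s$. The cleaner route is to use the bad-mass function
\[
\Psi_s(v):=k^{s}\sum_{t\le L_s}\mu(G_t^c\cap[v]),\qquad |v|=s,
\]
with horizon $L_s:=s$. Each term is computable exactly, since $\mu(G_t\cap[v])$ is the case $L=t$ of the hypothesis after differencing; more simply, one can use $\mu(H_{L}\cap[v])$ directly, and we will phrase the argument with $1-\mu(H_{L}\mid[v])$.

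Next we choose the extension. By the tower identity, the analogue of Lemma~\ref{lem:tower},
\[
\mu(H_L^c\mid[v])=\frac1k\sum_{a\in\A}\mu(H_L^c\mid[va]).
\]
Some letter $a$ therefore does not increase the conditional bad mass at horizon $L$. When the horizon moves from $L_s=s$ to $s+1$, the newly added set $G_{s+1}$ contributes at most
\[
\mu(G_{s+1}^c\mid[v_{s+1}])\le k^{s+1}2^{-2(s+1)}.
\]
This is the main obstacle. Conditioning on a length-$(s+1)$ prefix can inflate a set of measure $2^{-2t}$ by a factor $k^{s+1}$, so the naive horizon $L_s=s$ fails when $k\ge4$. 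I would fix this by activating $G_t$ only at a prefix length $a_t$ with $k^{a_t}2^{-2t}\le 2^{-t}$, namely
\[
a_t:=\lfloor t/\log_2 k\rfloor.
\]
This mirrors the activation lengths $A_j$ in Section~\ref{sec:ptime-construction}. The potential then becomes
\[
\Phi_s(v):=\sum_{t:\,a_t\le s}\mu(G_t^c\mid[v]),
\]
which is a finite sum for each $s$ because $a_t\to\infty$. It is exactly computable from the hypothesis, via $\mu(G_t^c\cap[v])=k^{-s}-\mu(G_t\cap[v])$ and differences of the given intersection measures. Alternatively, each single $G_t$ is the case of the family consisting of $G_t$ alone, or we use $H$-masses throughout with the same activation.

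Running Algorithm~\ref{alg:compute}-style greedy minimization gives the bound
\[
\Phi_s(v_s)\le\sum_{t:\,a_t\le s}2^{-t}<1 .
\]
The proof is by induction, exactly as in Lemma~\ref{lem:phi-bound}: the averaging step does not increase old terms, and each newly activated term is at most $2^{-t}$ by the choice of $a_t$.

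It remains to conclude. For every $t$ and every $s\ge a_t$, we have $\mu(G_t^c\mid[v_s])<1$, so $[v_s]$ meets $G_t$. Each $G_t$ is clopen, hence closed, and every cylinder around $y$ meets it, so $y\in G_t$. Alternatively, once $s$ exceeds the finite depth defining $G_t$, the conditional probability is $0$ or $1$, which gives the integrality argument of Lemma~\ref{lem:integer}. Hence $y\in\bigcap_t G_t$. Finally, $y$ is computable: each step compares finitely many exactly computable rationals, which is decidable, and ties are broken by the least letter.
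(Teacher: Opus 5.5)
\textbf{The gap: computability of your potential.} Your averaging argument with activation lengths $a_t$ satisfying $k^{a_t}2^{-2t}\le 2^{-t}$ is mathematically sound. The problem is the computability step, which does not follow from the lemma's hypothesis. Each greedy step evaluates $\sum_{t:\,a_t\le s+1}\mu(G_t^c\mid[v_sa])$, so it needs cylinder masses of the \emph{individual} sets $G_t$. The lemma only supplies $\mu(H_L\cap[v])$ for $H_L=\bigcap_{t\le L}G_t$.

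Your justification by differencing does not give what you need: $\mu(H_{t-1}\cap[v])-\mu(H_t\cap[v])=\mu\bigl((H_{t-1}\setminus G_t)\cap[v]\bigr)$, not $\mu(G_t^c\cap[v])$. In fact the individual masses are not determined by the data at all. Replacing $G_t$ by the clopen set $G_t\cup H_{t-1}^c$ leaves every $H_L$ and the bound $1-\mu(G_t)\le 2^{-2t}$ intact, but in general changes $\mu(G_t\cap[v])$.

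Your two fallbacks do not rescue this. Treating $\{G_t\}$ as a one-element family is not covered by the hypothesis; it happens to be available in the paper's application through Lemma~\ref{lem:clopen-comp}, but not in the lemma as stated. Reading ``use $H$-masses throughout'' as $\mu(H_t^c\mid[v])$ also fails: $\mu(H_t^c)$ does not tend to $0$, so the newly activated terms are no longer small.

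\textbf{The repair.} Use exactly the differences the data gives you. Put $H_0:=\A^\infty$ and $D_t:=H_{t-1}\setminus H_t=H_{t-1}\cap G_t^c$. Then $\mu(D_t\mid[v])=k^{|v|}\bigl(\mu(H_{t-1}\cap[v])-\mu(H_t\cap[v])\bigr)$ is an exactly computable rational, and $\mu(D_t)\le\mu(G_t^c)\le 2^{-2t}$. With $\Phi_s(v):=\sum_{t:\,a_t\le s}\mu(D_t\mid[v])$, your induction runs verbatim and gives $\Phi_s(v_s)\le\sum_{t:\,a_t\le s}2^{-t}<1$.

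For the conclusion, note that $D_1,\dots,D_T$ partition $H_T^c$ and that $a_t$ is nondecreasing in $t$. So for every $s\ge a_T$ you get $\mu(H_T^c\mid[v_s])\le\Phi_s(v_s)<1$. Hence $[v_s]$ meets the closed set $H_T$ for all large $s$, so $y\in H_T$ for every $T$, and therefore $y\in\bigcap_t G_t$.

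\textbf{Comparison with the paper.} With this change your proof is correct and genuinely different from the paper's. The paper maintains only the invariant $\mu([v_\ell]\cap\bigcap_t G_t)>0$. At each step it searches for the least $L$ with $\mu([v_\ell]\cap F_L)>k\,\mathrm{Tail}(L)$, then takes a letter $a$ with $\mu([v_\ell a]\cap F_L)>\mathrm{Tail}(L)$. That route uses precisely the intersection masses the hypothesis provides. However, each step is an unbounded search whose termination rests on a positive quantity that is never computed.

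Your rolling-potential route (the device of Lemma~\ref{lem:phi-bound}) instead does a fixed finite computation per symbol. The activation lengths absorb the factor $k^{s}$ by which conditioning on a length-$s$ prefix can inflate a small set. It also certifies membership through an explicit quantitative invariant rather than a bare positivity statement.
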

	
	\begin{proof}
		Write $\varepsilon_t:=2^{-2t}$ and let
		\[
		\mathrm{Tail}(L):=\sum_{t>L}\varepsilon_t.
		\]
		For $L\ge 0$ let $F_L:=\bigcap_{t\le L}G_t$ (with $F_0=\A^\infty$).
		
		We build prefixes $v_\ell\in\A^\ell$ such that
		\begin{equation}
			\label{eq:prefix-positive-mass}
			\mu\bigl([v_\ell]\cap \bigcap_{t\ge 1}G_t\bigr)>0
			\qquad(\ell\ge 0).
		\end{equation}
		This holds at $\ell=0$ because
		\[
		\mu\Bigl(\bigcap_{t\ge 1}G_t\Bigr)\ge 1-\sum_{t\ge 1}\mu(G_t^c)\ge 1-\sum_{t\ge 1}\varepsilon_t
		=1-\frac13=\frac23>0.
		\]
		
		Inductive step. Assume \ref{eq:prefix-positive-mass} holds for some $\ell$ and set
		\[
		\alpha:=\mu\bigl([v_\ell]\cap \bigcap_{t\ge 1}G_t\bigr)>0.
		\]
		Since $F_L\downarrow \bigcap_{t\ge 1}G_t$ as $L\to\infty$ and measures are continuous from above,
		\[
		\mu([v_\ell]\cap F_L)\downarrow \alpha \quad(L\to\infty).
		\]
		Also $\mathrm{Tail}(L)\to 0$. Hence there exists $L\ge 0$ such that
		\begin{equation}
			\label{eq:tail-thresh}
			\mu([v_\ell]\cap F_L)> k\,\mathrm{Tail}(L).
		\end{equation}
		By the hypothesis (exact rational computation), we can find the least such $L$ by brute force search.
		
		Fix this least $L$. Since
		\[
		\mu([v_\ell]\cap F_L)=\sum_{a\in\A}\mu([v_\ell a]\cap F_L),
		\]
		\eqref{eq:tail-thresh} implies that there exists at least one $a\in\A$ with
		\[
		\mu([v_\ell a]\cap F_L)>\mathrm{Tail}(L),
		\]
		otherwise the sum would be at most $k\,\mathrm{Tail}(L)$.
		Let $a$ be the least such symbol and set $v_{\ell+1}:=v_\ell a$.
		
		Now we show \ref{eq:prefix-positive-mass}. Using the union bound and $\mu([v_{\ell+1}]\cap G_t^c)\le \mu(G_t^c)\le\varepsilon_t$,
		\[
		\mu\Bigl([v_{\ell+1}]\cap \bigcap_{t\ge 1}G_t\Bigr)
		\ge
		\mu([v_{\ell+1}]\cap F_L)-\sum_{t>L}\mu([v_{\ell+1}]\cap G_t^c)
		\ge
		\mu([v_{\ell+1}]\cap F_L)-\mathrm{Tail}(L)>0,
		\]
		so \ref{eq:prefix-positive-mass} holds.
		
		Let $y:=\lim_\ell v_\ell\in\A^\infty$. Since the cylinders $[v_\ell]$ are nested, $y$ exists and lies in every $[v_\ell]$.
		Fix $t$. Since $G_t$ is clopen, membership in $G_t$ is decided by some prefix length $r$.
		Choose $\ell\ge r$. By \ref{eq:prefix-positive-mass}, $[v_\ell]$ intersects $\bigcap_{s\ge 1}G_s\subseteq G_t$, hence $[v_\ell]\cap G_t\neq\emptyset$.
		Because $|v_\ell|\ge r$ and $G_t$ is clopen, either $[v_\ell]\subseteq G_t$ or $[v_\ell]\cap G_t=\emptyset$.
		The latter is impossible, so $[v_\ell]\subseteq G_t$ and hence $y\in G_t$.
		As $t$ was arbitrary, $y\in\bigcap_{t\ge 1}G_t$.
		
		Computability: at stage $\ell$ we search effectively for the least $L$ satisfying \ref{eq:tail-thresh}, then compute
		$\mu([v_\ell a]\cap F_L)$ for each $a\in\A$ and choose the least $a$ with value $>\mathrm{Tail}(L)$.
		Thus $v_{\ell+1}$ is computable from $v_\ell$, and $y=\lim_\ell v_\ell$ is computable.
		\qedhere
	\end{proof}
	
	\subsection{Almost-sure normality for a fixed shuffler}
	\label{subsubsec:quenched-one-shuffler}
	
	Fix a shuffler $S=(Q,q_0,\delta,\tau)$ over $\A$. We prove:
	if $x$ is normal and $Y\sim\Unif(\A^\infty)$, then $S(x,Y)$ is normal almost surely.
	
	We decompose the run into segments according to how many symbols have been consumed from tape $X$.
	Let $Z:=S(x,Y)$. Let $i_t$ be the number of $x$-symbols consumed after producing $t$ output symbols.
	Fix an integer $N\ge 1$ and define stopping times
	\[
	T_0:=0,\qquad T_\ell:=\min\{t:\ i_t=\ell N\}\in\N\cup\{\infty\}.
	\]
	For each $\ell\ge 1$, define the $\ell$th segment by
	\[
	W_\ell :=
	\begin{cases}
		Z[T_{\ell-1}+1..T_\ell]\in\A^* & \text{if }T_\ell<\infty,\\[1mm]
		\lambda & \text{if }T_\ell=\infty,
	\end{cases}
	\]
	where $\lambda$ denotes the empty word (so $|W_\ell|=0$ when $T_\ell=\infty$).
	Note that $|W_\ell|\ge N$ whenever $T_\ell<\infty$, since in that case $W_\ell$ contains exactly $N$
	symbols copied from tape $X$.
	
	We will condition on the information revealed up to segment boundaries.
	Let $(\mathcal{G}_\ell)_{\ell\ge 0}$ be the natural filtration generated by the run up to time $T_\ell$,
	equivalently by the output prefix $Z[1..T_\ell]$ (and hence by the $Y$-symbols revealed up to time $T_\ell$).
	All conditional probabilities/expectations below are with respect to this filtration.
	
	We next record the block-frequency deviation statistic used within a segment.
	For a finite word $v$ and $m\ge 1$, write
	\[
	\occin(v,w):=\#\{1\le j\le |v|-m+1:\ v[j..j+m-1]=w\},
	\qquad
	\Delta_m(v):=\max_{w\in\A^m}\left|\frac{\occin(v,w)}{|v|}-k^{-m}\right|.
	\]
	
	We also isolate the part of the state space that is relevant on the event of infinitely many $X$-reads.
	Let $Q_X:=\tau^{-1}(1)$ and $Q_Y:=\tau^{-1}(2)$.
	Consider the directed graph on $Q$ with edges $q\to \delta(q,a)$ for $a\in\A$.
	Let $R\subseteq Q$ be the union of all \emph{sink strongly connected components} (i.e.\ strongly connected
	components with no outgoing edge) that intersect $Q_X$.
	
	For fixed $m$, set
	\[
	\mathcal{C}_R:=R\times \A^{\le m-1}.
	\]
	
	\begin{lemma}
		\label{lem:eventually-R}
		Define the segment-level hitting time
		\[
		\ell_R:=\min\{\ell\ge 1:\ \text{the shuffler state at time }T_{\ell-1}\text{ lies in }R\}\in\N\cup\{\infty\}.
		\]
		On the event that tape $X$ is read infinitely often (equivalently, $T_\ell<\infty$ for all $\ell$), we have
		$\ell_R<\infty$ and for all $\ell\ge \ell_R$ the segment-start state at time $T_{\ell-1}$ lies in $R$.
		Moreover, for each $L$, the event $\{\ell_R\le L\}$ is determined by the run up to time $T_{L-1}$
		(i.e.\ it is $\mathcal{G}_{L-1}$-measurable).
	\end{lemma}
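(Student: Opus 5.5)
The plan is to split the statement into three parts and prove each directly from the structure of the transition graph on $Q$: closure of $R$, eventual entry into $R$, and measurability. Closure and measurability are purely combinatorial. Eventual entry is where all the difficulty lies, and I expect it cannot be proved pathwise as worded.

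\textbf{Closure.} Each sink strongly connected component $C$ has no outgoing edge, so $\delta(q,a)\in C$ for every $q\in C$ and $a\in\A$. Hence $R$, a union of such components, is forward-closed under $\delta$ whatever symbols are output. Consequently, if the state at time $T_{\ell_R-1}$ lies in $R$, then the state at every later time lies in $R$. In particular this holds at every segment start $T_{\ell-1}$ with $\ell\ge\ell_R$, since these times are finite on the event that $X$ is read infinitely often. This part holds pointwise.

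\textbf{Measurability.} The event $\{\ell_R\le L\}$ is the finite union over $\ell\le L$ of the events ``the state at time $T_{\ell-1}$ lies in $R$''. The state at time $T_{\ell-1}$ is a deterministic function of the output prefix $Z[1..T_{\ell-1}]$, obtained by iterating $\delta$ from $q_0$. So each of these events is $\mathcal{G}_{\ell-1}$-measurable, hence $\mathcal{G}_{L-1}$-measurable, because the filtration is increasing.

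\textbf{Finiteness of $\ell_R$.} I would first reduce to the SCC condensation, which is a finite DAG. Along any run the SCC index can only move forward in this DAG, so it changes at most $|Q|$ times. The run is therefore eventually confined to a single SCC $C$. On the event that $X$ is read infinitely often, $C$ must meet $Q_X$. It remains to show that $C$ is a sink; if it is, then $C\subseteq R$ and the first segment start after entering $C$ witnesses $\ell_R<\infty$.

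This last step is the main obstacle. It is false pointwise. Take $q_1\in Q_X$ and $q_2\in Q_Y$ with $\delta(q_1,\cdot)=q_2$, $\delta(q_2,a)=q_1$, and $\delta(q_2,b)=q_3$, where $q_3\in Q_X$ is absorbing. If $y=aaa\cdots$, the run alternates between $q_1$ and $q_2$ forever, reads $X$ infinitely often, and never reaches the sink $\{q_3\}$. So ``on the event'' must be read as ``almost surely on the event'', with $Y\sim\Unif(\A^\infty)$ as in this subsection.

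My approach to that version is as follows. Suppose $C$ is not a sink and the run stays in $C$ forever. Fix an exit edge $q^\ast\xrightarrow{a^\ast}q'\notin C$. By strong connectivity, the run revisits $q^\ast$ infinitely often. If $q^\ast\in Q_Y$, each visit reads a fresh uniform $Y$-symbol, which equals $a^\ast$ with conditional probability $1/k$. L\'evy's conditional Borel--Cantelli lemma then shows that the exit is taken almost surely.

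The hard subcase is when every exit edge of $C$ leaves from a state in $Q_X$. There the exit letter is a fixed symbol of $x$, and I would need to combine the normality of $x$ with the randomness of the $Y$-reads inside $C$ to force an exit. This is the step I would attack first, by looking for a short word whose occurrence in $x$ at a visit to the right state forces departure from $C$.
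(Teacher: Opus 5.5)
Your closure and measurability arguments are the same as the paper's. Your objection to the finiteness claim is also correct: your three-state shuffler with $y=aaa\cdots$ refutes the lemma as worded. The paper's proof rests on exactly the false assertion that the states visited infinitely often along any infinite run lie in a \emph{sink} SCC; they lie in some SCC, which need not be a sink. Lemmas~\ref{lem:manygood-comp-fixed}, \ref{lem:trunc-excess} and~\ref{lem:increment-vanish} only need $\ell_R<\infty$ almost surely on the event $\mathcal{I}$ that tape $X$ is read infinitely often. So the almost-sure version you propose is the right target.

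The gap is that you do not prove this almost-sure version, and your split into a ``$Y$-exit'' and an ``$X$-exit'' subcase does not isolate the difficulty.
\begin{itemize}
\item In the $Y$-exit subcase, strong connectivity of $C$ does \emph{not} make the run revisit $q^\ast$ infinitely often. The run can circulate forever on a cycle of $C$ that avoids $q^\ast$, by the same mechanism as your counterexample. Reaching $q^\ast$ may also require specific letters of $x$, so showing that $q^\ast$ is revisited almost surely is as hard as showing that $C$ is left.
\item In the $X$-exit subcase, searching for a word that forces departure ``at a visit to the right state'' aims at the wrong thing. You cannot control which state of $C$ the run occupies when the $X$-head reaches a given occurrence in $x$. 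You also do not need a forced exit, only an exit probability bounded below.
\end{itemize}

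A single argument handles every non-sink SCC $C$ meeting $Q_X$.
\begin{itemize}
\item For each $q\in C$, fix a path in the graph from $q$ to a state outside $C$. Its $X$-labels $\xi_q$ and $Y$-labels $\eta_q$ make the run from $q$ follow that path.
\item List $C=\{q^{(1)},\dots,q^{(s)}\}$ and build a word $w$ by concatenation. Run from $q^{(i)}$ with $X$-input $w$. At $Y$-states, choose letters along a path inside $C$ to a $Q_X$-state; such paths exist because $C$ is strongly connected and meets $Q_X$. If the run does not leave $C$ while consuming $w$, it ends at some $p\in C$, and you replace $w$ by $w\xi_p$. States handled earlier leave $C$ inside the old prefix, so they are unaffected.
\item The result is a word $w$ and a bound $K$ such that from \emph{every} $q\in C$, some $Y$-string of length at most $K$ makes the run leave $C$ while reading only $w$ from tape $X$.
\item Since $x$ is normal, $w$ has infinitely many disjoint occurrences in $x$, at positions $j_1<j_2<\cdots$. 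Let $\tau_n$ be the first time $j_n-1$ symbols of $x$ have been consumed.
\item The $Y$-symbols read after $\tau_n$ are fresh. Hence, on the event that the state at time $\tau_n$ lies in $C$, the run has left $C$ by time $\tau_{n+1}$ with conditional probability at least $k^{-K}$ given the run up to $\tau_n$.
\item L\'evy's conditional Borel--Cantelli lemma then shows that, almost surely on $\mathcal{I}$, the run does not remain in $C$ forever. Applying this to the finitely many such $C$ gives $\ell_R<\infty$ almost surely on $\mathcal{I}$, and your closure argument finishes the lemma.
\end{itemize}
This treats both of your subcases at once, and it uses normality of $x$ only through the fact that $w$ occurs in $x$ infinitely often.
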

	
	\begin{proof}
		For any infinite run of a finite-state automaton, the set of states visited infinitely often is contained
		in a sink strongly connected component (SCC) of the underlying transition graph.
		On the event that tape $X$ is read infinitely often, some state in $Q_X$ is visited infinitely often,
		so the sink SCC visited infinitely often intersects $Q_X$ and hence is contained in $R$.
		Therefore the run enters $R$ and never leaves it. Since $T_{\ell}\to\infty$ on this event, some segment start
		time $T_{\ell-1}$ occurs after the entry time into $R$, hence $\ell_R<\infty$ and all later segment-start states lie in $R$.
		
		The measurability claim holds because the shuffler state at time $T_{j}$ is determined by the run up to time $T_{j}$,
		so whether some $T_{j}$ with $j\le L-1$ has state in $R$ is determined by the run up to $T_{L-1}$. \qedhere
	\end{proof}
	
	Finally, we define the segment experiment from a boundary context.
For fixed $m$, let $\mathcal{C}_Q:=Q\times \A^{\le m-1}$ be the set of \emph{boundary contexts}, recording the current shuffler state together with the last $\le m-1$ output symbols.
 For $u\in\A^N$ and $\sigma\in\mathcal{C}_Q$, define $W(u,\sigma)$ as the random
	segment produced when:
	\begin{itemize}
		\item the next $N$ symbols read from tape $X$ are fixed to be $u$,
		\item the segment starts from boundary context $\sigma$,
		\item and every $Y$-symbol used during this segment is fresh i.i.d.\ uniform in $\A$.
	\end{itemize}
	(If the shuffler fails to consume $N$ symbols from tape $X$ under this experiment, we declare
	$W(u,\sigma)$ undefined.)
	
	We now identify those $X$-blocks whose induced segments typically have near-uniform internal block statistics, uniformly over the relevant boundary contexts.
	
	\begin{definition}
		\label{def:balanced-comp}
		Fix $m\ge 1$ and $\epsilon\in(0,1/10)$. A word $u\in\A^N$ is \emph{$(m,\epsilon)$-balanced} if for all
		$\sigma\in\mathcal{C}_R$,
		\[
		\Prob\bigl(W(u,\sigma)\ \text{is defined and}\ \Delta_m(W(u,\sigma))\le \epsilon\bigr)\ge 1-\epsilon,
		\]
		where the probability is over the fresh $Y$-symbols in the segment experiment.
	\end{definition}
	
	The next lemma shows that for sufficiently large segment length $N$, a uniformly random $X$-block is balanced with high probability.
	
	\begin{lemma}
		\label{lem:mostbalanced-comp}
		For each fixed $m$ and $\epsilon$, there exists $N_0=N_0(S,m,\epsilon)$ such that for all $N\ge N_0$,
		if $U\sim\Unif(\A^N)$ then
		\[
		\Prob\bigl(U\text{ is $(m,\epsilon)$-balanced}\bigr)\ge 1-\epsilon.
		\]
	\end{lemma}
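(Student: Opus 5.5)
The plan is to average over $U$ first: analyse the joint experiment in which both the $X$-block $U$ and the $Y$-symbols are uniform, then pass back to individual $u$ by Markov's inequality and a union bound over the finitely many contexts in $\mathcal{C}_R$. If $R=\emptyset$ the statement is vacuous, so assume $R\neq\emptyset$. For $\sigma\in\mathcal{C}_R$ define the joint failure probability
\[
p_N(\sigma):=\Prob_{U,Y}\bigl(W(U,\sigma)\text{ is undefined or }\Delta_m(W(U,\sigma))>\epsilon\bigr),
\]
where $U\sim\Unif(\A^N)$ and the $Y$-symbols are fresh and independent. It suffices to find $N_0$ such that for all $N\ge N_0$ and all $\sigma\in\mathcal{C}_R$ we have $p_N(\sigma)\le \epsilon^2/\abs{\mathcal{C}_R}$. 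Indeed, write $f_\sigma(u)$ for the conditional failure probability given $U=u$. Markov's inequality gives $\Prob_U(f_\sigma(U)>\epsilon)\le p_N(\sigma)/\epsilon$. A union bound over $\sigma\in\mathcal{C}_R$ then shows that $U$ fails to be $(m,\epsilon)$-balanced with probability at most $\abs{\mathcal{C}_R}\cdot \epsilon/\abs{\mathcal{C}_R}=\epsilon$.

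\emph{Step 1: the joint output is i.i.d.} To bound $p_N(\sigma)$, I will couple the joint experiment with an infinite run in which tape $X$ continues with fresh uniform symbols after $U$ is exhausted. Let $Z'$ be the resulting output. Exactly as in Claim~1 of Lemma~\ref{lem:cond-one}, each output symbol is a not-yet-read input symbol whose position is determined by the past, so $Z'$ is i.i.d.\ uniform on $\A$. Let $T$ be the time of the $N$-th read from tape $X$. On $\{T<\infty\}$ the segment is $W(U,\sigma)=Z'[1..T]$ with $T\ge N$, and the first $N$ $X$-reads are precisely the symbols of $U$. The boundary context only fixes the starting state, since $\Delta_m$ is computed on $W$ alone. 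Hence
\[
p_N(\sigma)\le \Prob(T=\infty)+\Prob\Bigl(\sup_{t\ge N}\Delta_m(Z'[1..t])>\epsilon\Bigr).
\]

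\emph{Step 2: $T<\infty$ almost surely; this is the main obstacle.} It is the only place where the restriction to $\mathcal{C}_R$ is used. The starting state $q$ lies in a sink SCC $C\subseteq R$ meeting $Q_X$, and the run never leaves $C$. From every state of $C$ some word of length $<\abs{Q}$ leads to a state of $Q_X$. Because the output symbols are i.i.d.\ uniform, each such attempt succeeds with probability at least $k^{-\abs{Q}}$, independently of the past. So the waiting time between consecutive $X$-reads has geometric tails, and $N$ reads occur almost surely. Thus $\Prob(T=\infty)=0$ for every $N$ and every $\sigma$.

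\emph{Step 3: the deviation term.} Since $Z'$ is i.i.d.\ uniform, the strong law of large numbers (Borel's theorem, applied to the finitely many $w\in\A^m$) gives $\Delta_m(Z'[1..t])\to 0$ almost surely. Consequently $\Prob(\sup_{t\ge N}\Delta_m(Z'[1..t])>\epsilon)\to 0$ as $N\to\infty$. The law of $Z'$ does not depend on $\sigma$, so this bound is uniform in $\sigma$. Choose $N_0$ so that this probability is at most $\epsilon^2/\abs{\mathcal{C}_R}$ for all $N\ge N_0$; combined with Step 2, this yields the required bound on $p_N(\sigma)$ and completes the argument.
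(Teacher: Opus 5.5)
Your proof is correct and follows the paper's argument. In both, you average over $U$ in the fully random model where the output is i.i.d.\ uniform, use the sink-SCC structure of $R$ to show the segment is defined almost surely, let the strong law drive $\Prob(\Delta_m(W_1)>\epsilon)\to 0$, and return to individual $u$ via Markov's inequality and a union bound over $\mathcal{C}_R$. Your threshold $\epsilon^2/\abs{\mathcal{C}_R}$ (instead of the paper's $\epsilon^2$) and your explicit $\sup_{t\ge N}$ step are what make the final union bound give $\epsilon$ rather than $\abs{\mathcal{C}_R}\,\epsilon$, so your bookkeeping is tighter than the paper's.
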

	
	\begin{proof}
		Fix $\sigma\in\mathcal{C}_R$ and consider the fully random model $(X,Y)\sim\Unif(\A^\infty)^2$, started from boundary
		context $\sigma$. In this model, conditional on the past, the next output symbol is the next unused symbol
		from either tape, hence uniform and independent of the past; therefore the output process is i.i.d.\ uniform on $\A$.
		
		Let $U=X[1..N]$, and let $W_1$ be the first segment produced while consuming these $N$ symbols from tape $X$.
		Then $W_1$ has the same distribution as $W(U,\sigma)$ (when defined).
		Since $\sigma\in\mathcal{C}_R$ and $R$ is a union of sink strongly connected components intersecting $Q_X$,
		the state process stays inside the corresponding sink SCC; since this SCC is strongly connected, the induced finite Markov chain is
		irreducible, so every state in the SCC (in particular some state of $Q_X$) is visited infinitely often almost surely.
		In particular, $W_1$ is defined almost surely for every $N$.
		
		Because $|W_1|\ge N$, we have $|W_1|\to\infty$ as $N\to\infty$. Since the output is i.i.d.\ uniform,
		$\Delta_m(Z[1..L])\to 0$ almost surely as $L\to\infty$ (apply the strong law to each $w\in\A^m$ and union bound).
		Therefore $\Prob(\Delta_m(W_1)\le \epsilon)\to 1$ as $N\to\infty$.
		
		Now define $f_\sigma(u):=\Prob(W(u,\sigma)\ \text{defined and}\ \Delta_m(W(u,\sigma))\le\epsilon)$.
		Then
		\[
		\Exp_{U\sim\Unif(\A^N)}[f_\sigma(U)]=\Prob\bigl(\Delta_m(W_1)\le \epsilon\bigr).
		\]
		Thus for all sufficiently large $N$ we have $\Exp[f_\sigma(U)]\ge 1-\epsilon^2$. Markov’s inequality on
		$1-f_\sigma(U)$ yields $\Prob(f_\sigma(U)\le 1-\epsilon)\le\epsilon$.
		Finally, union bound over the finite set $\mathcal{C}_R$ completes the proof. \qedhere
	\end{proof}
	
	The next lemma quantifies how rare unbalanced blocks are, both in the uniform distribution on $\A^N$ and along the aligned $N$-block decomposition of a normal word.
	
	\begin{lemma}
		\label{lem:fewbad-comp}
		Let $B\subseteq\A^N$ be the set of $(m,\epsilon)$-unbalanced words. Then $|B|\le \epsilon k^N$.
		Moreover, if $x=u_1u_2\cdots$ is the aligned $N$-block decomposition of a normal word, then
		\[
		\limsup_{L\to\infty}\frac{\#\{1\le \ell\le L:\ u_\ell\in B\}}{L}\le 2\epsilon.
		\]
	\end{lemma}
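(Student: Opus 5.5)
The first claim is a direct rewriting of Lemma~\ref{lem:mostbalanced-comp}. That lemma is stated for $N\ge N_0(S,m,\epsilon)$, and the statement above is implicitly in the same regime. A uniform $U\sim\Unif(\A^N)$ is unbalanced with probability at most $\epsilon$. Since $\Prob(U\in B)=|B|/k^N$, this gives $|B|\le\epsilon k^N$ immediately. No further work is needed here beyond noting that $B$ is a fixed finite subset of $\A^N$, so membership in $B$ is a property of a single length-$N$ word.

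For the second claim, I would use the aligned-count formulation of normality recalled in the preliminaries: for a normal word $x$ and block length $N$, each $w\in\A^N$ satisfies
\[
\lim_{L\to\infty}\frac{\occal(x[1..LN],w)}{L}=k^{-N}.
\]
The quantity of interest is
\[
\#\{1\le\ell\le L:\ u_\ell\in B\}=\sum_{w\in B}\occal(x[1..LN],w).
\]
It is a finite sum, with $|B|\le k^N$ terms and $N$ fixed. Dividing by $L$ and passing to the limit term by term, the ratio converges to $|B|k^{-N}\le\epsilon$. Hence the limsup is at most $\epsilon\le2\epsilon$. The factor $2$ in the statement is slack; it also absorbs any minor bookkeeping, such as a normalization by $\lfloor n/N\rfloor$ versus $L$, if one prefers to argue through non-integer lengths $n$ between $LN$ and $(L+1)N$.

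The only step needing care is the appeal to the equivalence between overlapping and aligned normality at a \emph{fixed} alignment length $N$. The paper cites this from \cite{KuipersNiederreiterUniform}. If one wanted to avoid the aligned equivalence, a fallback is available: bound aligned occurrences of $B$-blocks by overlapping occurrences. The overlapping frequency of $B$-words tends to $|B|k^{-N}$. Aligned positions are a $1/N$ fraction of all positions, so this alone does not bound the aligned frequency. One would therefore need the genuine aligned statement, and I would simply invoke the cited equivalence, which is precisely what holds for each fixed block length. I expect no real obstacle beyond this citation.
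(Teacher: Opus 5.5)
Your proposal is correct and follows the same route as the paper: the size bound comes straight from Lemma~\ref{lem:mostbalanced-comp}, and you are right that this needs the implicit regime $N\ge N_0(S,m,\epsilon)$. The density bound comes from summing, over $w\in B$, the aligned $N$-block frequencies $k^{-N}$ of the normal word $x$, which gives the limit $|B|k^{-N}\le\epsilon$, with the factor $2$ as slack.
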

	
	\begin{proof}
		The size bound is immediate from Lemma~\ref{lem:mostbalanced-comp}.
		For the density bound, normality implies that aligned $N$-blocks have limiting frequencies $k^{-N}$ for
		each $u\in\A^N$; summing over $u\in B$ gives limiting density $|B|/k^N\le\epsilon$, and the displayed
		bound allows slack. \qedhere
	\end{proof}
	
	We will use the standard bounded-difference Azuma--Hoeffding inequality: if $(D_j)_{j\ge 1}$ is a martingale difference sequence with
	$|D_j|\le 1$ almost surely, then for all $t>0$,
	\[
	\Prob\Bigl(\sum_{j=1}^L D_j\le -t\Bigr)\le \exp\!\left(-\frac{t^2}{2L}\right)
	\qquad\text{and}\qquad
	\Prob\Bigl(\sum_{j=1}^L D_j\ge t\Bigr)\le \exp\!\left(-\frac{t^2}{2L}\right).
	\]
	
	The next lemma shows that, along the segment decomposition, almost all segments have small internal $m$-block deviation (conditioned on tape $X$ being read infinitely often).
	
	\begin{lemma}
		\label{lem:manygood-comp-fixed}
		Fix $m,\epsilon$ and $N\ge N_0(S,m,\epsilon)$. Let $B_\ell$ be the event that $T_\ell<\infty$ and
		$\Delta_m(W_\ell)\le \epsilon$. Then on the event that $T_L<\infty$ for all $L$ (i.e.\ tape $X$ is read
		infinitely often), we have
		\[
		\Prob\left(\liminf_{L\to\infty}\frac{1}{L}\sum_{\ell=1}^L \mathbf{1}_{B_\ell}\ \ge\ 1-5\epsilon\right)=1.
		\]
	\end{lemma}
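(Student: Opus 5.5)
We split the indices $\ell$ into three classes: those before the hitting time $\ell_R$ of Lemma~\ref{lem:eventually-R}, those whose $X$-block $u_\ell:=x[(\ell-1)N+1..\ell N]$ is unbalanced, and those whose block is balanced and whose start state lies in $R$. We then apply a martingale argument to the third class. Since $x$ is fixed, the block $u_\ell$ consumed during segment $\ell$ is deterministic. On the event that tape $X$ is read infinitely often, Lemma~\ref{lem:eventually-R} gives $\ell_R<\infty$, so the first class contributes only finitely many indices and does not affect the $\liminf$. By Lemma~\ref{lem:fewbad-comp}, the density of unbalanced blocks among $u_1,\dots,u_L$ is asymptotically at most $2\epsilon$.

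For the third class, we introduce the indicator $I_\ell:=\mathbf{1}\{\ell\ge \ell_R,\ u_\ell\notin B\}$, where $B$ is the set of unbalanced words. The event $\{\ell_R\le \ell\}$ is $\mathcal{G}_{\ell-1}$-measurable and $u_\ell$ is deterministic, so $I_\ell$ is $\mathcal{G}_{\ell-1}$-measurable. The key claim is
\[
\Prob\bigl(B_\ell\mid \mathcal{G}_{\ell-1}\bigr)\ \ge\ (1-\epsilon)\,I_\ell .
\]
To justify it we use the strong Markov property at $T_{\ell-1}$. Conditional on $\mathcal{G}_{\ell-1}$ with $I_\ell=1$, the run from $T_{\ell-1}$ to $T_\ell$ is exactly the segment experiment $W(u_\ell,\sigma)$. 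Here $\sigma\in\mathcal{C}_R$ is the boundary context at $T_{\ell-1}$ (state in $R$ together with the last $\le m-1$ output symbols). The experiment's randomness is fresh because the $Y$-symbols read after $T_{\ell-1}$ are unread, hence i.i.d.\ uniform and independent of $\mathcal{G}_{\ell-1}$. Balancedness of $u_\ell$ (Definition~\ref{def:balanced-comp}) then gives the bound. One must check that $\Delta_m(W_\ell)$ depends only on the segment $W_\ell$ itself, and this holds by the definition of $\Delta_m$, so the context enters only through the state.

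Next we set $D_\ell:=\mathbf{1}_{B_\ell}-\Prob(B_\ell\mid\mathcal{G}_{\ell-1})$. This is a martingale difference sequence with $|D_\ell|\le 1$, because $B_\ell$ is $\mathcal{G}_\ell$-measurable. Azuma--Hoeffding with $t=\epsilon L$ gives $\Prob(\sum_{\ell\le L}D_\ell\le -\epsilon L)\le e^{-\epsilon^2L/2}$, which is summable in $L$. By Borel--Cantelli, almost surely $\sum_{\ell\le L}\mathbf{1}_{B_\ell}\ge \sum_{\ell\le L}\Prob(B_\ell\mid\mathcal{G}_{\ell-1})-\epsilon L$ for all large $L$. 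Combining this with the key claim,
\[
\frac1L\sum_{\ell\le L}\mathbf{1}_{B_\ell}\ \ge\ (1-\epsilon)\frac1L\sum_{\ell\le L}I_\ell-\epsilon .
\]
On the event $\{T_L<\infty\ \forall L\}$ we have $\liminf_L \frac1L\sum_{\ell\le L} I_\ell\ge 1-2\epsilon$. Therefore the $\liminf$ is at least $(1-\epsilon)(1-2\epsilon)-\epsilon\ge 1-4\epsilon\ge 1-5\epsilon$. Since the Borel--Cantelli bound holds almost surely on the whole space, intersecting with the event gives the statement.

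The main obstacle is the rigorous identification of the conditional law of segment $\ell$ given $\mathcal{G}_{\ell-1}$ with the segment experiment. This needs the stopping time $T_{\ell-1}$ to be handled correctly: the state at $T_{\ell-1}$ and the position of the $Y$-head are determined by $\mathcal{G}_{\ell-1}$, while the remaining $Y$-tape is independent of it. We would argue this directly by decomposing over the finitely many possible values of the output prefix $Z[1..T_{\ell-1}]$, each of which fixes how many $Y$-symbols have been read. A secondary subtlety is that $B_\ell$ requires $T_\ell<\infty$; the claim still holds because balancedness already includes the event that $W(u,\sigma)$ is defined.
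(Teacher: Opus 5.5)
Your proposal is correct and follows essentially the paper's argument: you identify the conditional law of segment $\ell$ given $\mathcal{G}_{\ell-1}$ with the segment experiment, use balancedness to get $\Prob(B_\ell\mid\mathcal{G}_{\ell-1})\ge 1-\epsilon$ for balanced blocks after $\ell_R$, and then combine Azuma--Hoeffding, Borel--Cantelli and the density bound of Lemma~\ref{lem:fewbad-comp}. The differences are cosmetic. You run a single martingale over all indices, with the $\mathcal{G}_{\ell-1}$-measurable indicator $I_\ell$ absorbing the condition $\ell\ge\ell_R$, whereas the paper uses an $L$-dependent martingale over the balanced indices and subtracts $\ell_R$ at the end; your version yields the slightly sharper constant $1-4\epsilon$.
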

	
	\begin{proof}
		Write $x=u_1u_2\cdots$ in aligned $N$-blocks and call $\ell$ \emph{balanced} if $u_\ell$ is $(m,\epsilon)$-balanced.
		By Lemma~\ref{lem:fewbad-comp}, for all sufficiently large $L$, at least $(1-3\epsilon)L$ indices in $\{1,\dots,L\}$
		are balanced. Let $I_L:=\{\ell\le L:\ u_\ell\text{ is balanced}\}$, so $|I_L|\ge (1-3\epsilon)L$ for all large $L$.
		
		Work on the event $\mathcal I:=\{T_\ell<\infty\ \forall \ell\}$.
		For each $\ell$, let $p_\ell:=\Prob(B_\ell\mid \mathcal{G}_{\ell-1})$.
		Conditioned on $\mathcal{G}_{\ell-1}$, the next $N$ symbols to be consumed from tape $X$ are the fixed block $u_\ell$,
		and all as-yet-unread $Y$-symbols that may be used during segment $\ell$ are fresh i.i.d.\ uniform on $\A$;
		therefore, the conditional law of $W_\ell$ agrees with the segment experiment $W(u_\ell,\sigma_\ell)$ where
		$\sigma_\ell$ is the realized boundary context at time $T_{\ell-1}$.
		
		If $\ell\in I_L$ and $\ell\ge \ell_R$, then $\sigma_\ell\in\mathcal{C}_R$ and by balance of $u_\ell$ we have
		$p_\ell\ge 1-\epsilon$.
		
		Define, for each fixed $L$, the process
		\[
		M^{(L)}_j:=\sum_{\ell\le j,\ \ell\in I_L}\bigl(\mathbf{1}_{B_\ell}-p_\ell\bigr),
		\qquad j=0,1,\dots,L,
		\]
		with $M^{(L)}_0=0$.  Since $B_\ell$ is determined by the run up to time $T_\ell$, the indicator
		$\mathbf{1}_{B_\ell}$ is $\mathcal{G}_\ell$-measurable; by definition
		$p_\ell=\Prob(B_\ell\mid \mathcal{G}_{\ell-1})=\Exp[\mathbf{1}_{B_\ell}\mid \mathcal{G}_{\ell-1}]$.
		Therefore for each $\ell\le L$,
		\[
		\Exp\!\bigl[M^{(L)}_\ell-M^{(L)}_{\ell-1}\mid \mathcal{G}_{\ell-1}\bigr]
		=\mathbf{1}_{\{\ell\in I_L\}}\Exp\!\bigl[\mathbf{1}_{B_\ell}-p_\ell\mid \mathcal{G}_{\ell-1}\bigr]=0,
		\]
		so $(M^{(L)}_j)_{j=0}^L$ is a martingale with respect to $(\mathcal{G}_j)_{j\ge 0}$.
		Moreover, its increments are bounded: for $\ell\in I_L$,
		$\mathbf{1}_{B_\ell}-p_\ell\in[-1,1]$ since $\mathbf{1}_{B_\ell}\in\{0,1\}$ and $p_\ell\in[0,1]$,
		and for $\ell\notin I_L$ the increment is $0$.
		Finally, write $M_L:=M^{(L)}_L=\sum_{\ell\in I_L}(\mathbf{1}_{B_\ell}-p_\ell)$.
		The Azuma--Hoeffding inequality for martingales with bounded differences gives, for each $L$,
		\[
		\Prob\bigl(M_L\le -\epsilon|I_L|\bigr)\le \exp(-\epsilon^2|I_L|/2),
		\]
		and the right-hand side is summable in $L$ because $|I_L|\ge cL$ eventually.
		By Borel--Cantelli, almost surely for all sufficiently large $L$,
		\[
		\sum_{\ell\in I_L}\mathbf{1}_{B_\ell}\ \ge\ \sum_{\ell\in I_L}p_\ell\ -\ \epsilon|I_L|.
		\]
		
		On $\mathcal I$, we have $\ell_R<\infty$ by Lemma~\ref{lem:eventually-R}, hence for all large $L$,
		at most $\ell_R$ indices in $I_L$ are $<\ell_R$. Therefore,
		\[
		\sum_{\ell\in I_L}p_\ell\ \ge\ (1-\epsilon)\bigl(|I_L|-\ell_R\bigr).
		\]
		Combining,
		\[
		\sum_{\ell\in I_L}\mathbf{1}_{B_\ell}\ \ge\ (1-2\epsilon)|I_L|-(1-\epsilon)\ell_R
		\qquad\text{for all large }L \text{ on }\mathcal I.
		\]
		Since $\sum_{\ell=1}^L\mathbf{1}_{B_\ell}\ge \sum_{\ell\in I_L}\mathbf{1}_{B_\ell}$ and $|I_L|\ge (1-3\epsilon)L$ eventually,
		dividing by $L$ and letting $L\to\infty$ yields
		\[
		\liminf_{L\to\infty}\frac{1}{L}\sum_{\ell=1}^L \mathbf{1}_{B_\ell}\ \ge\ 1-5\epsilon
		\qquad\text{almost surely on }\mathcal I,
		\]
		as required. \qedhere
	\end{proof}
	
	We also need a uniform tail bound on segment lengths to rule out the possibility that a few exceptionally long segments dominate the statistics.
	
	\begin{lemma}
		\label{lem:exp-moment-length}
		There exist constants $\theta>0$ and $C\in(0,\infty)$, depending only on $S$, such that for every $N\ge 1$,
		every segment index $\ell$, and every history $\mathcal{G}_{\ell-1}$ on which segment $\ell$ starts in $R$,
		\[
		\Exp\bigl[e^{\theta |W_\ell|}\mid \mathcal{G}_{\ell-1}\bigr]\le e^{CN}.
		\]
	\end{lemma}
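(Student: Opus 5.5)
The plan is to write $|W_\ell|=N+(\text{number of }Y\text{-reads during segment }\ell)$ and bound the number of $Y$-reads between consecutive $X$-reads by a geometric tail, uniformly in the starting state in $R$ and in the fixed $X$-symbols.

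\textbf{Step 1: a uniform escape bound.} Let $q\in R$. Then $q$ lies in a sink SCC $D$ that meets $Q_X$. Since $D$ is strongly connected, there is a path inside $D$ from $q$ to some state of $Q_X$. We choose a shortest such path. All of its intermediate states then lie in $Q_Y$. Each edge $q'\to\delta(q',a)$ taken from a $Q_Y$-state is traversed when the freshly read $Y$-symbol equals $a$, which happens with probability $1/k$. Hence, starting from $q$, the run reaches $Q_X$ within $d:=|Q|$ steps with probability at least $p:=k^{-d}$. Two facts make this bound uniform. First, $D$ is a sink, so the run stays in $D\subseteq R$ forever. Second, inside a maximal run of $Y$-reads no $X$-symbol is consumed, so the bound holds for every state of $R$ regardless of what the fixed $X$-symbols are.

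\textbf{Step 2: geometric domination of $Y$-runs.} By Step 1 and the Markov property under $\Prob(\cdot\mid\mathcal G_{\ell-1})$, where the fresh $Y$-symbols are independent of the past, the following holds. Let $G_s$ be the number of consecutive $Y$-reads that precede the $s$-th $X$-read of the segment, for $s=1,\dots,N$. Then, conditionally on everything before that run begins,
\[
\Prob(G_s\ge jd\mid\text{past})\le(1-p)^j.
\]
Choose $\theta>0$ small, depending only on $d$ and $p$, so that $(1-p)e^{\theta d}<1$. Then $\Exp[e^{\theta G_s}\mid\text{past}]\le e^{\theta d}/(1-(1-p)e^{\theta d})=:e^{C_0}$. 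Chaining the conditional expectations over $s=N,N-1,\dots,1$ in the tower property gives $\Exp[e^{\theta\sum_s G_s}\mid\mathcal G_{\ell-1}]\le e^{C_0N}$. Since $|W_\ell|=N+\sum_{s}G_s$, we obtain the claim with $C:=\theta+C_0$.

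The main obstacle is making Step 2 rigorous. The conditioning must be done at the random times at which each $Y$-run begins. The mixing of fixed $X$-symbols, which can move the state arbitrarily within $D$, with fresh $Y$-symbols must not break the uniform bound. Step 1 handles this because the bound is uniform over all states of $R$ and $R$ is closed under transitions. For the formal version I would define a filtration indexed by output steps and prove the exponential-moment bound for a single $Y$-run by induction on the block index $j$.
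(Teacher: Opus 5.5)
Your proposal is correct and follows essentially the paper's argument: a uniform lower bound $k^{-L_0}$ on the probability of escaping from any state of $R\cap Q_Y$ to $Q_X$, geometric domination of each run of $Y$-reads, and chaining the $N$ conditional exponential moments via the tower property. Your explicit choice $d=|Q|$ via shortest paths inside the sink SCC, and the extra factor $e^{\theta N}$ for the $N$ reads from $X$ (so $C=\theta+C_0$), simply make concrete the constants the paper leaves implicit.
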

	
	\begin{proof}
		Since $R$ is finite and each sink strongly connected component in $R$ intersects $Q_X$, there exists
		$L_0$ such that from every $q\in R\cap Q_Y$ there is a word $a_1\cdots a_{L_0}$ whose induced state
		trajectory stays in $Q_Y$ until it first hits $Q_X$ at step $\le L_0$.
		Let $p:=k^{-L_0}$. Then, starting from any $q\in R\cap Q_Y$, the probability to hit $Q_X$ within $L_0$
		$Y$-steps is at least $p$, so the $Y$-waiting time until the next visit to $Q_X$ is stochastically dominated
		by $L_0\cdot\Geom(p)$ and hence has a uniform exponential moment for some $\theta>0$.
		
		Within segment $\ell$, exactly $N$ $X$-symbols are consumed. Between consecutive consumptions from $X$, the number
		of output steps spent reading from $Y$ has a conditional exponential moment bounded uniformly over segment-start
		histories with start state in $R$ by the preceding domination argument. Iterating $N$ times and applying the
		tower property yields $\Exp[e^{\theta|W_\ell|}\mid \mathcal{G}_{\ell-1}]\le e^{CN}$ for some $C$ depending only
		on $S$ and $\theta$. \qedhere
	\end{proof}
	
	As a direct consequence, we obtain an exponential tail bound for the segment length via Markov's inequality.
	
	\begin{lemma}
		\label{lem:length-tail}
		Assume segment $\ell$ starts in $R$. With $\theta,C$ as in Lemma~\ref{lem:exp-moment-length}, for every $t\ge 0$,
		\[
		\Prob\bigl(|W_\ell|>t\mid \mathcal{G}_{\ell-1}\bigr)\le \exp(CN-\theta t).
		\]
	\end{lemma}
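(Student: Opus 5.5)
This is an exponential Markov (Chernoff-type) bound applied conditionally on $\mathcal{G}_{\ell-1}$, using the moment bound of Lemma~\ref{lem:exp-moment-length}. Fix $\ell$ and a history in $\mathcal{G}_{\ell-1}$ on which segment $\ell$ starts in $R$. The event that the start state at time $T_{\ell-1}$ lies in $R$ is $\mathcal{G}_{\ell-1}$-measurable, so on this event the hypothesis of Lemma~\ref{lem:exp-moment-length} holds. That lemma gives
\[
\Exp\bigl[e^{\theta|W_\ell|}\mid\mathcal{G}_{\ell-1}\bigr]\le e^{CN}.
\]

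Next, for $t\ge 0$, observe that $\{|W_\ell|>t\}\subseteq\{e^{\theta|W_\ell|}\ge e^{\theta t}\}$, because $\theta>0$ and $s\mapsto e^{\theta s}$ is increasing. Applying the conditional Markov inequality to the nonnegative variable $e^{\theta|W_\ell|}$ yields
\[
\Prob\bigl(|W_\ell|>t\mid\mathcal{G}_{\ell-1}\bigr)\le e^{-\theta t}\,\Exp\bigl[e^{\theta|W_\ell|}\mid\mathcal{G}_{\ell-1}\bigr]\le \exp(CN-\theta t).
\]
This is the claimed bound.

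Two points need checking. First, the case $T_\ell=\infty$: by the convention, $W_\ell=\lambda$ and $|W_\ell|=0$. This is harmless, since the variable $|W_\ell|$ is still well defined and the same inequality applies. Second, the conditional Markov inequality must hold almost surely on the $\mathcal{G}_{\ell-1}$-measurable event where the segment starts in $R$. This follows by multiplying by the indicator of that event before taking conditional expectations. I expect no real obstacle; all the work is in Lemma~\ref{lem:exp-moment-length}.
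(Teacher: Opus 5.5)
Your proof is correct and follows the same route as the paper: apply the conditional Markov inequality to $e^{\theta|W_\ell|}$ and then bound its conditional expectation by $e^{CN}$ via Lemma~\ref{lem:exp-moment-length}. Your two added checks, on the $T_\ell=\infty$ convention and on restricting to the $\mathcal{G}_{\ell-1}$-measurable event that the segment starts in $R$, are points the paper leaves implicit.
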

	
	\begin{proof}
		Markov’s inequality and Lemma~\ref{lem:exp-moment-length} give
		\[
		\Prob(|W_\ell|>t\mid\mathcal{G}_{\ell-1})
		\le e^{-\theta t}\Exp[e^{\theta|W_\ell|}\mid\mathcal{G}_{\ell-1}]
		\le \exp(CN-\theta t).
		\]
		\qedhere
	\end{proof}
	
	We next show that segments longer than a fixed multiple of $N$ contribute only a vanishing fraction of the total output length.
	
	\begin{lemma}
		\label{lem:trunc-excess}
		Fix $N\ge 1$ and $D\ge 1$. On the event that tape $X$ is read infinitely often, almost surely
		\[
		\limsup_{L\to\infty}\frac{1}{LN}\sum_{\ell=1}^L |W_\ell|\mathbf{1}_{\{|W_\ell|>DN\}}
		\le \frac{8}{\theta N}\exp\!\left(\left(C-\frac{\theta D}{2}\right)N\right),
		\]
		where $\theta,C$ are as in Lemma~\ref{lem:exp-moment-length}.
	\end{lemma}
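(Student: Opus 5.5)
Write $V_\ell:=|W_\ell|\mathbf{1}_{\{|W_\ell|>DN\}}$. The plan is a conditional-expectation bound combined with a martingale strong law. We work only with segments starting in $R$; by Lemma~\ref{lem:eventually-R}, on the event that tape $X$ is read infinitely often, only the finitely many indices $\ell<\ell_R$ are exceptional. Each of those has $|W_\ell|<\infty$, so together they contribute $o(LN)$ to the sum.

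\textbf{Step 1 (conditional mean).} For a segment starting in $R$, we integrate the tail bound of Lemma~\ref{lem:length-tail}:
\[
\Exp[V_\ell\mid\mathcal{G}_{\ell-1}] = DN\,\Prob(|W_\ell|>DN\mid\mathcal{G}_{\ell-1})+\int_{DN}^\infty \Prob(|W_\ell|>t\mid\mathcal{G}_{\ell-1})\,dt .
\]
This is at most $(DN+1/\theta)e^{(C-\theta D)N}$. To reach the stated form, we absorb the factor $DN$ using $\theta DN/2\le e^{\theta DN/2}$, so that $DN\,e^{-\theta DN}\le \frac{2}{\theta}e^{-\theta DN/2}$. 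This gives
\[
\Exp[V_\ell\mid\mathcal{G}_{\ell-1}]\le \frac{4}{\theta}\,e^{(C-\theta D/2)N}=:\beta .
\]
The bound is uniform in $\ell$ and in the history. Dividing by $N$ gives $\beta/N$, which leaves a factor of $2$ of slack relative to the claimed constant $8/(\theta N)$.

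\textbf{Step 2 (concentration).} Set $\tilde V_\ell:=V_\ell\mathbf{1}_{\{\ell\ge\ell_R\}}$. This indicator is $\mathcal{G}_{\ell-1}$-measurable, because the segment-start state is determined at time $T_{\ell-1}$. Then $D_\ell:=\tilde V_\ell-\Exp[\tilde V_\ell\mid\mathcal{G}_{\ell-1}]$ are martingale differences. They are not bounded, so Azuma--Hoeffding does not apply directly. Instead, the exponential moment from Lemma~\ref{lem:exp-moment-length} gives a uniform conditional second moment: $\Exp[V_\ell^2\mid\mathcal{G}_{\ell-1}]\le c_{N}$, with $c_N$ depending only on $N,\theta,C$. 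Hence $\sum_\ell \Exp[D_\ell^2]/\ell^2<\infty$, and the martingale strong law (Kronecker's lemma applied to the $L^2$-bounded martingale $\sum_\ell D_\ell/\ell$) yields $\frac1L\sum_{\ell\le L}D_\ell\to0$ almost surely.

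\textbf{Step 3 (conclusion).} Combining the two steps, almost surely on $\mathcal I$,
\[
\limsup_L \frac1{LN}\sum_{\ell\le L}V_\ell\le \limsup_L\frac{1}{LN}\sum_{\ell\le L}\Exp[\tilde V_\ell\mid\mathcal{G}_{\ell-1}]+0\le \frac{\beta}{N},
\]
which is within the claimed bound.

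The main obstacle I expect is Step 2. The increments are unbounded, and the event $\mathcal I$ is not one on which we can condition cleanly. I would handle this by defining everything on the whole probability space, setting $V_\ell=0$ when $T_\ell=\infty$, and using only conditional moment bounds. These hold on $\{\ell\ge\ell_R\}$ regardless of $\mathcal I$, so the martingale argument goes through globally and is then restricted to $\mathcal I$.

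As a fallback if the second-moment route is awkward, I would truncate at $N^2\log^2 L$ scale. Azuma would handle the truncated part, and Borel--Cantelli with Lemma~\ref{lem:length-tail} would show that eventually no $|W_\ell|$ with $\ell\le L$ exceeds $\frac{3}{\theta}\log L+CN/\theta$.
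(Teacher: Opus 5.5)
Your proof is correct, and it takes a genuinely different route from the paper.

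\textbf{The paper's route.} The paper never computes a conditional mean of $V_\ell$. It splits the long segments into dyadic length classes $A_{\ell,j}=\{2^jDN<|W_\ell|\le 2^{j+1}DN\}$. It controls the number $N_{L,j}$ of post-$\ell_R$ segments in each class by an indicator martingale with increments bounded by $1$. Azuma--Hoeffding plus Borel--Cantelli then gives $\limsup_L N_{L,j}/L\le 2p_j$ with $p_j=e^{CN-\theta 2^jDN}$. Finally it reassembles the sum using the weights $2^{j+1}DN$ and the integral comparison $\sum_j 2^je^{-a2^j}\le \frac{2}{a}e^{-a/2}$.

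\textbf{Your route.} You integrate the tail of Lemma~\ref{lem:length-tail} once, which gives the uniform bound $\Exp[V_\ell\mid\mathcal{G}_{\ell-1}]\le \frac{3}{\theta}e^{(C-\theta D/2)N}$ on the $\mathcal{G}_{\ell-1}$-measurable event $\{\ell\ge\ell_R\}$. You then handle the unbounded increments with a single $L^2$ martingale. The inequality $x^2\le 2\theta^{-2}e^{\theta x}$ together with Lemma~\ref{lem:exp-moment-length} gives $\Exp[D_\ell^2]\le 2\theta^{-2}e^{CN}$. Hence $\sum_\ell D_\ell/\ell$ converges almost surely, and Kronecker's lemma applies.

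\textbf{What each buys.} The paper's approach stays within the bounded-difference tools it already uses elsewhere. It pays for this at the last step, where it must interchange $\limsup_L$ with the infinite sum over $j$. Its ``for each $J$'' inequality simply drops the classes $j>J$ without any bound that is uniform in $L$. Since the Azuma rates $e^{-p_j^2L/2}$ degrade badly as $j$ grows, the threshold in $L$ beyond which $N_{L,j}\le 2p_jL$ holds is not uniform in $j$. So that interchange is not actually justified as written. Your version needs no interchange and no class-by-class Borel--Cantelli. It also gives a better constant ($3/\theta$ or $4/\theta$ in place of $8/\theta$). Your fallback truncation argument is therefore unnecessary.
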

	
	\begin{proof}
		Work on the event $\mathcal I:=\{T_\ell<\infty\ \forall \ell\}$, so $\ell_R<\infty$ by Lemma~\ref{lem:eventually-R}.
		
		Fix $D\ge 1$. For $j\ge 0$ define the length class events
		\[
		A_{\ell,j}:=\{2^jDN<|W_\ell|\le 2^{j+1}DN\},
		\qquad
		N_{L,j}:=\sum_{\ell=1}^L \mathbf{1}_{\{\ell_R\le \ell\}}\mathbf{1}_{A_{\ell,j}}.
		\]
		By Lemma~\ref{lem:length-tail}, on the event $\{\ell_R\le \ell\}$ we have
		\[
		\Prob(A_{\ell,j}\mid \mathcal{G}_{\ell-1})
		\le
		\Prob(|W_\ell|>2^jDN\mid \mathcal{G}_{\ell-1})
		\le
		\exp\!\bigl(CN-\theta\cdot 2^jDN\bigr)
		=:p_j.
		\]
		Define the martingale (with respect to $(\mathcal{G}_\ell)_{\ell\ge 0}$)
		\[
		M_{L,j}:=\sum_{\ell=1}^L\left(\mathbf{1}_{\{\ell_R\le \ell\}}\mathbf{1}_{A_{\ell,j}}
		-\Exp[\mathbf{1}_{\{\ell_R\le \ell\}}\mathbf{1}_{A_{\ell,j}}\mid \mathcal{G}_{\ell-1}]\right),
		\]
		whose increments are bounded by $1$.
		Since $\Exp[\mathbf{1}_{\{\ell_R\le \ell\}}\mathbf{1}_{A_{\ell,j}}\mid \mathcal{G}_{\ell-1}]
		\le p_j\,\mathbf{1}_{\{\ell_R\le \ell\}}$, we have
		\[
		\sum_{\ell=1}^L \Exp[\mathbf{1}_{\{\ell_R\le \ell\}}\mathbf{1}_{A_{\ell,j}}\mid \mathcal{G}_{\ell-1}]
		\le p_j L.
		\]
		Hence the event $N_{L,j}\ge 2p_jL$ implies $M_{L,j}\ge p_jL$, and the Azuma--Hoeffding inequality yields
		\[
		\Prob(N_{L,j}\ge 2p_jL)\le \Prob(M_{L,j}\ge p_jL)\le \exp(-p_j^2L/2),
		\]
		which is summable in $L$. By Borel--Cantelli, almost surely for each fixed $j$, for all sufficiently large $L$,
		\[
		N_{L,j}\le 2p_jL,
		\]
		and hence $\limsup_{L\to\infty}N_{L,j}/L\le 2p_j$.
		
		Now,
		\[
		\sum_{\ell=1}^L |W_\ell|\mathbf{1}_{\{|W_\ell|>DN\}}
		=
		\sum_{\ell<\ell_R}|W_\ell|\mathbf{1}_{\{|W_\ell|>DN\}}
		+
		\sum_{\ell=\ell_R}^L |W_\ell|\mathbf{1}_{\{|W_\ell|>DN\}}.
		\]
		The first term is $O(1)$ on $\mathcal I$ and vanishes after dividing by $LN$.
		For the second term,
		\[
		\sum_{\ell=\ell_R}^L |W_\ell|\mathbf{1}_{\{|W_\ell|>DN\}}
		\le
		\sum_{j\ge 0} 2^{j+1}DN\cdot N_{L,j}.
		\]
		For each $J$,
		\[
		\limsup_{L\to\infty}\frac{1}{LN}\sum_{\ell=\ell_R}^L |W_\ell|\mathbf{1}_{\{|W_\ell|>DN\}}
		\le
		\sum_{j=0}^J 2^{j+2}D\cdot \limsup_{L\to\infty}\frac{N_{L,j}}{L},
		\]
		and letting $J\to\infty$ (monotone in $J$) gives
		\[
		\limsup_{L\to\infty}\frac{1}{LN}\sum_{\ell=\ell_R}^L |W_\ell|\mathbf{1}_{\{|W_\ell|>DN\}}
		\le
		\sum_{j\ge 0} 2^{j+2}D\cdot 2p_j
		=
		4D\,e^{CN}\sum_{j\ge 0} 2^j e^{-\theta DN 2^j}.
		\]
		We use the elementary bound: for all $a>0$,
		\[
		\sum_{j\ge 0}2^j e^{-a2^j}\le \frac{2}{a}e^{-a/2}.
		\]
		Indeed, for each $j\ge 0$ and all $u\in[2^{j-1},2^j]$ we have $e^{-au}\ge e^{-a2^j}$, hence
		\[
		\int_{2^{j-1}}^{2^j} e^{-au}\,du \ \ge\ 2^{j-1}e^{-a2^j}
		\quad\Rightarrow\quad
		2^j e^{-a2^j}\ \le\ 2\int_{2^{j-1}}^{2^j} e^{-au}\,du.
		\]
		Summing over $j\ge 0$ yields
		\[
		\sum_{j\ge 0}2^j e^{-a2^j}\le 2\int_{1/2}^{\infty} e^{-au}\,du = \frac{2}{a}e^{-a/2}.
		\]
		With $a:=\theta DN$, the right-hand side is at most
		\[
		\frac{8}{\theta N}\exp\!\left(\left(C-\frac{\theta D}{2}\right)N\right).
		\]
		\qedhere
	\end{proof}
	
	Finally, we record that the next segment becomes negligible compared to the total output produced so far.
	
	\begin{lemma}
		\label{lem:increment-vanish}
		On the event that tape $X$ is read infinitely often, we have
		\[
		\frac{|W_{L+1}|}{T_L}\to 0\qquad(L\to\infty)
		\]
		almost surely.
	\end{lemma}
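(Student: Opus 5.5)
Since $|W_{\ell}|\ge N$ for every $\ell$ with $T_\ell<\infty$, we have the deterministic lower bound $T_L=\sum_{\ell\le L}|W_\ell|\ge LN$ on the event $\mathcal I:=\{T_\ell<\infty\ \forall\ell\}$. It therefore suffices to show that almost surely on $\mathcal I$,
\[
\frac{|W_{L+1}|}{L}\to 0 \qquad (L\to\infty),
\]
and in fact we will aim for the stronger statement $|W_{L+1}|=O(\log L)$ eventually. The natural tool is the conditional exponential tail of Lemma~\ref{lem:length-tail} combined with Borel--Cantelli, using the stopping index $\ell_R$ of Lemma~\ref{lem:eventually-R} to restrict to segments that start in $R$.

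Concretely, fix $\eta>0$ and consider the events
\[
H_L:=\{\ell_R\le L+1\}\cap\{|W_{L+1}|>\eta L\}.
\]
Since $\{\ell_R\le L+1\}$ is $\mathcal G_L$-measurable (Lemma~\ref{lem:eventually-R}) and on it segment $L+1$ starts in $R$, Lemma~\ref{lem:length-tail} gives
\[
\Prob(H_L)=\Exp\bigl[\mathbf 1_{\{\ell_R\le L+1\}}\Prob(|W_{L+1}|>\eta L\mid\mathcal G_L)\bigr]\le \exp(CN-\theta\eta L).
\]
This is summable in $L$ for fixed $N,\eta$, so by Borel--Cantelli only finitely many $H_L$ occur almost surely. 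On $\mathcal I$ we have $\ell_R<\infty$, so for all large $L$ the first event holds, and hence $|W_{L+1}|\le \eta L\le \eta T_L/N\le\eta T_L$ eventually. Taking $\eta=1/q$ over $q\in\N$ and intersecting countably many full-measure events yields $|W_{L+1}|/T_L\to0$ almost surely on $\mathcal I$.

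The only delicate point is measurability: Lemma~\ref{lem:length-tail} is a conditional bound valid only on histories where the segment starts in $R$. This is handled by the indicator $\mathbf 1_{\{\ell_R\le L+1\}}$ together with the fact, from Lemma~\ref{lem:eventually-R}, that after $\ell_R$ every segment-start state lies in $R$. (On $T_{L+1}=\infty$ the segment is empty, so nothing needs to be checked there.) Beyond this point the argument is a routine application of Borel--Cantelli, and we expect no real obstacle.
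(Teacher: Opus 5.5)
Your proof is correct and follows essentially the same route as the paper: you apply the conditional tail bound of Lemma~\ref{lem:length-tail} on the $\mathcal G_L$-measurable event $\{\ell_R\le L+1\}$, then use Borel--Cantelli and the deterministic bound $T_L\ge LN$. The only difference is the threshold. You use $\eta L$ and then intersect over $\eta=1/q$, which gives $|W_{L+1}|=o(L)$ rather than the $O(\log L)$ you announced. The paper instead uses the single threshold $b_L=\frac{C}{\theta}N+\frac{2}{\theta}\log(L+2)$, which makes the tail at most $(L+2)^{-2}$ directly and gives the logarithmic bound without a countable intersection.
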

	
	\begin{proof}
		Work on the event $\mathcal I:=\{T_\ell<\infty\ \forall \ell\}$, so $\ell_R<\infty$ by Lemma~\ref{lem:eventually-R}.
		Let
		\[
		b_L:=\frac{C}{\theta}N+\frac{2}{\theta}\log(L+2).
		\]
		By Lemma~\ref{lem:length-tail}, on the event $\{\ell_R\le L\}$ we have
		\[
		\Prob(|W_L|>b_L\mid \mathcal{G}_{L-1})\le \exp(CN-\theta b_L)\le (L+2)^{-2}.
		\]
		Hence
		\[
		\Prob(|W_L|>b_L\ \text{and}\ \ell_R\le L)\le (L+2)^{-2},
		\]
		and the right-hand side is summable. By Borel--Cantelli, almost surely only finitely many $L$ satisfy
		$|W_L|>b_L$ and $\ell_R\le L$. On $\mathcal I$ we have $\ell_R<\infty$, so for all sufficiently large $L$,
		$\ell_R\le L$ and thus $|W_L|\le b_L$.
		
		Also $T_L=\sum_{\ell\le L}|W_\ell|\ge LN$, hence for large $L$,
		\[
		\frac{|W_{L+1}|}{T_L}\le \frac{b_{L+1}}{LN}\to 0.
		\]
		\qedhere
	\end{proof}
	
	We now pass from segment-level control to prefix frequencies. For a concatenation $W_1\cdots W_L$ and any $w\in\A^m$, at most $(m-1)L$ occurrences of $w$ start in the last $(m-1)$ symbols of some $W_\ell$, hence cross a segment boundary. Apart from these boundary-crossing occurrences, the count of $w$ in $Z[1..T_L]=W_1\cdots W_L$ is the sum of the within-segment counts. The next lemma formalizes this bookkeeping: if most segments have small internal deviation and the total length contributed by very long segments is small, then the prefix $Z[1..T_L]$ has near-uniform $m$-block frequencies.
	
	\begin{lemma}
		\label{lem:prefixclose-comp-fixed}
		Fix $m\ge 1$, $\epsilon\in(0,1/10)$, and $N\ge N_0(S,m,\epsilon)$.
		Let $B_\ell$ be the event that $T_\ell<\infty$ and $\Delta_m(W_\ell)\le \epsilon$.
		Fix $D\ge 1$ and $\eta>0$. On the event that $T_L<\infty$ for all $L$, for all sufficiently large $L$,
		if
		\[
		\frac{1}{L}\sum_{\ell=1}^L \mathbf{1}_{B_\ell}\ \ge\ 1-6\epsilon
		\qquad\text{and}\qquad
		\frac{1}{LN}\sum_{\ell=1}^L |W_\ell|\mathbf{1}_{\{|W_\ell|>DN\}}\ \le\ \eta,
		\]
		then for every $w\in\A^m$,
		\[
		\left|\frac{\occ(Z[1..T_L],w)}{T_L}-k^{-m}\right|
		\le \epsilon\ +\ 6\epsilon D\ +\ \eta\ +\ \frac{m-1}{N}.
		\]
	\end{lemma}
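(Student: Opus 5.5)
The statement is deterministic bookkeeping: given the two hypotheses, the bound is a pathwise inequality with no probability involved. I will decompose the prefix $Z[1..T_L]=W_1\cdots W_L$ into its segments, compare the global count with the sum of within-segment counts, and control the three error sources separately: good segments, bad segments of moderate length, and bad segments of excessive length. Throughout I use $T_L=\sum_{\ell\le L}|W_\ell|\ge LN$, which holds on the event $T_L<\infty$ because each segment consumes exactly $N$ symbols from tape $X$. This is the only place the event $\{T_L<\infty\ \forall L\}$ enters.

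\textbf{Step 1 (boundary decomposition).} Fix $w\in\A^m$. Every occurrence of $w$ in $W_1\cdots W_L$ either lies entirely inside some $W_\ell$ or starts in the last $m-1$ positions of some $W_\ell$ with $\ell<L$ and crosses the boundary. Hence
\[
0\le \occ(Z[1..T_L],w)-\sum_{\ell=1}^L\occ(W_\ell,w)\le (m-1)L .
\]
Writing $k^{-m}T_L=\sum_\ell k^{-m}|W_\ell|$, this gives
\[
\bigl|\occ(Z[1..T_L],w)-k^{-m}T_L\bigr|\le \sum_{\ell=1}^L\bigl|\occ(W_\ell,w)-k^{-m}|W_\ell|\bigr| + (m-1)L .
\]

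\textbf{Step 2 (per-segment errors).} If $B_\ell$ holds, then $\Delta_m(W_\ell)\le\epsilon$ gives $|\occ(W_\ell,w)-k^{-m}|W_\ell||\le \epsilon|W_\ell|$. Summing over such $\ell$ gives at most $\epsilon T_L$. If $B_\ell$ fails, I use the trivial bound $|\occ(W_\ell,w)-k^{-m}|W_\ell||\le |W_\ell|$, valid since both $\occ(W_\ell,w)$ and $k^{-m}|W_\ell|$ lie in $[0,|W_\ell|]$.

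By the first hypothesis there are at most $6\epsilon L$ failing indices. I split them by length:
\begin{itemize}
\item those with $|W_\ell|\le DN$ contribute at most $6\epsilon L\cdot DN$;
\item those with $|W_\ell|>DN$ contribute at most $\sum_\ell|W_\ell|\mathbf{1}_{\{|W_\ell|>DN\}}\le \eta LN$ by the second hypothesis.
\end{itemize}

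\textbf{Step 3 (normalize).} Divide everything by $T_L$ and use $T_L\ge LN$:
\[
\Bigl|\frac{\occ(Z[1..T_L],w)}{T_L}-k^{-m}\Bigr|\le \epsilon+\frac{6\epsilon LDN}{LN}+\frac{\eta LN}{LN}+\frac{(m-1)L}{LN}=\epsilon+6\epsilon D+\eta+\frac{m-1}{N}.
\]

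The main care point is Step 2's split. Bad segments cannot be bounded by their count alone, because a single long segment could carry a large share of the output. This is why the second hypothesis is needed to absorb the long bad segments, while the count bound $6\epsilon L$ handles only those of length at most $DN$. A minor point is that $\Delta_m$ normalizes by $|v|$ rather than $|v|-m+1$; this is consistent with the bound used in Step 2, so no extra term arises. "Sufficiently large $L$" is needed only so that the hypotheses are in force, and every inequality above holds for each such $L$ individually.
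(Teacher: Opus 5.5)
Your proposal is correct and follows essentially the same route as the paper: write $Z[1..T_L]=W_1\cdots W_L$, charge at most $(m-1)L$ boundary-crossing occurrences, bound the good segments by $\epsilon T_L$, split the at most $6\epsilon L$ bad segments into short ones (total at most $6\epsilon L\cdot DN$) and long ones (total at most $\eta LN$), and normalize using $T_L\ge LN$. Your per-segment triangle inequality $\bigl|\occ(W_\ell,w)-k^{-m}|W_\ell|\bigr|\le |W_\ell|$ for bad segments is slightly more careful than the paper's version, which writes the sum loosely as $k^{-m}T_L\pm\epsilon T_L$ and bounds bad segments only via $\occ(W_\ell,w)\le|W_\ell|$.
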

	
	\begin{proof}
		Write $Z[1..T_L]=W_1\cdots W_L$. Boundary-crossing occurrences contribute at most $(m-1)L$, hence at most
		$(m-1)L/T_L\le (m-1)/N$ since $T_L=\sum_{\ell\le L}|W_\ell|\ge LN$.
		
		For internal occurrences, on $B_\ell$ we have $\occin(W_\ell,w)=k^{-m}|W_\ell|\pm \epsilon|W_\ell|$, so
		\[
		\sum_{\ell=1}^L \occin(W_\ell,w)
		=
		k^{-m}T_L\ \pm\ \epsilon T_L\ +\ \sum_{\ell\notin B}\occin(W_\ell,w),
		\]
		where $B:=\{\ell\le L:\ B_\ell\text{ holds}\}$. For $\ell\notin B$, use $\occin(W_\ell,w)\le |W_\ell|$ to get
		\[
		\left|\sum_{\ell=1}^L \occin(W_\ell,w)-k^{-m}T_L\right|
		\le \epsilon T_L\ +\ \sum_{\ell\notin B}|W_\ell|.
		\]
		Split the bad-length sum into short and long segments:
		\[
		\sum_{\ell\notin B}|W_\ell|
		\le
		\sum_{\ell\notin B,\,|W_\ell|\le DN}|W_\ell|
		\;+\;
		\sum_{\ell:\,|W_\ell|>DN}|W_\ell|.
		\]
		The first term is at most $DN\cdot \#\{\ell\notin B\}\le DN\cdot 6\epsilon L$. The second term is at most
		$\eta LN$ by hypothesis. Dividing by $T_L\ge LN$ yields
		\[
		\frac{1}{T_L}\sum_{\ell\notin B}|W_\ell|\le 6\epsilon D+\eta.
		\]
		Combine with the boundary term to obtain the stated inequality. \qedhere
	\end{proof}
	
	We can now combine the segment analysis to obtain the desired almost-sure normality statement for a fixed shuffler.
	
	\begin{theorem}
		\label{thm:oneS-general-comp}
		Let $S$ be any shuffler. If $x$ is normal and $Y\sim\Unif(\A^\infty)$, then $S(x,Y)$ is normal almost surely.
	\end{theorem}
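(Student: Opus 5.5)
The proof splits into two cases according to whether tape $X$ is read finitely or infinitely often. On the event $\mathcal{I}^c$ where only finitely many symbols of $x$ are read, after some finite random time the output consists entirely of fresh $Y$-symbols. I would argue this via a stopping time: at each step, conditional on the past, the next $Y$-symbol read is uniform and independent. On $\mathcal{I}^c$ the output tail is therefore a tail of an i.i.d.\ uniform sequence; more precisely it is a subsequence of $Y$ starting at a random finite index. To make this rigorous I would note that the sequence of $Y$-reads equals $Y$ itself, and that on $\mathcal{I}^c$ the output from some point on is exactly $Y[j..]$ for some finite $j$. Since $Y$ is normal almost surely (Borel), and normality is invariant under finite prefix changes and shifts, $S(x,Y)$ is normal on $\mathcal{I}^c$ almost surely.

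The main case is the event $\mathcal{I}$. Fix $m$ and a target $\gamma>0$. I would choose parameters in the following order.
\begin{enumerate}
\item Pick $\epsilon$ so small that $\epsilon+6\epsilon D<\gamma/3$.
\item This requires first fixing $D$ large enough that $C-\theta D/2<0$, say $D:=4C/\theta+1$, where $\theta,C$ come from Lemma~\ref{lem:exp-moment-length}.
\item Take $N\ge N_0(S,m,\epsilon)$ large enough that $(m-1)/N<\gamma/3$.
\item Also take $N$ large enough that the bound $\eta:=\frac{8}{\theta N}e^{(C-\theta D/2)N}$ of Lemma~\ref{lem:trunc-excess} is below $\gamma/3$.
\end{enumerate}
On $\mathcal{I}$, Lemma~\ref{lem:manygood-comp-fixed} gives that eventually the fraction of good segments is at least $1-6\epsilon$ (since $\liminf\ge 1-5\epsilon$). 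Lemma~\ref{lem:trunc-excess} gives that the long-segment contribution is eventually at most $2\eta$ (use the limsup with slack). Lemma~\ref{lem:prefixclose-comp-fixed} then yields that, almost surely on $\mathcal{I}$, for all large $L$ and every $w\in\A^m$, $|\occ(Z[1..T_L],w)/T_L-k^{-m}|\le \gamma+\eta$, roughly $\le 2\gamma$.

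Next I would interpolate from the segment endpoints $T_L$ to arbitrary prefix lengths $n$. Given $T_L\le n<T_{L+1}$, the occurrence count changes by at most $n-T_L\le |W_{L+1}|$. Lemma~\ref{lem:increment-vanish} says $|W_{L+1}|/T_L\to0$, so $\limsup_n |\occ(Z[1..n],w)/n-k^{-m}|\le 2\gamma$ almost surely on $\mathcal{I}$.

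Finally, let $\gamma$ range over $1/q$ for $q\in\N$ and $m$ over $\N$. This is a countable intersection of almost-sure events, with $N,\epsilon,D$ chosen afresh for each pair $(m,q)$; the segment decomposition depends on $N$, but each event is almost sure on $\mathcal{I}$, so the intersection is too. This gives the limit $k^{-m}$ for all $w$, i.e.\ normality on $\mathcal{I}$, and combining with the first case proves the theorem.

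The main obstacle I expect is bookkeeping: making sure the lemmas' hypotheses (in particular $N\ge N_0$ together with the choice of $D$ coupled to $\theta,C$ independent of $N$) are compatible and hold simultaneously. The finite-reads case also needs the small measurability argument that the output tail coincides with a shift of $Y$.
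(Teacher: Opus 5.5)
Your proposal is correct and follows essentially the same route as the paper's proof. You use the same split into the finitely-many-$X$-reads case (where the output is eventually a shift of $Y$) and the event $\mathcal{I}$, the same parameter order ($D$ from $\theta,C$, then $\epsilon$, then $N\ge N_0$), the same combination of Lemmas~\ref{lem:manygood-comp-fixed}, \ref{lem:trunc-excess} and \ref{lem:prefixclose-comp-fixed} at the times $T_L$, and the same interpolation via Lemma~\ref{lem:increment-vanish}. Your sharper interpolation bound $n-T_L$ and the explicit countable intersection over $(m,q)$ are minor tidy-ups, not a different argument.
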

	
	\begin{proof}
		If along the run the shuffler eventually reads only from tape $X$, then the output is a suffix of $x$ and hence normal.
		If it eventually reads only from tape $Y$, then the output is a suffix of $Y$ and hence normal almost surely.
		
		Let $\mathcal{I}$ be the remaining event that tape $X$ is read infinitely often (equivalently, $T_L<\infty$ for all $L$).
		It suffices to prove that on $\mathcal{I}$ the output $Z$ is normal almost surely.
		
		Fix $m\ge 1$ and $\delta>0$. Run the argument below with target accuracy $\delta/2$ and then extend to all $n$.
		
		Choose $D\ge 1$ so that $\theta D/2>C$, where $\theta,C$ are from Lemma~\ref{lem:exp-moment-length}.
		Choose $\epsilon\in(0,1/10)$ so that $\epsilon(1+6D)\le \delta/8$.
		Choose $N\ge N_0(S,m,\epsilon)$ so large that $(m-1)/N\le \delta/8$ and
		\[
		\frac{8}{\theta N}\exp\!\left(\left(C-\frac{\theta D}{2}\right)N\right)\le \delta/16.
		\]
		(Note that $(m-1)/N<1$ implies $N\ge m$.)
		
		By Lemma~\ref{lem:trunc-excess}, on $\mathcal{I}$ almost surely,
		\[
		\limsup_{L\to\infty}\frac{1}{LN}\sum_{\ell=1}^L |W_\ell|\mathbf{1}_{\{|W_\ell|>DN\}}
		\le \delta/16.
		\]
		Hence on $\mathcal{I}$ almost surely, for all sufficiently large $L$,
		\[
		\frac{1}{LN}\sum_{\ell=1}^L |W_\ell|\mathbf{1}_{\{|W_\ell|>DN\}}\le \delta/8.
		\]
		
		By Lemma~\ref{lem:manygood-comp-fixed}, on $\mathcal{I}$ almost surely,
		\[
		\liminf_{L\to\infty}\frac{1}{L}\sum_{\ell=1}^L \mathbf{1}_{B_\ell}\ge 1-5\epsilon.
		\]
		Hence on $\mathcal{I}$ almost surely, for all sufficiently large $L$,
		\[
		\frac{1}{L}\sum_{\ell=1}^L \mathbf{1}_{B_\ell}\ge 1-6\epsilon.
		\]
		
		Therefore, on $\mathcal{I}$ almost surely, for all sufficiently large $L$ the hypotheses of
		Lemma~\ref{lem:prefixclose-comp-fixed} hold with $\eta=\delta/8$, and hence
		\[
		\max_{w\in\A^m}\left|\frac{\occ(Z[1..T_L],w)}{T_L}-k^{-m}\right|\le \delta/2
		\]
		for all sufficiently large $L$.
		
		By Lemma~\ref{lem:increment-vanish}, $|W_{L+1}|/T_L\to 0$ on $\mathcal{I}$ almost surely. Fix such an $\omega$.
		Then for all sufficiently large $L$, for every $n\in[T_L,T_{L+1}]$ and every $w\in\A^m$, writing $r:=n-T_L\le |W_{L+1}|$,
		the number of new occurrences of $w$ created by extending from $T_L$ to $n$ is at most $r+(m-1)$, so
		\[
		|\occ(Z[1..n],w)-\occ(Z[1..T_L],w)|\le r+(m-1)\le |W_{L+1}|+ (m-1)\le 2|W_{L+1}|,
		\]
		using $|W_{L+1}|\ge N\ge m$.
		Therefore
		\[
		\left|\frac{\occ(Z[1..n],w)}{n}-\frac{\occ(Z[1..T_L],w)}{T_L}\right|
		\le
		\frac{2|W_{L+1}|}{T_L}+\frac{|W_{L+1}|}{T_L}
		=
		3\,\frac{|W_{L+1}|}{T_L}.
		\]
		For all sufficiently large $L$, the right-hand side is $\le \delta/2$, uniformly over $n\in[T_L,T_{L+1}]$ and $w\in\A^m$.
		Combining with the $\delta/2$ bound at time $T_L$ gives the desired $\delta$ bound for all sufficiently large $n$.
		
		Since $m$ and $\delta$ were arbitrary, $Z=S(x,Y)$ is normal almost surely. \qedhere
	\end{proof}
	
	\subsection{From almost-sure normality to a computable \texorpdfstring{$y$}{y}}
	\label{subsubsec:quenched-to-computable}
	
	We now compute a single computable $y$ that simultaneously forces $S(x,y)$ to be normal for every shuffler $S$.
	
	Fix a computable enumeration $S_1,S_2,\dots$ of all shufflers over $\A$.
	For $t\ge 1$ and $N\ge 2$, define the \emph{quadratic window slice}
	\[
	H_t(N)(x)
	:=
	\bigcap_{i=1}^t\ \bigcap_{\ell=1}^t\ \bigcap_{n=N}^{N^2}
	E_{S_i}\bigl(n;\ell,2^{-t}\bigr)(x)
	\ \subseteq\ \A^\infty.
	\]
	This is clopen (finite intersection of clopen sets). Since $x$ is computable, each measure
	$\mu(H_t(N)(x)\cap[v])$ is computable uniformly in $(t,N,v)$ by Lemma~\ref{lem:clopen-comp} and closure
	under finite intersections. The next lemma records that, for a fixed normal source $x$, the finite test family defining $H_t(N)(x)$ succeeds with probability tending to $1$ as the cutoff $N$ goes to infinity.
	
	\begin{lemma}
		\label{lem:sliceconv-comp}
		Fix $t\ge 1$ and a normal $x$. Then for $Y\sim\Unif(\A^\infty)$,
		\[
		\Prob\bigl(Y\in H_t(N)(x)\bigr)\to 1
		\qquad(N\to\infty).
		\]
	\end{lemma}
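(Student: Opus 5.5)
The plan is to derive the statement from the almost-sure normality result, Theorem~\ref{thm:oneS-general-comp}, applied to each of the finitely many shufflers $S_1,\dots,S_t$, and then to convert almost-sure convergence into convergence in probability of a uniform-in-window event.

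\textbf{Step 1: reduction to a single test.} $H_t(N)(x)$ is a finite intersection over $i\le t$ and $\ell\le t$. A union bound therefore gives
\[
1-\Prob\bigl(Y\in H_t(N)(x)\bigr)\ \le\ \sum_{i=1}^t\sum_{\ell=1}^t \Prob\Bigl(\exists n\in[N,N^2]:\ \neg\mathrm{Test}\bigl(S_i(x,Y)[1..n];\ell,2^{-t}\bigr)\Bigr).
\]
It suffices to show that each of the $t^2$ summands tends to $0$ as $N\to\infty$, with $i$, $\ell$ and $\varepsilon:=2^{-t}$ held fixed.

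\textbf{Step 2: a tail event shrinking to a null set.} Fix $S=S_i$, $m=\ell$ and $\varepsilon=2^{-t}$. Define
\[
A_N:=\bigl\{y:\ \exists n\ge N,\ \neg\mathrm{Test}(S(x,y)[1..n];m,\varepsilon)\bigr\}.
\]
The summand is at most $\Prob(Y\in A_N)$, since $[N,N^2]\subseteq[N,\infty)$. The sets $A_N$ are decreasing in $N$. Their intersection $\bigcap_N A_N$ consists of those $y$ for which $S(x,y)$ fails the $(m,\varepsilon)$-test infinitely often. Any such $y$ makes $S(x,y)$ non-normal, because normality forces $\max_{w\in\A^m}|\occ(z[1..n],w)/n-k^{-m}|\to 0$. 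By Theorem~\ref{thm:oneS-general-comp}, which requires only that $x$ be normal, the set of such $y$ has measure zero. Continuity of measure from above then gives $\Prob(Y\in A_N)\to 0$.

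\textbf{Step 3: conclusion.} Combining Steps 1 and 2, the finite sum of $t^2$ vanishing terms tends to $0$, which proves the lemma. Measurability is not an issue: each $A_N$ is a countable union of clopen sets by Lemma~\ref{lem:clopen-comp}.

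There is no real obstacle here. The only subtle point is that one needs the test to hold uniformly over the whole growing window $[N,N^2]$, and not just at a single length $n$. Passing to the tail event $A_N$, which covers all $n\ge N$, handles this. No rate of convergence is needed for this lemma; rates only enter later, when choosing $N=N(t)$ so that $1-\mu(H_t(N)(x))\le 2^{-2t}$ as Lemma~\ref{lem:extract-comp} requires. That choice is made by computable search, since these measures are computable.
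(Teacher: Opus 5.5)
Your proposal is correct and follows essentially the same argument as the paper. Both apply Theorem~\ref{thm:oneS-general-comp} to the finitely many pairs $(i,\ell)$ and use that almost sure normality makes the $(\ell,2^{-t})$-test hold for all large $n$, hence on the whole window $[N,N^2]$. The only cosmetic difference is the limit step: you use a union bound and continuity from above of the decreasing tail sets $A_N$, while the paper takes the maximum of the random thresholds $N_{i,\ell}$ and applies dominated convergence to $\mathbf{1}_{H_t(N)(x)}(Y)$.
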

	
	\begin{proof}
		Fix $i\le t$ and $\ell\le t$. By Theorem~\ref{thm:oneS-general-comp}, $S_i(x,Y)$ is normal almost surely.
		For the fixed tolerance $2^{-t}>0$, almost surely there exists $N_{i,\ell}$ such that for all $n\ge N_{i,\ell}$,
		$\mathrm{Test}(S_i(x,Y)[1..n];\ell,2^{-t})$ holds. Let $N_*:=\max_{i\le t,\ \ell\le t}N_{i,\ell}$.
		Then almost surely, for all $N\ge N_*$, all tests in the definition of $H_t(N)(x)$ hold (since $[N,N^2]\subseteq[N_*,\infty)$),
		and hence $\mathbf{1}_{H_t(N)(x)}(Y)\to 1$ almost surely. By dominated convergence,
		$\Prob(Y\in H_t(N)(x))\to 1$. \qedhere
	\end{proof}
	
	We next choose concrete cutoffs $N_t$ so that each slice $H_t(N_t)(x)$ has very high measure, while keeping $N_t$ growing in a controlled way.
	
	\begin{lemma}
		\label{lem:choose-comp}
		Assume $x$ is computable and normal. There exists a computable increasing sequence
		$2\le N_1<N_2<N_3<\cdots$ such that for each $t\ge 1$,
		\[
		\mu\bigl(H_t(N_t)(x)\bigr)\ge 1-2^{-2t}
		\qquad\text{and}\qquad
		N_{t+1}\le N_t^2.
		\]
	\end{lemma}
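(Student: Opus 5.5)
The plan is to build $(N_t)$ by a recursive search, using two facts. First, the predicate ``$\mu(H_t(N)(x))\ge 1-2^{-2t}$'' is decidable in $(t,N)$. Second, for each fixed $t$ it holds for all sufficiently large $N$.

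\emph{Decidability.} By Lemma~\ref{lem:clopen-comp} and closure under finite intersections, $\mu(H_t(N)(x))$ is an exact rational computable uniformly in $(t,N)$. Concretely, enumerate $v\in\A^{N^2}$, simulate $S_1,\dots,S_t$ on $(x[1..N^2],v)$ for $N^2$ output steps, and check all the finitely many $\mathrm{Test}$ conditions for $n\in[N,N^2]$ and $\ell\le t$. Comparing the result with $1-2^{-2t}$ is therefore decidable.

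\emph{Eventual goodness.} I would not use Lemma~\ref{lem:sliceconv-comp} directly but its proof, which gives a monotone statement. Let
\[
G_t(N):=\{y:\ \mathrm{Test}(S_i(x,y)[1..n];\ell,2^{-t})\ \text{holds for all } i,\ell\le t \text{ and all } n\ge N\}.
\]
Then $G_t(N)\subseteq H_t(N')(x)$ for every $N'\ge N$, $G_t(N)$ increases in $N$, and $\mu(G_t(N))\to1$ by Theorem~\ref{thm:oneS-general-comp}. So there is a threshold $M_t$ such that $\mu(H_t(N')(x))\ge 1-2^{-2t}$ for \emph{every} $N'\ge M_t$. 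This upgrade from ``the limit is $1$'' to ``good on a whole tail'' is what makes the window constraint plausible.

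\emph{Recursion.} Let $N_1$ be the least $N\ge2$ with $\mu(H_1(N)(x))\ge 1-2^{-2}$; it exists by eventual goodness. Given $N_t$, let $N_{t+1}$ be the least $N\in(N_t,N_t^2]$ with $\mu(H_{t+1}(N)(x))\ge 1-2^{-2(t+1)}$. The search is bounded, so $t\mapsto N_t$ is computable and increasing, and both displayed properties hold by construction, \emph{provided the search never comes up empty}.

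\emph{Main obstacle: nonemptiness of the window $(N_t,N_t^2]$.} The search succeeds whenever $N_t\ge M_{t+1}$, but $M_{t+1}$ is not computable, and making $N_t$ larger only pushes the same question to level $t+2$. My first attempt would be to make $N_t$ grow fast enough to dominate $M_{t+1}$. One option is to bound $M_t$ effectively via a quantitative rate in Theorem~\ref{thm:oneS-general-comp}. For $x$ computable with a known normality modulus (as for Champernowne), the segment analysis with the explicit Azuma and exponential-moment bounds (Lemmas~\ref{lem:manygood-comp-fixed}, \ref{lem:trunc-excess}) should yield a computable $M_t$. One could then take $N_t\ge\max(M_t,M_{t+1})$ and stay inside the square window, using that $M_{t+1}\le \mathrm{poly}(M_t)$ if the rates are polynomial.

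If no such rate is available, the fallback is to weaken the window requirement downstream. The intended use of $N_{t+1}\le N_t^2$ is presumably that the windows $[N_t,N_t^2]$ cover all $n\ge N_1$. Coverage is achieved just as well by replacing level $t$ by a finite run $N_t=N_t^{(0)}<N_t^{(1)}<\dots$ of good cutoffs at the same level, each at most the square of the previous, until one exceeds a value good for level $t+1$. Such a run exists because every $N'\ge M_t$ is good at level $t$, and it is found by unbounded but terminating search. This would be a reindexing of the lemma, and I would check whether the next lemma tolerates it.
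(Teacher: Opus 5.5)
Your decidability and eventual-goodness steps are fine. There is, however, a genuine gap, and it is exactly the one you flag: nothing guarantees that $(N_t,N_t^2]$ contains a cutoff that is good at level $t+1$. This gap cannot be repaired, because the statement is false for some computable normal $x$.

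Let $S_{i_0}$ be the shuffler that always reads tape~1, so $S_{i_0}(x,y)=x$. For $s\ge i_0$ the slice $E_{S_{i_0}}(n;1,2^{-s})(x)$ is either $\A^\infty$ or $\emptyset$, according to whether $\mathrm{Test}(x[1..n];1,2^{-s})$ holds. So $\mu(H_s(N_s)(x))>0$ forces that test for every $n\in[N_s,N_s^2]$. For a sequence as in the lemma, $N_{s+1}\le N_1^{2^s}$. For $n\ge N_{i_0}$, the maximal $s$ with $N_s\le n$ satisfies $n\in[N_s,N_s^2]$ and $2^{-s}<\ln N_1/\ln n$. Hence
\[
\max_{a\in\A}\Bigl|\frac{\occ(x[1..n],a)}{n}-\frac1k\Bigr|<\frac{\ln N_1}{\ln n}\qquad\text{for all } n\ge N_{i_0},
\]
so the lemma forces an $O(1/\log n)$ rate for the letter frequencies of $x$.

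Now take Champernowne's base-$k$ word and, for each $j\ge 2$, insert right after its $2^j$-th symbol a run of $\lfloor 2^j/\log j\rfloor$ copies of a fixed letter $a$. The inserted symbols have density zero, and there are only $O(\log n)$ insertion points before position $n$. So the resulting word is computable and normal. Yet just after the $j$-th run, at a length $n$ with $\ln n\asymp j$, the frequency of $a$ exceeds $1/k$ by at least $c/\log j$, for a constant $c>0$ and all large $j$. Thus for this $x$ no sequence $(N_t)$ as in the lemma exists, computable or not.

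The paper's proof breaks at the same place. The lookahead built into the choice of $N_t$ only certifies some $M\in(N_t,N_t^2]$ that is good at level $t+1$. The search for $N_{t+1}$, however, also demands a level-$(t+2)$ witness in $(M,M^2]$. The justification ``take $M$ sufficiently large'' is not available under the cap $M\le N_t^2$.

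Neither of your repairs rescues the statement.
\begin{itemize}
\item \textbf{First repair.} Polynomial control $M_{t+1}\le\poly(M_t)$ is not enough. Every admissible sequence has $N_t\le N_1^{2^{t-1}}$, so already $M_{t+1}=M_t^3$ rules out $N_t\ge M_t$ for all $t$. What your argument really needs is $\log M_t=O(2^t)$. Under that condition no search is needed at all: $N_t:=A^{2^{t-1}}$ for a suitable integer $A$ works, and computability of $M_t$ plays no role in the existential claim. But the example above violates this growth condition.
\item \textbf{Second repair.} This changes the statement, and it meets the same obstruction. Consider any computable schedule of cutoffs whose windows cover a tail of $\N$, each window carrying a positive-measure level-$s$ test with $s\to\infty$. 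Through $S_{i_0}$, such a schedule yields a computable rate of convergence for the letter frequencies of $x$. A computable normal word need not have one: insert runs as above at times governed by the halting problem.
\end{itemize}
So Theorem~\ref{thm:comp-independent} for arbitrary computable normal $x$ cannot be obtained by merely reindexing this lemma.
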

	
	\begin{proof}
		We construct $(N_t)$ inductively by search with one-step lookahead.
		Here the searches are effective because each $H_t(N)(x)$ is clopen, so $\mu(H_t(N)(x))$ is a rational that can be computed exactly
		(by Lemma~\ref{lem:clopen-comp} and closure under finite intersections), hence the threshold tests are decidable.
		
		For $t=1$, search $N\ge 2$ until $\mu(H_1(N)(x))\ge 1-2^{-2}$ and there exists some $M$ with $N<M\le N^2$
		such that $\mu(H_2(M)(x))\ge 1-2^{-4}$ and there exists some $M'$ with $M<M'\le M^2$ such that
		$\mu(H_3(M')(x))\ge 1-2^{-6}$.
		Set $N_1:=N$ and $N_2$ to be the least such $M$.
		
		Now assume $t\ge 2$ and $N_t$ has been chosen.
		Search for the least $M$ with $N_t<M\le N_t^2$ such that
		\[
		\mu(H_{t+1}(M)(x))\ge 1-2^{-2(t+1)}
		\]
		and there exists some $M'$ with $M<M'\le M^2$ such that
		\[
		\mu(H_{t+2}(M')(x))\ge 1-2^{-2(t+2)}.
		\]
		Set $N_{t+1}:=M$.
		
		Existence follows from Lemma~\ref{lem:sliceconv-comp} (applied to the finitely many indices involved at each stage):
		since $\mu(H_s(N)(x))\to 1$ as $N\to\infty$ for each fixed $s$, one can take $M$ sufficiently large so that
		$\mu(H_{t+1}(M)(x))$ meets the stage threshold and also $M^2$ exceeds some stage-$(t+2)$ witness.
		By construction, $(N_t)$ is strictly increasing and satisfies $N_{t+1}\le N_t^2$ and the stated measure bounds. \qedhere
	\end{proof}
	
	Now, we prove the main result of this section.
	
	\begin{theorem}
		\label{thm:comp-independent}
		If $x\in\A^\infty$ is computable and normal, then there exists a computable $y\in\A^\infty$ such that:
		\begin{enumerate}[label=(\alph*)]
			\item $y$ is normal;
			\item for every shuffler $S$, the output $S(x,y)$ is normal.
		\end{enumerate}
		In particular, $x$ and $y$ are finite-state independent by the shuffler characterization theorem.
	\end{theorem}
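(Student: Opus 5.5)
The plan is to apply the extraction Lemma~\ref{lem:extract-comp} to the clopen sets $G_t:=H_t(N_t)(x)$, where $(N_t)$ is the computable sequence from Lemma~\ref{lem:choose-comp}. Then I will verify that the resulting computable point $y$ makes every $S_i(x,y)$ normal, using the quadratic windows $[N_t,N_t^2]$ together with $N_{t+1}\le N_t^2$.

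\textbf{Step 1 (hypotheses of the extraction lemma).} Each $G_t$ is clopen, and $1-\mu(G_t)\le 2^{-2t}$ by Lemma~\ref{lem:choose-comp}. For computability, I will note that $\bigcap_{t\le L}G_t$ is a finite intersection of slices $E_{S_i}(n;\ell,2^{-t})(x)$ with $n\le N_L^2$. Membership is therefore determined by $y[1..N_L^2]$, and $\mu\bigl((\bigcap_{t\le L}G_t)\cap[v]\bigr)$ is an exact rational. To compute it, I enumerate the extensions $u\in\A^{\max(|v|,N_L^2)}$ of $v$ (or truncate $v$ if it is longer), simulate each $S_i$ on $(x[1..N_L^2],u)$, and count the passing $u$. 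This is uniform in $(L,v)$, since $x$ is computable, $(N_t)$ is computable, and the enumeration $(S_i)$ is computable, exactly as in the proof of Lemma~\ref{lem:clopen-comp}. Lemma~\ref{lem:extract-comp} then yields a computable $y\in\bigcap_t G_t$.

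\textbf{Step 2 (normality of $S(x,y)$).} Fix a shuffler $S=S_i$, a block length $m$ and a tolerance $\delta>0$. Choose $t_0\ge\max(i,m)$ with $2^{-t_0}\le\delta$. For any $n\ge N_{t_0}$, I pick the largest $t\ge t_0$ with $N_t\le n$. Then $n<N_{t+1}\le N_t^2$, so $n\in[N_t,N_t^2]$. Since $y\in G_t$ and $i,m\le t_0\le t$, the test $\mathrm{Test}(S_i(x,y)[1..n];m,2^{-t})$ holds, so every $w\in\A^m$ has overlapping frequency within $2^{-t}\le\delta$ of $k^{-m}$. This case needs $N_t\to\infty$, which holds because the sequence is strictly increasing, so every large $n$ is covered. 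As $m$ and $\delta$ are arbitrary, $S_i(x,y)$ is normal, and since the enumeration contains every shuffler this proves (b). For (a), I apply (b) to the shuffler that always reads tape~2, whose output is $y$ itself. Finite-state independence then follows from Theorem~\ref{thm:abc}, because $x$ and $y$ are both normal.

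\textbf{Main obstacle.} I expect the only delicate point to be the covering argument in Step 2: the windows $[N_t,N_t^2]$ must tile a tail of $\N$, which is exactly what the constraint $N_{t+1}\le N_t^2$ from Lemma~\ref{lem:choose-comp} provides. A secondary point is the exact-rational computability of the measures of finite intersections. This follows directly because all the sets are determined by a bounded prefix of $y$ and $x$ is computable.
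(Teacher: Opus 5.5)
Your argument follows the paper's proof step for step. You set $G_t:=H_t(N_t)(x)$, use Lemma~\ref{lem:extract-comp} to get a computable $y\in\bigcap_t G_t$, cover $[N_{t_0},\infty)$ by the windows $[N_t,N_t^2]$, and use the tape-2 shuffler for (a). Your explicit check that $\mu\bigl((\bigcap_{t\le L}G_t)\cap[v]\bigr)$ is an exact rational, uniformly in $(L,v)$, is correct. Granting Lemma~\ref{lem:choose-comp}, your deduction is valid.

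The problem is that both your proof and the paper's rest on Lemma~\ref{lem:choose-comp}, and that lemma fails for some computable normal $x$. So the step you single out as delicate is exactly where the proof breaks.

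\textbf{Why the lemma fails.} The cap $N_{t+1}\le N_t^2$ forces $\log_2\log_2 N_t\le t-1+\log_2\log_2 N_1$, while the tolerance $2^{-t}$ is tied to the same index $t$.
\begin{enumerate}[label=(\roman*)]
\item Let $S_{i_0}$ be the shuffler that always reads tape~1. For $t\ge i_0$ its slices do not depend on $y$. Hence $H_t(N)(x)=\emptyset$ unless $x$ itself satisfies $\mathrm{Test}(x[1..n];1,2^{-t})$ for every $n\in[N,N^2]$.
\item Every window $[N,N^2]$ contains some $n_j:=2^{2^j}$: if $n_{j-1}<N\le n_j$, then $N^2>n_{j-1}^2=n_j$.
\item Start from any computable normal word. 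For each large $j$, overwrite with $0$'s the last $b_j$ positions up to $n_j$, where $b_j=o(n_j)$ is chosen computably so that the frequency of $0$ in $x[1..n_j]$ is at least $1/2+1/j$. This changes only a density-zero set of positions, so $x$ remains computable and normal.
\item For large $t$, nonemptiness of $H_t(N_t)(x)$ requires the least $n_j\ge N_t$ (which lies in the window) to pass the test. That needs $1/j\le 2^{-t}$, so $j\ge 2^t$. Hence $N_t>n_{j-1}$ and $\log_2\log_2 N_t>2^t-1$, contradicting the cap.
\end{enumerate}
The paper's existence argument for Lemma~\ref{lem:choose-comp} takes $M$ ``sufficiently large'' although $M$ is confined to $(N_t,N_t^2]$. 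Its one-step lookahead secures only the next measure threshold, not the lookahead condition needed at the stage after. Any other fixed computable growth in place of $N\mapsto N^2$ meets the same obstruction.

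\textbf{What a repair needs.} The tolerance must be decoupled from the stage index. One natural route keeps the windows $[N_t,N_t^2]$ with $N_{t+1}:=N_t^2$, but tests window $t$ at an adaptively chosen level $s(t)$. Take $s(t)$ to be the largest $s\le t$ such that the exactly computable measure of the set of $y$ failing some test with $i,\ell\le s$ and tolerance $2^{-s}$ on that window is at most $2^{-2t}$.
\begin{itemize}
\item Failure mass is monotone in $s$, and the resulting clopen $G_t$ still satisfy the hypotheses of Lemma~\ref{lem:extract-comp}.
\item Normality of every $S_i(x,y)$ then follows once $s(t)\to\infty$.
\item Proving $s(t)\to\infty$ needs a quantitative tail bound for $S(x,Y)$ at fixed tolerance: the failure mass beyond $N$ must become small at a definite rate once $N$ passes an $x$-dependent threshold. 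The purely almost-sure Theorem~\ref{thm:oneS-general-comp} does not provide this by itself.
\end{itemize}
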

	
	\begin{proof}
		Assume $x$ is computable and normal, and let $(N_t)_{t\ge 1}$ be the computable increasing sequence given by
		Lemma~\ref{lem:choose-comp}. Define $G_t:=H_t(N_t)(x)$. Then each $G_t$ is clopen with $\mu(G_t)\ge 1-2^{-2t}$, so
		\[
		1-\mu(G_t)\le 2^{-2t}
		\qquad\text{and}\qquad
		\sum_{t\ge 1}2^{-2t}=1/3<1.
		\]
		Apply Lemma~\ref{lem:extract-comp} to obtain a computable $y\in\bigcap_{t\ge 1}G_t$.
		
		Fix $i\ge 1$ and a block length $\ell\ge 1$, and let $z:=S_i(x,y)$.
		Let $\delta>0$ be arbitrary and choose $t\ge \max\{i,\ell\}$ with $2^{-t}\le \delta$.
		
		We claim that for every $n\ge N_t$,
		\[
		\mathrm{Test}\bigl(z[1..n];\ell,\delta\bigr)\ \text{holds}.
		\]
		Indeed, let $s\ge t$ be the maximal index such that $N_s\le n$ (well-defined since $N_s\to\infty$).
		Then $n< N_{s+1}\le N_s^2$ by Lemma~\ref{lem:choose-comp}, hence $n\in[N_s,N_s^2]$.
		Since $y\in G_s=H_s(N_s)(x)$, we have
		\[
		\mathrm{Test}\bigl(S_i(x,y)[1..n];\ell,2^{-s}\bigr)
		\]
		and since $2^{-s}\le 2^{-t}\le \delta$, this implies $\mathrm{Test}(z[1..n];\ell,\delta)$ as claimed.
		
		Because $\delta>0$ was arbitrary, it follows that for each fixed $\ell$, the $\ell$-block frequencies of $z$
		converge to $k^{-\ell}$, hence $z$ is normal. Since $i$ was arbitrary, this proves (b).
		
		For (a), let $S^{\mathrm{allY}}$ be the shuffler that always reads tape $Y$. Then $S^{\mathrm{allY}}(x,y)=y$,
		so applying (b) to $S^{\mathrm{allY}}$ shows that $y$ is normal. \qedhere
	\end{proof}

	\bibliography{main}
	\bibliographystyle{plain}

\end{document}